\def\squareforqed{\hbox{${\Box}$}}
\def\qed{\ifmmode\squareforqed\else{\unskip\nobreak\hfil
\penalty50\hskip1em\null\nobreak\hfil\squareforqed
\parfillskip=0pt\finalhyphendemerits=0\endgraf}\fi
\vskip 6pt}
\newcounter{linenumber}
        {\end{list}}
\newcounter{tmp}
\newcommand\blankpagestyle{empty}
\def\cleardoublepage{\clearpage
  \if@twoside \ifodd\c@page
  \else
    \hbox{}\thispagestyle{\blankpagestyle}\newpage
    \if@twocolumn\hbox{}\newpage\fi
  \fi\fi
}
\newtheorem{definition}{Definition}[section]
\newtheorem{lemma}{Lemma}[section]
\newtheorem{theorem}[lemma]{Theorem}
\newcommand{\eat}[1]{}
\date{}
\title{Privacy Preserving Moving KNN Queries}
\author{%
{Tanzima Hashem{\small $^{1}$} \hspace{0.3cm} Lars Kulik{\small $^{1}$} \hspace{0.3cm} Rui Zhang{\small $^{2}$}}%
\vspace{1.6mm}\\
\fontsize{10}{10}
$~^{1}$National ICT Australia\\
$~^{1,2}$Department of Computer Science and Software Engineering\\
University of Melbourne, Australia\\
\fontsize{9}{9}\selectfont\ttfamily\upshape
\{thashem,lars,rui\}@csse.unimelb.edu.au\\}
\begin{document}
\maketitle

\begin{abstract}
We present a novel approach that protects trajectory privacy of
users who access location-based services through a moving $k$
nearest neighbor (M$k$NN) query. An M$k$NN query continuously
returns the $k$ nearest data objects for a moving user (query
point). Simply updating a user's imprecise location such as a
region instead of the exact position to a location-based service
provider (LSP) cannot ensure privacy of the user for an M$k$NN
query: continuous disclosure of regions enables the LSP to follow
a user's trajectory. We identify the problem of trajectory privacy
that arises from the overlap of consecutive regions while
requesting an M$k$NN query and provide the first solution to this
problem. Our approach allows a user to specify the confidence
level that represents a bound of how much more the user may need to
travel than the actual $k^{th}$ nearest data object. By hiding a
user's required confidence level and the required number of
nearest data objects from an LSP, we develop a technique to
prevent the LSP from tracking the user's trajectory for M$k$NN
queries. We propose an efficient algorithm for the LSP to find $k$
nearest data objects for a region with a user's specified
confidence level, which is an essential component to evaluate an
M$k$NN query in a privacy preserving manner; this algorithm is at
least two times faster than the state-of-the-art algorithm.
Extensive experimental studies validate the effectiveness of our
trajectory privacy protection technique and the efficiency of our
algorithm.

\end{abstract}

\section{Introduction}
\label{sec:intro}

Location-based services (LBSs) are developing at an unprecedented
pace: having started as web-based queries that did not take a user's
actual location into account (e.g., Google maps), LBSs can
nowadays be accessed anywhere via a mobile device using the
device's location (e.g., displaying nearby restaurants on a cell
phone relative to its current location). While LBSs provide many
conveniences, they also threaten our privacy. Since
a location-based service provider (LSP) knows the locations of its
users, a user's continuous access of LBSs enables the LSP to
produce a complete profile of the user's trajectory with a high
degree of spatial and temporal precision. From this profile, the
LSP may infer private information about users. A threat to privacy
is becoming more urgent as positioning devices become more
precise, and a lack of addressing privacy issues may significantly
impair the proliferation of LBSs~\cite{web1,ITRoadGF.03}.

An important class of LBSs are moving $k$ nearest neighbor
(M$k$NN) queries. An M$k$NN query continuously returns the $k$
nearest data objects with regard to a moving query point. For
example, a driver may continuously ask for the closest gas station
during a trip and select the most preferred one;
similarly, a tourist may continuously query the five nearest
restaurants while exploring a city. However, accessing an M$k$NN
query requires continuous updates of user locations to the LSP,
which puts the user's privacy at risk. The user's trajectory (i.e., the sequence of updated locations) is
sensitive data and reveals private information. For example if
the user's trajectory intersects the region of a liver clinic, then the LSP might infer
that the user is suffering from a liver disease.

A popular approach to hide a user's location from the LSP is to
let the user send an imprecise location (typically a rectangular
region containing the user's location) instead of the exact
location~\cite{cheng06.PET,Damiani09.SPRINGL,ghinita09.GIS,Xue09.LOCA}.
This approach is effective when the user's location is fixed.
However, when the user moves and continuously sends the
rectangular regions containing her locations to the LSP, the LSP
can still approximate the user's trajectory if it takes into
account the overlap of consecutive rectangles, which poses a
threat to the \emph{trajectory privacy} of the user. This
privacy threat on the user's trajectory privacy is called the
\emph{overlapping rectangle attack}. Our aim is to protect a
user's trajectory privacy while providing M$k$NN answers. We call
the problem of answering M$k$NN queries with privacy protection,
the \emph{private moving $k$NN (PM$k$NN) query}. Although
different
approaches~\cite{cheng06.PET,chow07.SSTD,ghinita09.GIS,Gkoulalas-Divanis09.SDM,Xu09.TPDS,Xu07.ACMGIS}
have been developed for protecting a user's trajectory privacy in
continuous queries, none of them have considered the threat on a
user's trajectory privacy that arises from the overlapping
rectangle attack in M$k$NN queries. This paper is the first work
that addresses PM$k$NN queries.

In our approach, users have an option to specify the level of
accuracy for the query answers, which is motivated by the
following observation. In many cases, users would accept answers
with a slightly lower accuracy if they gain higher privacy
protection in return. For example, a driver looking for the
closest gas station might not mind driving to a gas station that
may be 5\% further than the actual closest one, if the slightly
longer trip considerably enhances the driver's privacy. In this
context, ``lower accuracy" of the answers means that the returned
data objects are not necessarily the $k$ nearest data objects:
they might be a subset of the $(k+x)$ nearest data objects, where
$x$ is a small integer. However, we guarantee that their distances
to the query point are within a certain \emph{ratio} of the actual
$k^{th}$ nearest neighbor's distance. We define a parameter called
\emph{confidence level} to characterize this ratio. In addition to
protecting privacy, we will show that a lower confidence level
also reduces the query processing overhead.

For every update of a user's imprecise location (a rectangle) in a PM$k$NN query, the LSP provides the user with a
candidate answer set that includes the specified number of nearest
data objects (i.e., $k$ nearest data objects) with the specified
confidence level for every possible point in the rectangle. The
key idea of our privacy protection strategy is to specify higher
values for the confidence level and the number of nearest data
objects than required by the user and not to reveal the required
confidence level and the required number of nearest data objects
to the LSP. Since the user's required confidence level and the
required number of nearest data objects are lower than the
specified ones, the candidate answer set must contain the required
query answers for an additional part of the user's trajectory,
which is unknown to the LSP. Based on this idea, we develop an
algorithm to compute the user's consecutive rectangles, that
resists the overlapping rectangle attack and prevents the
disclosure of the user's trajectory. Although our approach for
privacy works if either the required confidence level or the
required number of nearest data objects is hidden, hiding both
provides a user with a higher level of privacy.

In summary, we make the following contributions in this paper.
\begin{itemize}
    \item We identify the problem of trajectory privacy
that arises from the overlap of consecutive regions while
requesting an M$k$NN query. We propose the first approach to address PM$k$NN queries.
Specifically, a user (a client) sends requests for an M$k$NN query
based on consecutive rectangles, and the LSP (the server) returns
$k$ nearest neighbors (NNs) for any possible point in the rectangle. We show how to
compute the consecutive rectangles and how to find the $k$ NNs for
these rectangles so that the user's trajectory remains private.
\item We propose three ways to combat
the privacy threat in M$k$NN queries: by requesting (i) a higher
confidence level than required, (ii)  a higher number of NNs than
required, or (iii) higher values for both the confidence level and
the number of NNs than required to the LSP.
    \item We improve the efficiency of the algorithm for the LSP to find $k$ NNs for a rectangle with a user-customizable confidence level by exploiting different geometric properties.
    \item We present an extensive experimental study to
    demonstrate the efficiency and effectiveness of our
    approach. Our proposed algorithm for the LSP is at least two
times faster than the state-of-the-art.
\end{itemize}


The remainder of the paper is organized as follows.
Section~\ref{sec:pblm_stup} discusses the problem setup and
Section~\ref{sec:relwork} reviews existing work. In
Section~\ref{sec:sys_overview}, we give a overview of our system
and in Section~\ref{sec:conflevel}, we introduce the concept of
confidence level. Sections~\ref{sec:trj_priv} and~\ref{sec:knnq}
present our algorithms to request and evaluate a PM$k$NN query,
respectively. Section~\ref{sec:exp} reports our experimental
results and Section~\ref{sec:conc} concludes the paper with future
research directions.

\section{Problem Formulation}\label{sec:pblm_stup}

A moving $k$NN (M$k$NN) query is defined as follows.
\begin{definition} \label{def:2_1} \textbf{(\textit{M$k$NN query})} Let $D$ denote a set of data objects in a two dimensional database,
    $q$ the moving query point, and $k$ a positive integer.
    An M$k$NN query returns for every position of $q$, a set $A$ that consists of $k$ data
    objects whose distances from $q$ are less or equal to those of the
    data objects in $D-A$.
\end{definition}

A private \emph{static} $k$NN query protects a user's privacy while
processing a $k$NN query. Traditionally for private static $k$NN
queries, the user requests $k$ NNs\footnote{In this paper, we use
NN and nearest data object interchangeably.} to the LSP with a
rectangle that includes the current position of the
user~\cite{Damiani09.SPRINGL,gedik08.TMC,mohamed06.VLDB,Xue09.LOCA}.
Since the LSP does not know the actual location of the user in the
rectangle, it returns the $k$ nearest data objects with respect to
every point of the rectangle.

There is no universally accepted view on what privacy protection
implies for a user. On the one hand, it could mean hiding the
user's identity but revealing the user's precise location while
accessing an LBS, which prevents an LSP from knowing what type of
services have been accessed by whom. On the other hand, it could
mean protecting privacy of the user's location while disclosing
the user's identity to the LSP.

For the first scenario, a user reveals her location to the
LSP and requests an LBS via a third party (e.g., pseudonym service
provider) to hide her identity from the LSP. However, accessing an LBS anonymously does not always protect the user's privacy since the LSP
could infer the user's identity from the revealed location. For
example, if a user requests a service from her home, office or any
other place that is known to the LSP then the user can be
identified. To address this issue, $K$-anonymity
techniques~\cite{gedik05.ICDCS,gruteser03.MOBISYS} have been
developed. In $K$-anonymity techniques, the user's rectangle
includes $K-1$ other user locations in addition to the user's
location and thus make the user's identity indistinguishable from
$K-1$ other users even if the actual user locations are known to
the LSP.

In this paper, we consider the second scenario where the user's
location is unknown to the LSP since the user considers her
location as private and sensitive information. We address how to
protect privacy of the user's trajectory when the user's identity
is revealed, and do not use $K$-anonymity for the following
reasons:
\begin{enumerate}
    \item $K$-anonymity techniques hide the user's identity from the
LSP and assume that the user's location could be known to the LSP.
On the other hand, our focus is to protect the user's trajectory
privacy while disclosing the user's identity. Revealing the user's
identity enables the LSP to provide personalized query
answers~\cite{ghinita09.GIS,Xue09.LOCA}; as an example the LSP can
return only those gas stations as M$k$NN answers which provide a
higher discount for the user's credit card.
    \item $K$-anonymity techniques alone cannot protect privacy of the
user's location when the user's identity is revealed. For example
if a user is located at the liver clinic and there are other $K-1$
users at the same clinic, then the user's rectangle also resides
in the liver clinic. However, the rectangle needs to include other
places in addition to the liver clinic for protecting the privacy
of the user's location. The higher the number of different places
the rectangle includes in addition to the liver clinic, the lower
the probability that the user is located at the liver clinic.
Since integrating $K$-anonymity techniques in our approach do not
increase the level of privacy of a user's location, we do not
integrate $K$-anonymity techniques.
\end{enumerate}

In our approach, the user sets her rectangle area according to her
privacy requirement and the user's location cannot be refined to a
subset of that rectangle at the time of issuing the query. For
example, a user can set the size of the rectangle covering a suburb of the California or covering the whole California region if a high level privacy is required.

\begin{figure*}[htbp]
    \centering
        \includegraphics[width=0.7\textwidth]{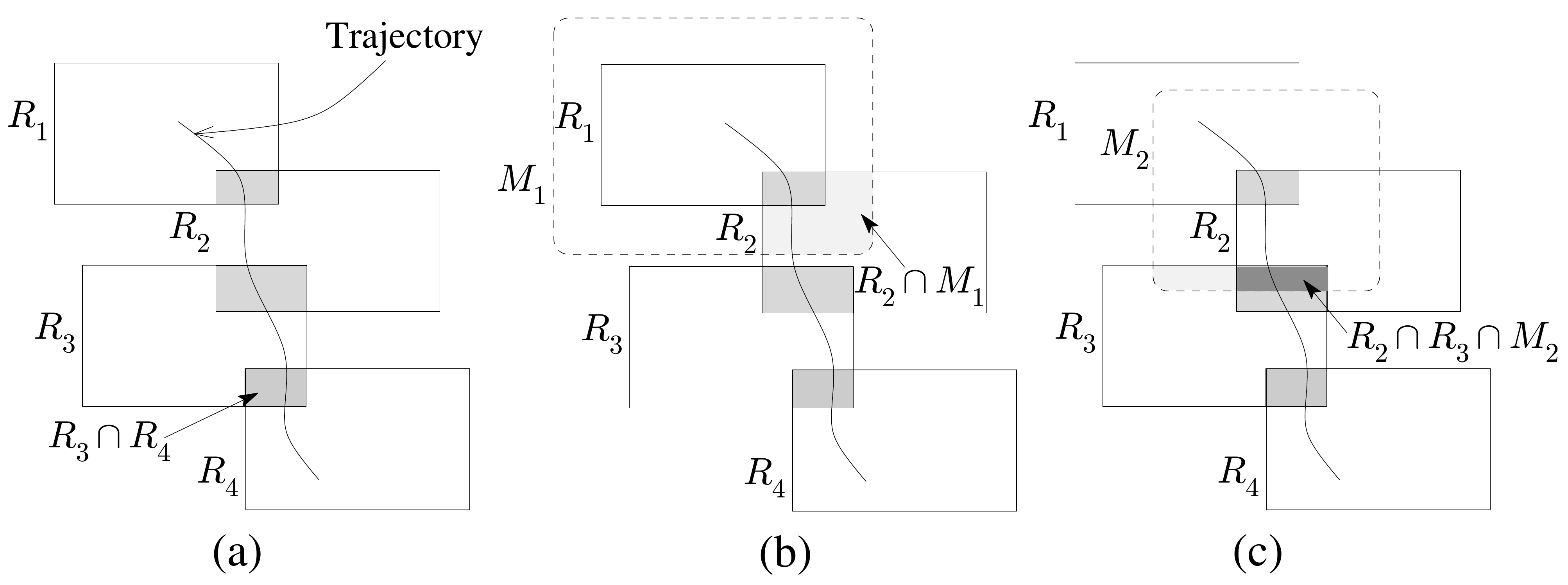}
    \caption{(a) Overlapping rectangle attack, (b) maximum movement bound attack, and (c) combined attack}
    \label{fig:traj_attack}
\end{figure*}

For a private moving $k$NN (PM$k$NN) query, a straightforward attempt to address the PM$k$NN query is to apply the private static
$k$NN query iteratively such that the user has the
$k$ nearest data objects for every position of $q$, where the
moving user's locations are updated in a periodic manner. However,
the straightforward application of private static $k$NN queries
for processing an M$k$NN query cannot protect the user's
\emph{trajectory privacy}, which is explained in the next section.

\subsection{Threat model for M$k$NN queries}


Applying private static $k$NN queries to a PM$k$NN query requires
that the user (the moving query point) continuously updates her
location as a rectangle to an LSP so that the $k$NN answers are
ensured for every point of her trajectory. The LSP simply returns
the $k$ NNs for every point of her requested rectangle. Thus, the
moving user already has the $k$ NNs for every position in the
current rectangle. Since an M$k$NN query provides answers for
every point of the user's trajectory, the next request for a new
rectangle can be issued at any point before the user leaves the
current rectangle. We also know that in a private static $k$NN
query, a rectangle includes the user's current location at the
time of requesting the rectangle to the LSP. Therefore, a
straightforward application of private static $k$NN queries for
processing an M$k$NN query requires the overlap of consecutive
rectangles as shown in Figure~\ref{fig:traj_attack}(a). These
overlaps refine the user's locations within the disclosed
rectangles to the LSP and decrease the privacy of the user's
location. In the worst case, a user can issue the next request for
a new rectangle when the user moves to the boundary of the current
rectangle to ensure the availability of $k$NN answers for every
point of the user's trajectory in real time. Even in this worst
case scenario, the consecutive rectangles needs to overlap at
least at a point, which is the user's current location. We define
the above described privacy threat as the \emph{overlapping
rectangle attack}.

\begin{definition} \label{def:2_2} \textbf{(\textit{Overlapping rectangle attack})} Let $\{R_1, R_2,...,R_n\}$ be a set of $n$ consecutive rectangles
requested by a user to an LSP in an M$k$NN query, where $R_w$ and
$R_{w+1}$ overlap for $1\leq w<n$. Since a user's location lies in
the rectangle at the time it is sent to the LSP and the moving
user requires the $k$ NNs for every position, the user's location
has to be in $R_w \cap R_{w+1}$ at the time of sending $R_{w+1}$,
and the user's trajectory must intersect $R_w \cap R_{w+1}$. As
$(R_w \cap R_{w+1})\subset R_w, R_{w+1}$, the overlapping
rectangle attack enables an LSP to render more precise locations
of a user and gradually reveal the user's trajectory.
\end{definition}

There is another possible attack on a user's trajectory privacy
for M$k$NN queries when the user's maximum velocity is known.
Existing
research~\cite{cheng06.PET,ghinita09.GIS,Hu09.TKDE,Xu09.TPDS} has
shown that if an LSP has rectangles from the same user at
different times and the LSP knows the user's maximum velocity,
then it is possible to refine a user's approximated location from
the overlap of the current rectangle and the maximum movement
bound with respect to the previous rectangle, called \emph{maximum
movement bound attack}. Figure~\ref{fig:traj_attack}(b) shows an
example of this attack in an M$k$NN query that determines more
precise location of a user in the overlap of $R_2$ and the maximum
movement bound $M_1$ with respect to $R_1$ at the time of sending
$R_2$.

For an M$k$NN query, the maximum movement bounding attack is
weaker than the overlapping rectangle attack as $(R_w \cap
R_{w+1}) \subset (M_w \cap R_{w+1})$. However, we observe that the
combination of overlapping rectangle and maximum movement bound
attacks can be stronger than each individual attack as shown in
Figure~\ref{fig:traj_attack}(c). In this example at the time of
issuing $R_3$, the LSP derives $M_2$ from $R_1 \cap R_2$ rather
than from $R_2$ and identifies the user's more precise location as
$R_2 \cap R_3 \cap M_2$, where $(R_2 \cap R_3 \cap M_2)\subset
(R_2 \cap R_3)$ and $(R_2 \cap R_3 \cap M_2)\subset (R_3 \cap
M_2)$.

With the above described attacks, the LSP can progressively find
more precise locations of a user and approximate the user's
trajectory. As a result the LSP could also generate a complete
profile of the user's activities from the identified trajectory.
Hence, protecting the trajectory privacy of users as much as
possible while processing an M$k$NN query is essential.

\subsection{Trajectory privacy for M$k$NN queries}

Trajectory privacy protection with respect to a rectangle is
defined as follows:

\begin{definition} \label{def:2_3} \textbf{(\textit{Trajectory privacy protection with respect to a
rectangle})} The user's trajectory privacy is protected with
respect to a rectangle, if the following conditions hold:

\begin{enumerate}
    \item The user's location at the time of sending a rectangle
    cannot be refined to a subset of that rectangle.
    \item The user's trajectory cannot be refined to a subset of that rectangle.
\end{enumerate}
\end{definition}

The first condition removes the certainty that the location of a
user at the time of issuing a rectangle is within the overlap of
rectangles and the maximum movement bound. The second condition
ensures that a user's trajectory does not have to intersect the
overlap of consecutive rectangles.

A privacy protection technique that satisfies Definition~\ref{def:2_3} can
overcome the overlapping rectangle attack and the maximum movement bound attack
that refine parts of a user's trajectory within the rectangles. However, the LSP
can still refine the user's trajectory within the data space from the available
knowledge of the LSP. Since there is no measure to quantify trajectory privacy,
we measure trajectory privacy as the (smallest) area to which an adversary can
refine the trajectory location relative to the data space. We call it
\emph{trajectory area} and define it in the Section Experiments, as it requires
concepts which are introduced later in the paper.
Note that the larger the trajectory area is, the higher is the user's trajectory
privacy and the higher is the probability that the area is associated with
different sensitive locations and, as a result, the lower is the probability
that the user's trajectory could be linked to a specific location.
We also measure a user's trajectory privacy by the number of
requested rectangles per trajectory for a fixed area, i.e., the
\emph{frequency}, the smaller the number of requested rectangles, the less
spatial constraints are available to the LSP for predicting the trajectory.

\subsubsection{Overview of our approach for PM$k$NN queries}

A na{\"i}ve solution to avoid overlapping rectangles is to request
next rectangle after the user leaves the current rectangle.
However, this solution cannot provide an answer for the
part of the trajectory between two rectangles: this violates the
definition of M$k$NN query, which asks for $k$ NNs for every point of the trajectory. Figure~\ref{fig:naive_approach} shows
an example, where a user requests non overlapping rectangles and
thus the user does not have $k$NN answers for parts of the
trajectory between points $q_1$ and $q_2$, and $q_3$ and $q_4$.

\begin{figure}[htbp]
    \centering
        \includegraphics[height=50mm]{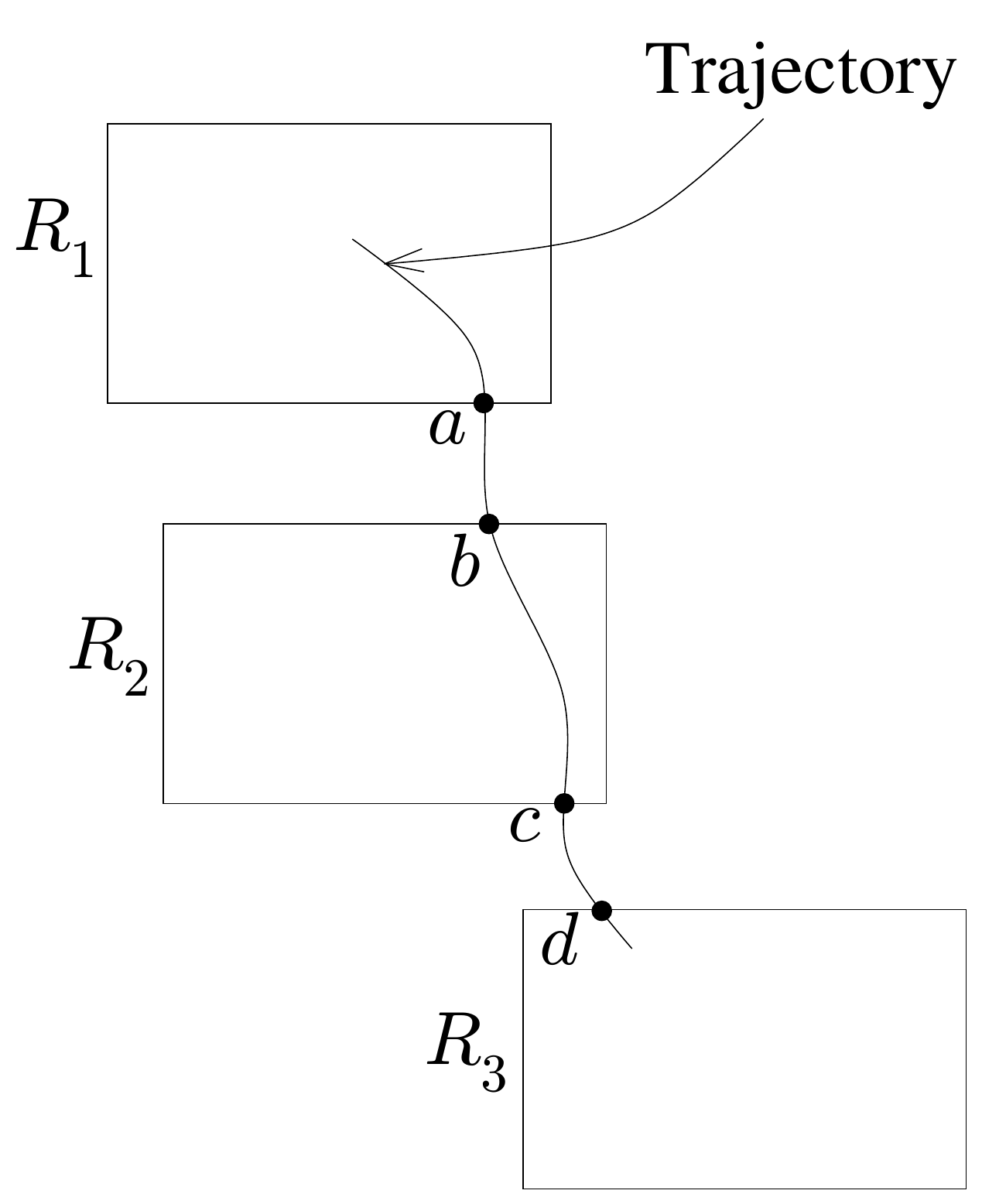}
    \caption{A na{\"i}ve solution: $k$NN answers may not be available to the user for parts of the trajectory between $q_1$ and $q_2$, and $q_3$ and $q_4$}
    \label{fig:naive_approach}
\end{figure}

In this paper, we propose a solution to overcome the overlapping
rectangle attack on the user's trajectory privacy for M$k$NN
queries. We ensure that the proposed solution satisfies the two
required conditions for trajectory privacy protection (see
Definition 3) for every rectangle requested to the LSP and
provides the user $k$NN answers for every point of her trajectory.
In our approach, a user does not always need to send
non-overlapping rectangles to avoid the overlapping rectangle
attack. We show that our approach does not allow the LSP to refine
the user's location or trajectory within the rectangle even if the
user sends overlapping rectangles. The underlying idea is to have
the required answers for an additional part of the user's
trajectory without the LSP's knowledge. As the user has the
required answers for an additional part of her trajectory, the
consecutive rectangles do not have to always overlap. Even if the
rectangles overlap, there is no guarantee that the user is located
in the overlap at the time of sending the rectangle to the LSP and
the user's trajectory passes through the overlap. To achieve the
answers for an additional part of the user's trajectory without
informing the LSP, the user requests a higher confidence level and
a higher number of NNs than required and does not reveal the
required values to the LSP. Our approach also prevents the maximum
movement bound attack based on the existing
solutions~\cite{cheng06.PET,ghinita09.GIS,Hu09.TKDE,Xu09.TPDS} in
the literature if the LSP knows the user's maximum velocity.

\section{Related Work}\label{sec:relwork}

Section~\ref{sec:relwork_privacy} surveys existing research on
protecting trajectory privacy in continuous LBSs and
Section~\ref{sec:relwork_trajectoryprivacy} highlights the
trajectory privacy concern in other applications.

\subsection{Privacy protection in continuous LBSs}
\label{sec:relwork_privacy} Most research on user privacy in LBSs
has focused on static location-based queries that include nearest
neighbor
queries~\cite{gedik08.TMC,hashem07.UBICOMP,hashem.PMC,kalnis07.TKDE,mohamed06.VLDB,yiu08.ICDE},
group nearest neighbor queries~\cite{hashem10.EDBT} and proximity
services~\cite{Mascetti09.MDM}. Different strategies such as
$K$-anonymity, obfuscation, $l$-diversity, and cryptography have
been proposed to protect the privacy of users.

$K$-anonymity techniques
(e.g.,~\cite{gruteser03.MOBISYS,mohamed06.VLDB}) make a user's
identity indistinguishable within a group of $K$ users.
Obfuscation techniques
(e.g.,~\cite{duckham05.PERVASIVE,yiu08.ICDE}) degrade the quality
of a user's location by revealing an imprecise or inaccurate
location and $l$-diversity techniques
(e.g.,~\cite{Damiani09.SPRINGL,Xue09.LOCA}) ensure that the user's
location is indistinguishable from other $l-1$ diverse locations.
Both obfuscation, and $l$-diversity techniques focus on hiding the
user's location from the LSP instead of the identity.
Cryptographic techniques
(e.g.,~\cite{ghinita08.SIGMOD,khoshgozaran07.SSTD}) allow users to
access LBSs without revealing their locations to the LSP, however,
these techniques incur cryptographic overhead and require an
encrypted database. In this paper, we assume that the LSP
evaluates a PM$k$NN query on a non-encrypted database.

$K$-anonymity, obfuscation, or $l$-diversity based approaches for
private static queries cannot protect privacy of users for
continuous LBSs because they consider each request of a continuous
query as an independent event, i.e., the correlation among the
subsequent requests is not taken into account. Recently different
approaches~\cite{bettini05.SDM,chow07.SSTD,Xu07.ACMGIS,Gkoulalas-Divanis09.SDM,cheng06.PET,ghinita09.GIS,Xu09.TPDS,Xu08.INFOCOM}
have been proposed to address this issue.

The authors in $K$-anonymity based
approaches~\cite{bettini05.SDM,chow07.SSTD,Xu07.ACMGIS,Gkoulalas-Divanis09.SDM}
for continuous queries focus on the privacy threat on a user's
identity that arises from the intersection of different sets of
$K$ users involved in the consecutive requests of a continuous
query. Since we focus on how to hide a user's trajectory while
disclosing the user's identity to the LSP, these approaches are
not applicable for our purpose. On the other hand, existing
obfuscation and $l$-diversity based
approaches~\cite{cheng06.PET,ghinita09.GIS,Xu09.TPDS} for
continuous queries have only addressed the threat of the maximum
movement bound attack. However, none of these approaches have
identified the threat on trajectory privacy that arises from the
overlap of consecutive regions (e.g., rectangles). The trajectory
anonymization technique proposed in~\cite{Xu08.INFOCOM} assumes
that a user knows her trajectory in advance for which an LBS is
required, whereas other approaches including ours consider an
unknown future trajectory of the user.

\subsubsection{Existing $k$NN algorithms}\label{sec:relwork_query}

To provide the query answers to the user, the LSP needs an
algorithm to evaluate a $k$NN query for the user's location.
\textit{Depth first search} (DFS)~\cite{roussopoulos95.SIGMOD} and
\textit{best first search} (BFS)~\cite{hjaltason95.SSD} are two
well known algorithms to find the $k$ NNs with respect to a point
using an $R$-tree~\cite{guttman84.SIGMOD}. If the value of $k$ is
unknown, e.g., for an incremental $k$NN queries, the next set of
NNs can be determined with BFS. We use BFS in our proposed
algorithm to evaluate a $k$NN query with respect to a rectangle.
The BFS starts the search from the root of the $R$-tree and stores
the child nodes in a priority queue. The priority queue is ordered
based on the minimum distance between the query point and the
\textit{minimum bounding rectangles} (MBRs) of $R$-tree nodes or
data objects. In the next step, it removes an element from the
queue, where the element is the node representing the MBR with the
minimum distance from the query point. Then the algorithm again
stores the child nodes or data objects of the removed node on the
priority queue. The process continues until $k$ data objects are
removed from the queue.

Researchers have also focused on developing
algorithms~\cite{chow09.TODS,chow09.SSTD,Hu06.TKDE,kalnis07.TKDE,mohamed06.VLDB,Xu09.TPDS}
for evaluating a $k$NN query for a user's imprecise location such
as a rectangle or a circle. In~\cite{chow09.SSTD}, the authors
have proposed an approximation algorithm that ensures that the
answer set contains one of the $k$ NNs for every point of a
rectangle. The limitation of their approximation is that users do
not know how much more they need to travel with respect to the
actual NN, i.e., the accuracy of answers. Our algorithm allows
users to specify the accuracy of answers using a confidence level.

To prevent the overlapping rectangle attack, our proposed approach
requires a $k$NN algorithm that returns a candidate answer set
including all data objects of a region in addition to the $k$ NNs
with respect to every point of a user's imprecise location. The
availability of all data objects for a \emph{known region} to the
user in combination with the concept of hiding the user's required
confidence level and the required number of NNs from the LSP can
prevent the overlapping rectangle attack (see
Section~\ref{sec:trj_priv}). Among all existing $k$NN algorithms
for a user's imprecise
location~\cite{chow09.TODS,chow09.SSTD,Hu06.TKDE,kalnis07.TKDE,mohamed06.VLDB,Xu09.TPDS},
only Casper~\cite{mohamed06.VLDB} supports a known region; the
algorithm returns all data objects of a \emph{rectangular region}
(i.e., the known region) that include the NNs with respect to a
rectangle. However, Casper can only work for NN queries and it is
not straightforward to extend Casper for $k>1$. Thus, even if
Casper is modified to incorporate the confidence level concept, it
can only support PM$k$NN queries for $k=1$.

Moreover, for a single nearest neighbor query, Casper needs to
perform on the database multiple searches, which incur high
computational overhead. Casper executes four individual single
nearest neighbor queries with respect to four corner points of the
rectangle. Then using these neighbors as filters, Casper expands
the rectangle in all directions to compute a range that contains
the NNs with respect to all points of the rectangle. Finally,
Casper has to again execute a range query to retrieve the
candidate answer set. We propose an efficient algorithm that finds
the $k$NNs with a specified confidence level for a rectangle in a
single search.

\subsection{Trajectory privacy in other applications}
\label{sec:relwork_trajectoryprivacy}

Protecting a user's trajectory privacy has also received much
attention in other
domains~\cite{Abul08.ICDE,beresford03.PC,hoh07.CCS,Nergiz09.DataPrivacy,Yarovoy09.EDBT}.
The advancement and widespread use of location aware devices
(e.g., GPS equipped mobile phone or vehicle) have enabled users to
share their trajectories with others. Such trajectory data allows
organizations and researchers to perform useful analyses for many
applications such as urban planning, traffic monitoring, and
mining human behavior. To protect user trajectories, they are
modified before they are released so that both user privacy and
data utility are maintained. Recent research has developed a few
anonymization
approaches~\cite{Abul08.ICDE,Nergiz09.DataPrivacy,Yarovoy09.EDBT}
for publishing privacy preserving trajectory data, where a trusted
server first collects trajectories from users and then publishes
them in public after their anonymization. Prior
studies~\cite{beresford03.PC,hoh07.CCS} also consider scenarios
without a trusted server, which means a user's trajectory is
anonymized before it is shared with anyone. The purpose of these
approaches is to protect trajectory privacy through anonymization
while maintaining the utility of trajectory data for different
analyses. On the other hand, our approach protects trajectory
privacy while answering M$k$NN queries in a personalized manner
(i.e., the user's identity is revealed); therefore our studied
problem is orthogonal to the above problem.

\section{System Overview}
\label{sec:sys_overview}

Our approach for PM$k$NN queries is based on the client-server
model. In our system, a client is a moving user who sends a
PM$k$NN query request and the server is the LSP that processes
the query. The moving user sends her imprecise location as a
rectangle to the LSP, which we call \emph{obfuscation rectangle}
in the remainder of this paper.

We introduce the parameter confidence level, which provides a user
with an option to trade the accuracy of the query answers for trajectory
privacy. Intuitively, the confidence level of the user for a data
object guarantees that the distance of the data object to the user's
location is within a bound of the actual nearest data object's
distance. In Section~\ref{sec:conflevel}, we formally define and
show how a user and an LSP can compute the confidence level for a data
object.

In our system, a user does not reveal the required confidence
level and the required number of NNs to the LSP while requesting a
PM$k$NN query; instead the user \emph{specifies higher values than
the required ones}. This allows the user to have the required
number of NNs with the required confidence level for an additional
part of her trajectory, which is unknown to the LSP, and thus the
LSP cannot apply the overlapping rectangle attack by correlating
the user's current obfuscation rectangle with the previous one. In
Section~\ref{sec:trj_priv}, we present a technique to compute a
user's consecutive obfuscation rectangles for requesting a PM$k$NN
query. Another important advantage of our technique is that for
the computation of the consecutive obfuscation rectangles, the
user does not need to trust any other party such as an
intermediary trusted server~\cite{mohamed06.VLDB}.

An essential component of our approach for a PM$k$NN query is an
algorithm for the LSP that finds the specified number of NNs for
the obfuscation rectangle with the specified confidence level. In
Section~\ref{sec:knnq}, we exploit different properties of the
confidence level with respect to an obfuscation rectangle to
develop an efficient algorithm in a single traversal of the
$R$-tree.

\section{Confidence Level}\label{sec:conflevel}

%


The confidence level represents a measure of the accuracy for a
nearest data object with respect to a user's location. If the
confidence level of a user for the $k$ nearest data objects is 1
then they are the actual $k$ NNs. If the confidence level is less
than 1 then it provides a worst case bound of how much more a user
may need to travel than the actual $k^{th}$ nearest data object.
For example, a nearest data object with 0.5 confidence level means
that the user has to travel twice the distance to the actual NN in
the worst case.

To determine the confidence level of a user for any nearest data
object, we need to know the locations of other data objects
surrounding the user's location. The region where the location of
all data objects are known is called the \emph{known region}. We
first show how an LSP and a user compute the known region, and
then discuss the confidence level.

\begin{figure}[htbp]
    \centering
        \includegraphics[height=40mm]{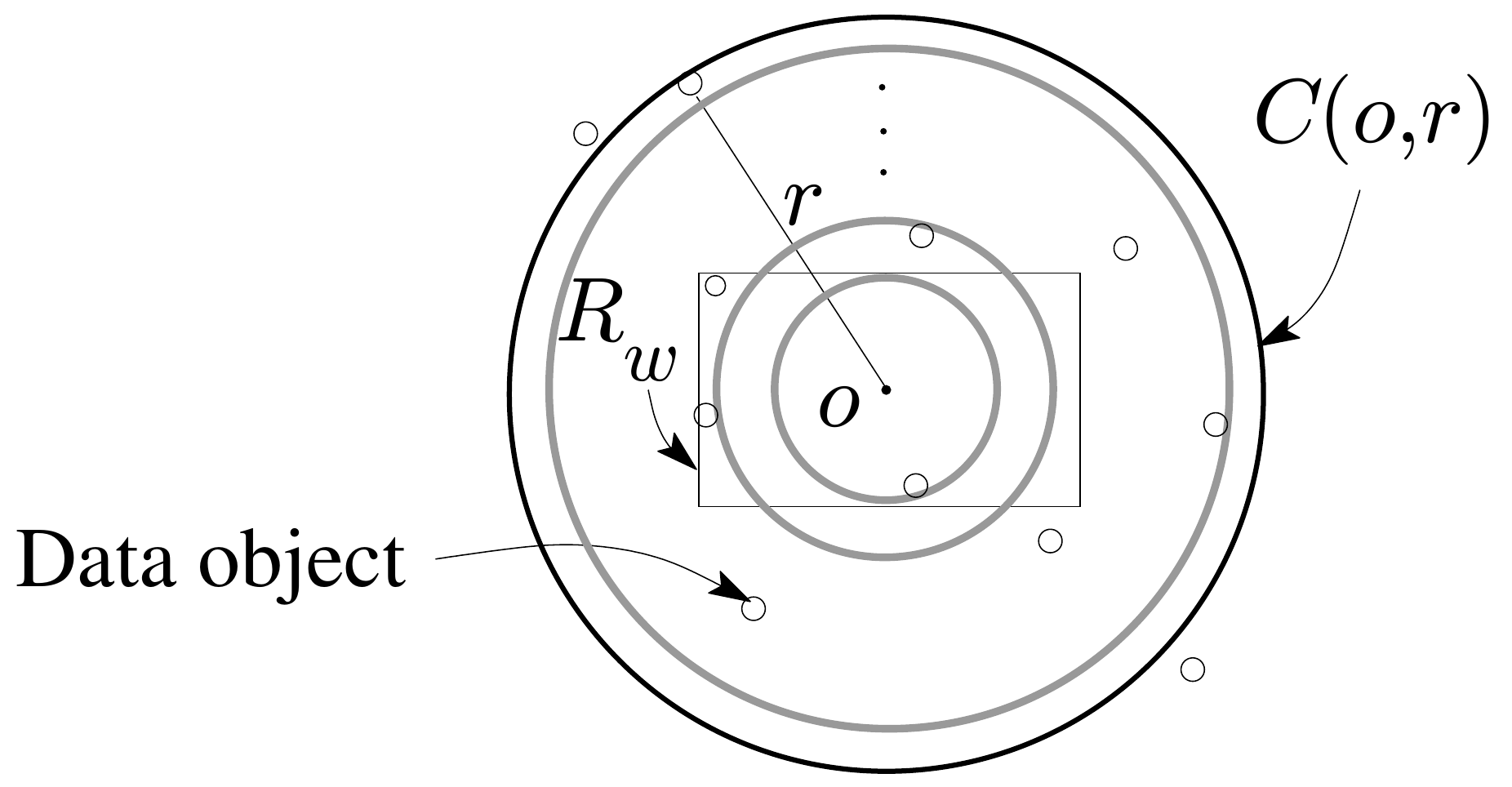}
    \caption{Known Region}
    \label{fig:knownarea}
\end{figure}

\subsection{Computing a known region}
\label{sec:knownregion} Suppose a user provides an obfuscation
rectangle $R_w$ for any positive integer $w$, to the LSP while
requesting a PM$k$NN query. For the ease of explanation, we assume
at the moment that the user specifies confidence level of 1, i.e.,
the answer set returned by the LSP includes the actual $k$NN
answers for the given obfuscation rectangle. Our proposed
algorithm for the LSP to evaluate $k$NN answers, starts a best
first search (BFS) considering the center $o$ of $R_w$ as the
query point and incrementally finds the next NN from $o$ until the
actual $k$ NNs are discovered for all points of $R_w$. The search
region covered by BFS at any stage of its execution is a circular
region $C(o,r)$, where the center $o$ is the center of $R_w$ and
the radius $r$ is the distance between $o$ and the last discovered
data object. Since the locations of all data objects in $C(o,r)$
are already discovered, $C(o,r)$ is the known region for the LSP.
The LSP returns all data objects located within $C(o,r)$ to the
user, although some of them might not be the $k$ NNs with respect
to any point of $R_w$. This enables the user to have $C(o,r)$ as
the known region. This enables the user to have $C(o,r)$ as the
known region, where the center $o$ is the center of $R_w$ and the
radius $r$ is the distance between $o$ and the farthest retrieved
data object from $o$.

\begin{figure}[htbp]
    \centering
        \hspace{-7mm}
        \includegraphics[width=120mm]{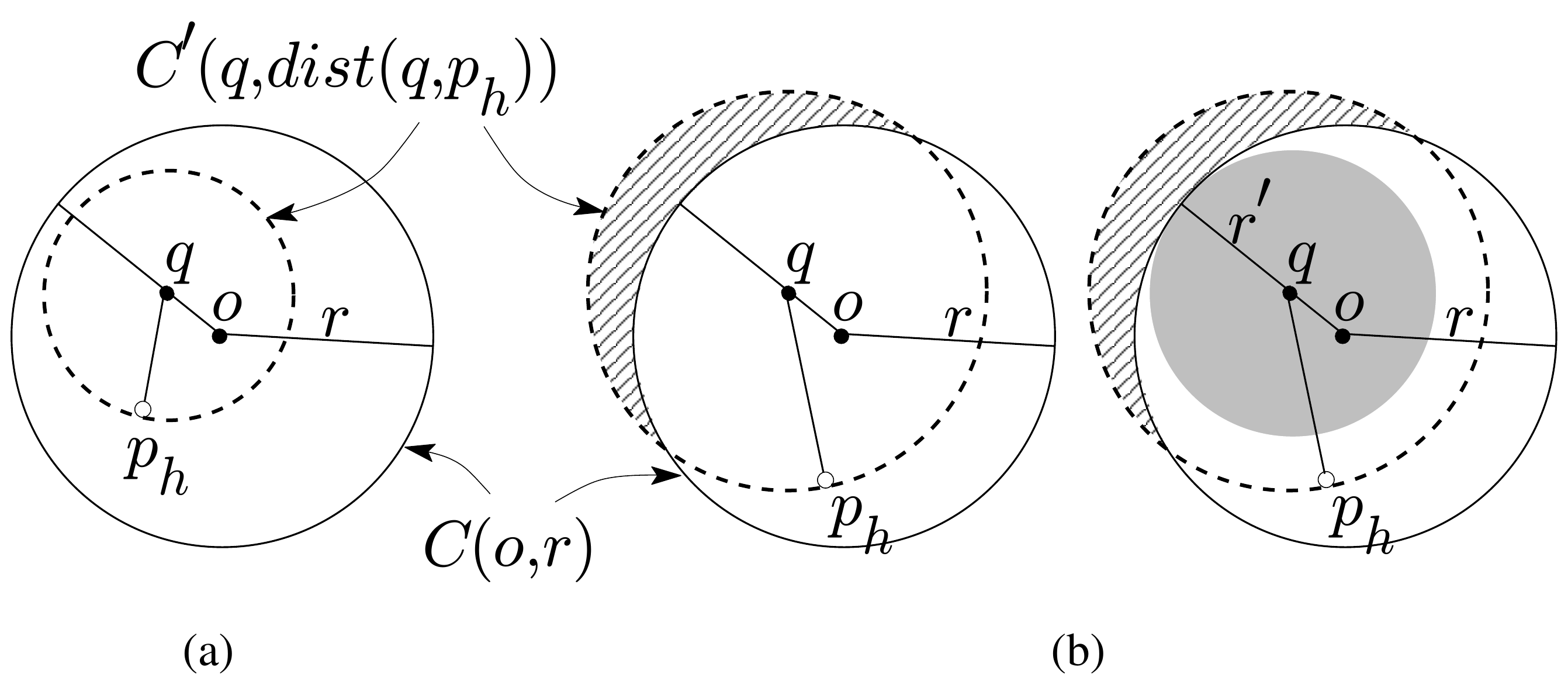}
        \hspace{-5mm}
    \caption{Confidence Level}
    \label{fig:conflevel}
\end{figure}

\subsection{Measuring the confidence level}

%




Since the confidence level can have any value in the range (0,1],
we remove our previous assumption of a fixed confidence level of 1
in Section~\ref{sec:knownregion}. In our approach, the knowledge
about the known region $C(o,r)$ is used to measure the confidence
level. Let $p_h$ be the nearest data object among all data objects
in $C(o,r)$ from a given location $q$, where $h$ is an index to
name the data objects and let $dist(q,p_h)$ represent the
Euclidean distance between $q$ and $p_h$. There are two possible
scenarios based on different positions of $p_h$ and $q$ in
$C(o,r)$. Figure~\ref{fig:conflevel}(a) shows a case where the
circular region $C^\prime(q,dist(q,p_h))$ centered at $q$ with
radius $dist(q,p_h)$ is within $C(o,r)$. Since $p_h$ is the
nearest data object from $q$ within $C(o,r)$, no other data object
can be located within $C^\prime(q,dist(q,p_h))$. This case
provides the user at $q$ with a confidence level 1 for $p_h$.
However, $C^\prime(q,dist(q,p_h))$ might not be always completely
within the known region. Figure~\ref{fig:conflevel}(b)(left) shows
such a case, where a part of $C^\prime(q,dist(q,p_h))$ falls
outside $C(o,r)$ and as the locations of data objects outside
$C(o,r)$ are not known, there might be some data objects located
in the part of $C^\prime(q,dist(q,p_h))$ outside $C(o,r)$ (i.e.,
$C^{\prime\prime}=C^\prime(q,dist(q,p_h))\setminus C(o,r)$) that
have a smaller distance than $p_h$ from $q$. Since $p_h$ is the
nearest data object from $q$ within $C(o,r)$, there is no data
object within distance $r^\prime$ from $q$
(Figure~\ref{fig:conflevel}(b)(right)), where $r^\prime$ is the
radius of the maximum circular region within $C(o,r)$ centered at
$q$. But there might be other data objects within a fixed distance
$d_f$ from $q$, where $r^\prime<d_f\leq dist(q,p_h)$. In this case
the confidence level of the user at $q$ regarding $p_h$ is less
than 1. On the other hand, if $q$ is outside of $C(o,r)$ then the
confidence level of the user at $q$ for $p_h$ is 0 because
$r^\prime$ is 0. We formally define the confidence level of a user
located at $q$ for $p_h$ in the more general case, where $p_h$ is
any of the nearest data object in $C(o,r)$.



\begin{definition}
\label{def:1} \textbf{(\textit{Confidence level})} Let $C(o,r)$ be
the known region, $P$ the set of data objects in $C(o,r)$, $q$ the
point location of a user, $p_h$ the $j^{th}$ nearest data object
in $P$ from $q$ for $1 \leq j \leq |P|$. The distance $r^\prime$
represents the radius of the maximum circular region within
$C(o,r)$ centered at $q$. The confidence level of the user located
at $q$ for $p_h$, $CL(q,p_h)$, can be expressed as:
    \begin{displaymath}{CL(q,p_h) :=
        \left\{ \begin{array}{lll}
            0 & \mbox{if $q \notin C(o,r)$}\\
            1 & \mbox{if $q\in C(o,r)$ $\wedge$ $dist(q,p_h)\leq r^\prime$}\\
            \frac{r^\prime}{dist(q,p_h)} & \mbox{otherwise}.
        \end{array} \right.}
    \end{displaymath}
\end{definition}


Since our focus is on NN queries, we use distance instead of area
as the metric for the confidence level. A distance-based metric
ensures that there is no other data object within a fixed distance
from the position of a user. Thus, the distance-based metric is a
measure of accuracy for a data object to be the nearest one. On
the other hand, an area-based metric is based on the percentage of
the area of $C^\prime(q, dist(q,p_h))$ that intersects with
$C(o,r)$. Thus, an area-based metric only could be used to express
the likelihood of an data object to be the nearest one. However,
an area-based metric cannot measure the accuracy of the data
object to be the nearest one. Furthermore, such a metric would
assume a uniform random distribution of data objects. Consider an
example where $q$ is outside $C(o,r)$ and $p_h$ is the nearest
data object from $q$ in $C(o,r)$. According to the area-based
metric the confidence level of the user for $p_h$ would be greater
than 0, i.e., $(C^\prime(q, dist(q,p_h)) \cap C(o,r)) /
C^\prime(q, dist(q,p_h)))$, although there is nothing known about
the data objects outside the known region. This measure based on
the area-based metric does not represent a bound of how much more
a user may need to travel for $p_h$ than the actual nearest data
object in the worst case.


\section{Client-side Processing}
\label{sec:trj_priv}

We present a technique for computing consecutive obfuscation
rectangles of a user to request a PM$k$NN query, where the LSP
cannot apply the overlapping rectangle attack to invade the user's
trajectory privacy. Suppose a user requests an obfuscation
rectangle $R_w$ and a confidence level $cl$ at any stage of
accessing the PM$k$NN query. The LSP returns $P$, the set of data
objects in the known region $C(o,r)$, that includes the $k$ NNs
with a confidence level at least $cl$ for every point of $R_w$.
The availability of $C(o,r)$ allows a moving user to compute the
confidence level for the $k$ NNs even from outside of $R_w$.

\begin{figure}[htbp]
  \begin{center}
    \begin{tabular}{ccc}
        \hspace{5mm}
      \resizebox{40mm}{!}{\includegraphics{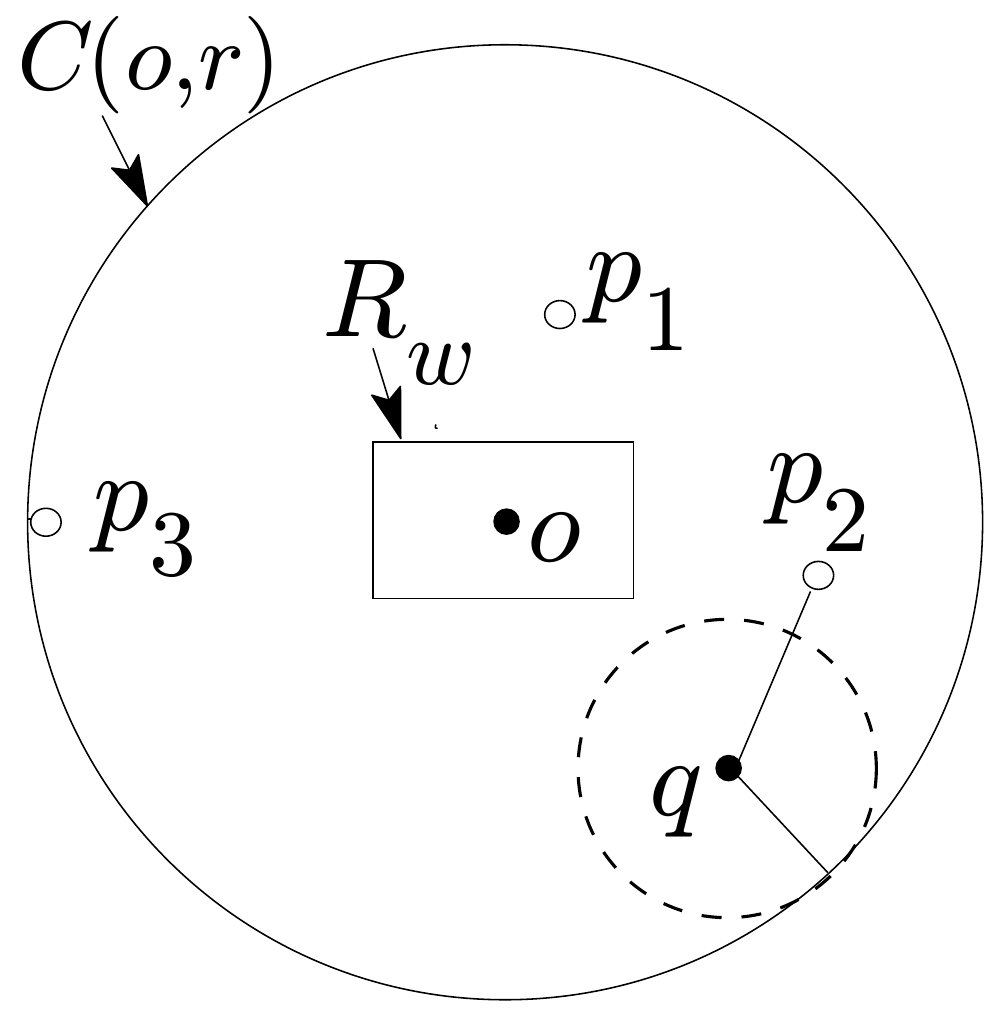}} &
       \hspace{4mm}
      \resizebox{40mm}{!}{\includegraphics{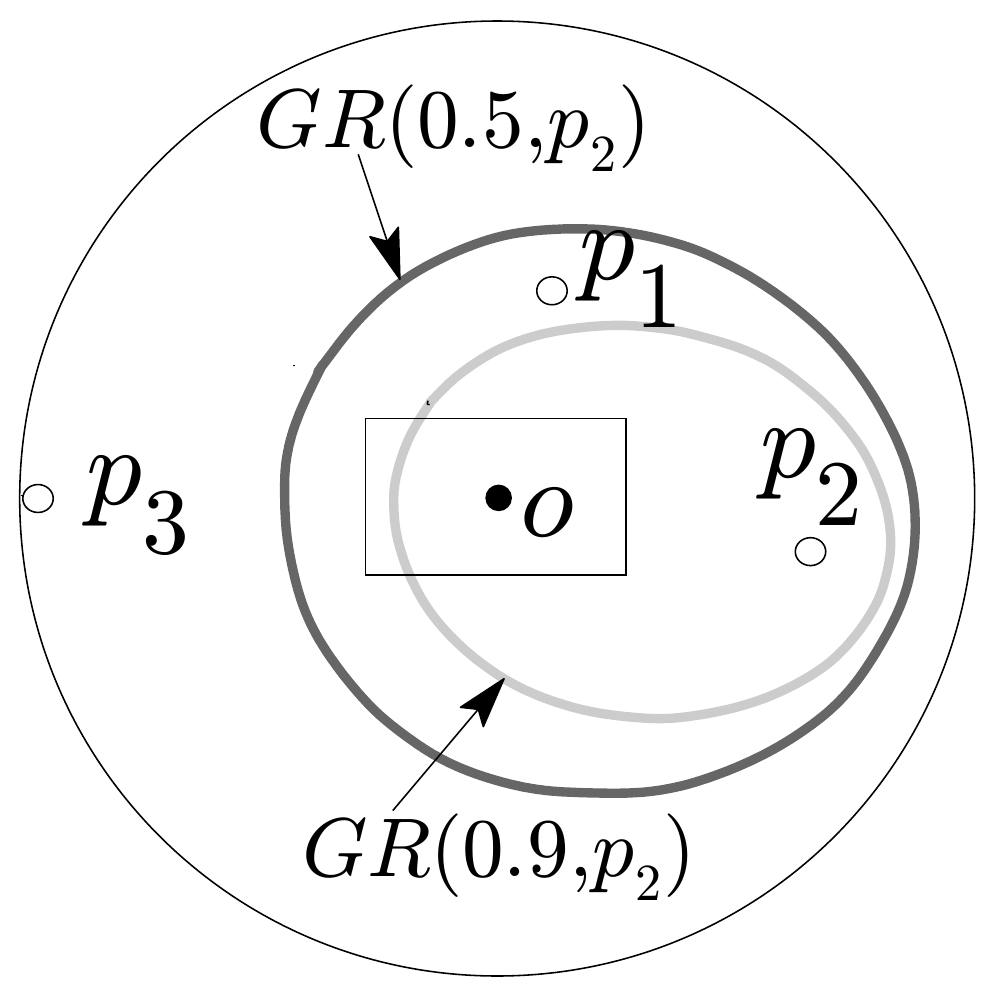}} &
      \hspace{4mm}
      \resizebox{40mm}{!}{\includegraphics{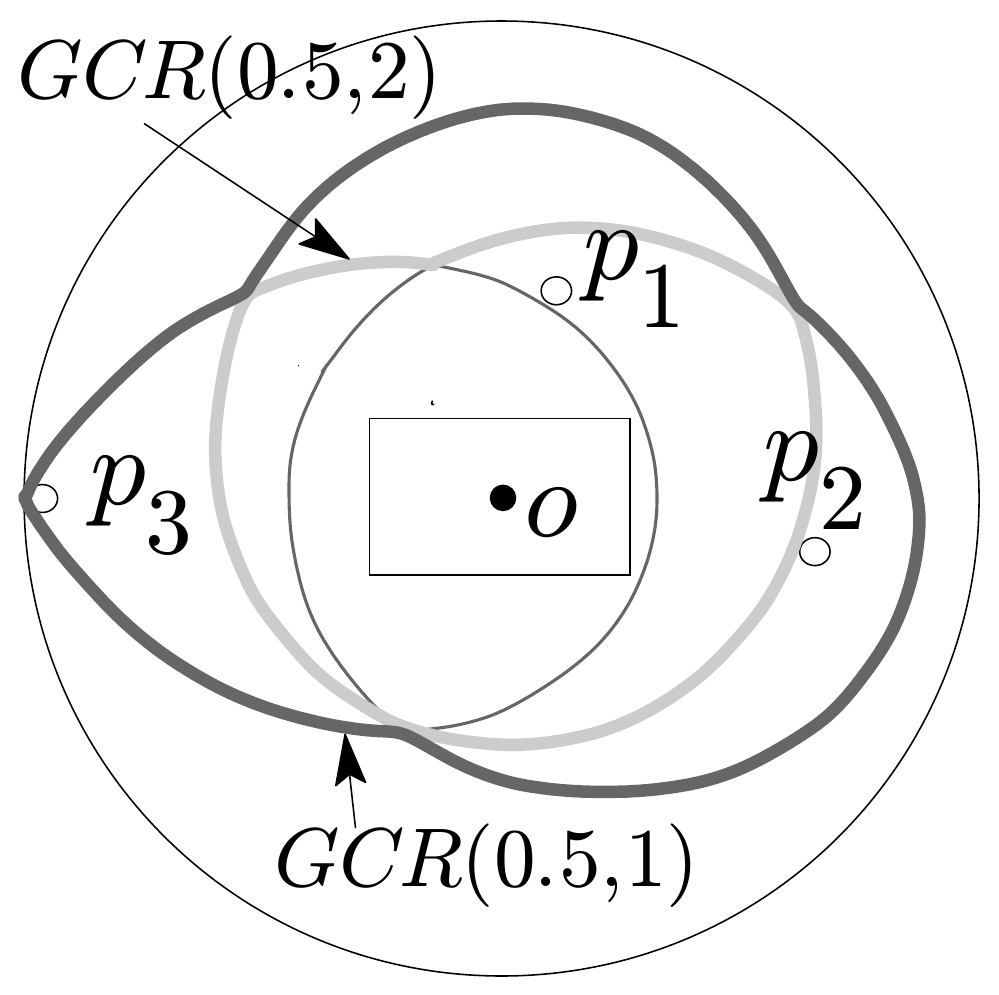}} \\
       \scriptsize{\hspace{5mm}(a)\hspace{4mm}} & \scriptsize{(b)\hspace{4mm}}& \scriptsize{(c)}
        \end{tabular}
  \end{center}
     \caption{(a) $CL(q,p_2)$, (b) $GR(cl,p_2)$ and (c) $GCR(cl,k)$}
    \label{fig:shape}
\end{figure}
Although some data objects in $P$ might not be the $k$ NNs for any
point of $R_w$, they might be $k$ NNs for a point outside $R_w$
with a confidence level at least $cl$. In addition, some data
objects, which are the $k$ NNs for some portions of $R_w$, can be
also the $k$ NNs from locations outside of $R_w$ with a confidence
level at least $cl$. For example for $cl=0.5$ and $k=1$,
Figure~\ref{fig:shape}(a) shows that a point $q$, located outside
$R_w$, has a confidence level\footnote{The confidence level of any
point represents the confidence level of a user located at that
point.} greater than 0.5 for its nearest data object $p_2$. On the
other hand, from a data object's viewpoint,
Figure~\ref{fig:shape}(b) shows two regions surrounding a data
object $p_2$,  where for any point inside these regions a user has
a confidence level at least 0.90, and 0.50, respectively for
$p_2$\footnote{Note that, whenever we mention the confidence level
of a point for a data object then the data object can be any of
the $j^{th}$ NN from that point, where $1 \leq j \leq |P|$.}. We
call such a region \emph{guaranteed region}, denoted as
$GR(cl,p_h)$ with respect to a data object $p_h$ for a specific
confidence level $cl$. We define $GR(cl,p_h)$ as follows.

\begin{definition}
\label{def:6_1} \textbf{(\textit{Guaranteed region})} Let $C(o,r)$
be the known region, $P$ the set of data objects in $C(o,r)$,
$p_h$ a data object in $P$, and $cl$ the confidence level. The
guaranteed region with respect to $p_h$, $GR(cl,p_h)$, is the set
of all points such that $\{CL(q,p_h) \geq cl\}$ for any point $q
\in GR(cl,p_h)$.
\end{definition}

From the guaranteed region of every data object in $P$ we compute
the \emph{guaranteed combined region}, denoted as $GCR(cl,k)$,
where for any point in this region a user has at least $k$ data
objects with a confidence level at least $cl$.
Figure~\ref{fig:shape}(c) shows an example, where $P=\{p_1, p_2,
p_3\}$ and $cl=0.5$. Then for $k=1$, the black bold line shows the
boundary of $GCR(0.5,1)$, which is the union of $GR(0.5,p_1)$,
$GR(0.5,p_2)$ and $GR(0.5,p_3)$. For $k=2$, the ash bold line
shows the boundary of $GCR(0.5,2)$, which is the union of
$GR(0.5,p_1) \cap GR(0.5,p_2)$, $GR(0.5,p_2) \cap GR(0.5,p_3)$ and
$GR(0.5,p_3) \cap GR(0.5,p_1)$. We define $GCR(cl,k)$ as follows.

\begin{definition}
\label{def:6_2} \textbf{(\textit{Guaranteed combined region})} Let
$C(o,r)$ be the known region, $P$ the set of data objects in
$C(o,r)$, $p_h$ a data object in $P$, $cl$ the confidence level,
$k$ the number of data objects, and $GR(cl,p_h)$ the guaranteed
region. The guaranteed combined region, $GCR(cl,k)$, is the union
of the regions where at least $k$ $GR(p_h,cl)$ overlap, i.e., $\cup_{{P^\prime} \subseteq P \wedge |{P^\prime}|=k}\{\cap_{h \in P^\prime} GR(p_h,cl)\}$.
\end{definition}

Since for any point in $GCR(cl,k)$, a user has at least $k$ data
objects with a confidence level at least $cl$, the following lemma
shows that for any point in $GCR(cl,k)$ the user also has the $k$
NNs with a confidence level at least $cl$.

\begin{lemma}
\label{lemma:6_zero} If the confidence level of a user located at
$q$ is at least $cl$ for any $k$ data objects, then the confidence
level of the user is also at least $cl$ for the $k$ NNs from $q$.
\end{lemma}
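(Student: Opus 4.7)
The plan is to exploit monotonicity: for a fixed query point $q$ and a fixed known region $C(o,r)$, the value $r'$ in Definition 1 depends only on $q$ (not on the data object), so the map $p \mapsto CL(q,p)$ is a non-increasing function of the distance $dist(q,p)$. If any $k$ objects meet the confidence threshold $cl$, the $k$ nearest neighbors, having by definition distances no larger than those of any other $k$ points, will meet the same threshold.

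More concretely, I would proceed in three short steps. First, the case $cl=0$ is vacuous, so assume $cl>0$; this immediately forces $q\in C(o,r)$ by the first branch of Definition 1 (otherwise every confidence level would be $0<cl$). Second, let $S=\{p_{i_1},\dots,p_{i_k}\}\subseteq P$ be the given $k$ data objects with $CL(q,p_{i_j})\geq cl$, and set $D:=\max_{j} dist(q,p_{i_j})$. I would show $D\leq r'/cl$ by considering Definition 1 branch-by-branch: if $dist(q,p_{i_j})\leq r'$, then since $cl\leq 1$ we have $dist(q,p_{i_j})\leq r'\leq r'/cl$; otherwise the third branch gives $r'/dist(q,p_{i_j})\geq cl$, which rearranges to $dist(q,p_{i_j})\leq r'/cl$. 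Taking the maximum yields $D\leq r'/cl$.

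Third, let $T=\{p_{j_1},\dots,p_{j_k}\}$ be the $k$ nearest neighbors of $q$ in $P$. Because $S\subseteq P$ and $|S|=k$, the closed disk of radius $D$ around $q$ contains at least $k$ points of $P$; hence the $k$-th smallest distance from $q$ to a point of $P$ is at most $D$, so every $p_{j_\ell}\in T$ satisfies $dist(q,p_{j_\ell})\leq D\leq r'/cl$. Applying Definition 1 in the other direction (either branch, depending on whether $dist(q,p_{j_\ell})\leq r'$ or not) gives $CL(q,p_{j_\ell})\geq cl$ for each NN, which is the claim.

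There is no real obstacle here: the argument is essentially monotonicity of $CL(q,\cdot)$ in distance, plus the defining property of nearest neighbors. The only minor care needed is to unify the piecewise Definition 1, in particular to note that a positive confidence level forces $q\in C(o,r)$ so that $r'>0$ is well-defined, and to interpret ``$k$ NNs from $q$'' consistently as the $k$ nearest members of the known set $P$, which is the only set on which confidence levels are defined.
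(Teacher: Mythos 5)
Your proof is correct and rests on the same idea as the paper's own argument: the confidence level $CL(q,\cdot)$ is non-increasing in the distance from $q$, so the $k$ nearest neighbors inherit the threshold $cl$ from any $k$ objects that satisfy it. The paper packages this as a brief proof by contradiction citing the monotonicity implied by Definition~\ref{def:1}, while you argue directly via the explicit distance bound $r^\prime/cl$ and treat the boundary cases ($cl>0$, $q\in C(o,r)$) more carefully --- a presentational difference, not a different route.
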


\begin{proof}
(By contradiction) Assume to the contrary that for the user at $q$
has a confidence level less than $cl$ for the $i^{th}$ NN among
the data objects, where $1 \leq i \leq k$. We know that the user
at $q$ has $k$ data objects with at least confidence level $cl$.
According to the assumption these $k$ data objects must not be the
user's $k$ NNs; at least one of them, say $p_1$, is at a greater
distance than the $k^{th}$ NN from $q$. But according to
Definition~\ref{def:1}, we know that the confidence level of the
user for the $j^{th}$ NN is greater than the $(j+1)^{th}$ NN for
$1 \leq j \leq |P|-1$. This implies that since $CL(q,p_1)\geq cl$
and $p_1$ is located farther than the $k$ NNs from $q$, the user
has a confidence level at least $cl$ for the $k$ NNs, which
contradicts our assumption. \hfill\hfill
\end{proof}

In our technique, the moving user can use the retrieved data objects from the outside
of $R_w$ and delay the next request with a new obfuscation
rectangle $R_{w+1}$ until the user leaves $GCR(cl,k)$. Although
delaying the next request with $R_{w+1}$ in this way may allow a
user to avoid an overlap of $R_w$ and $R_{w+1}$, the threat to
trajectory privacy is still in place. Since the LSP can also
compute $GCR(cl,k)$, similar to the overlapping rectangle attack,
the user's location can be computed more precisely by the LSP from
the overlap of the new obfuscation rectangle $R_{w+1}$ and current
$GCR(cl,k)$ (see Figure~\ref{fig:csaferegion}(a) for $GCR(0.5,k)
\cap R_{w+1}$).
\begin{figure*}[htbp]
  \begin{center}
    \begin{tabular}{cccc}
        \hspace{-5mm}
      \resizebox{40mm}{!}{\includegraphics{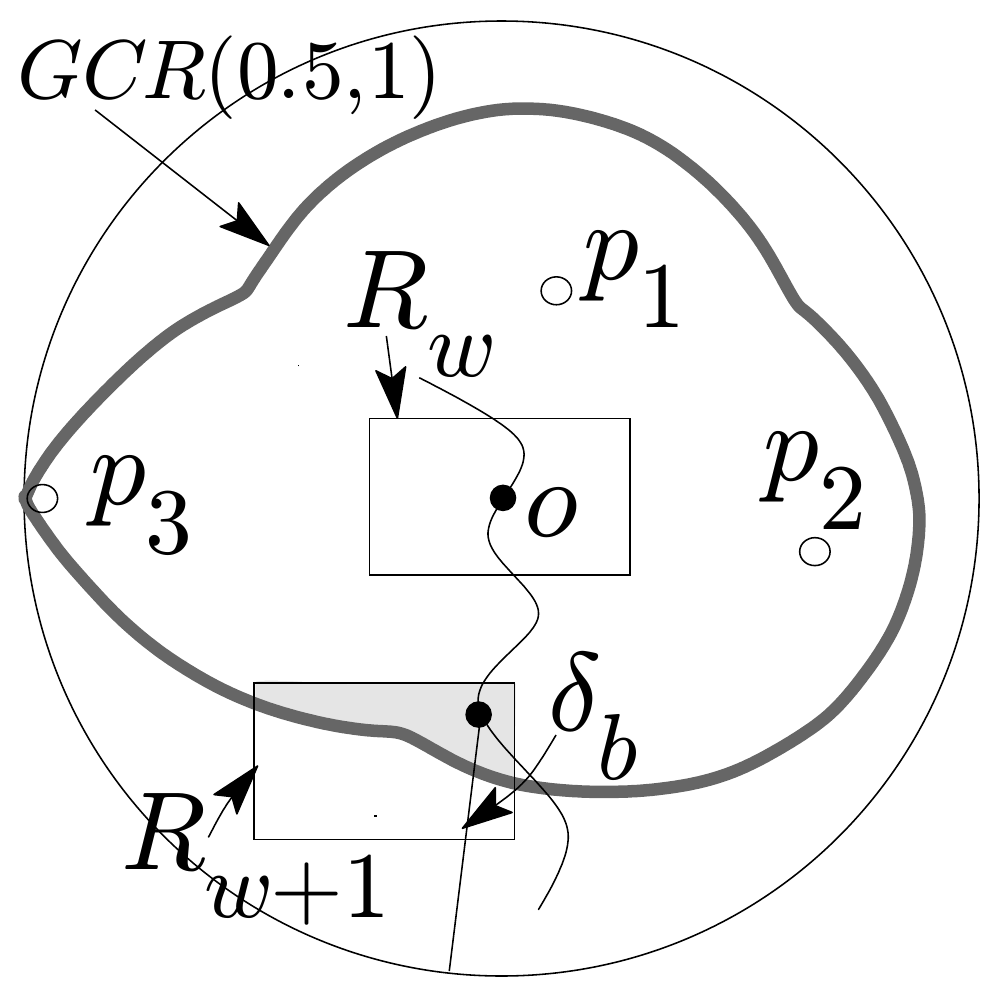}} &
       \hspace{-4mm}
      \resizebox{46mm}{!}{\includegraphics{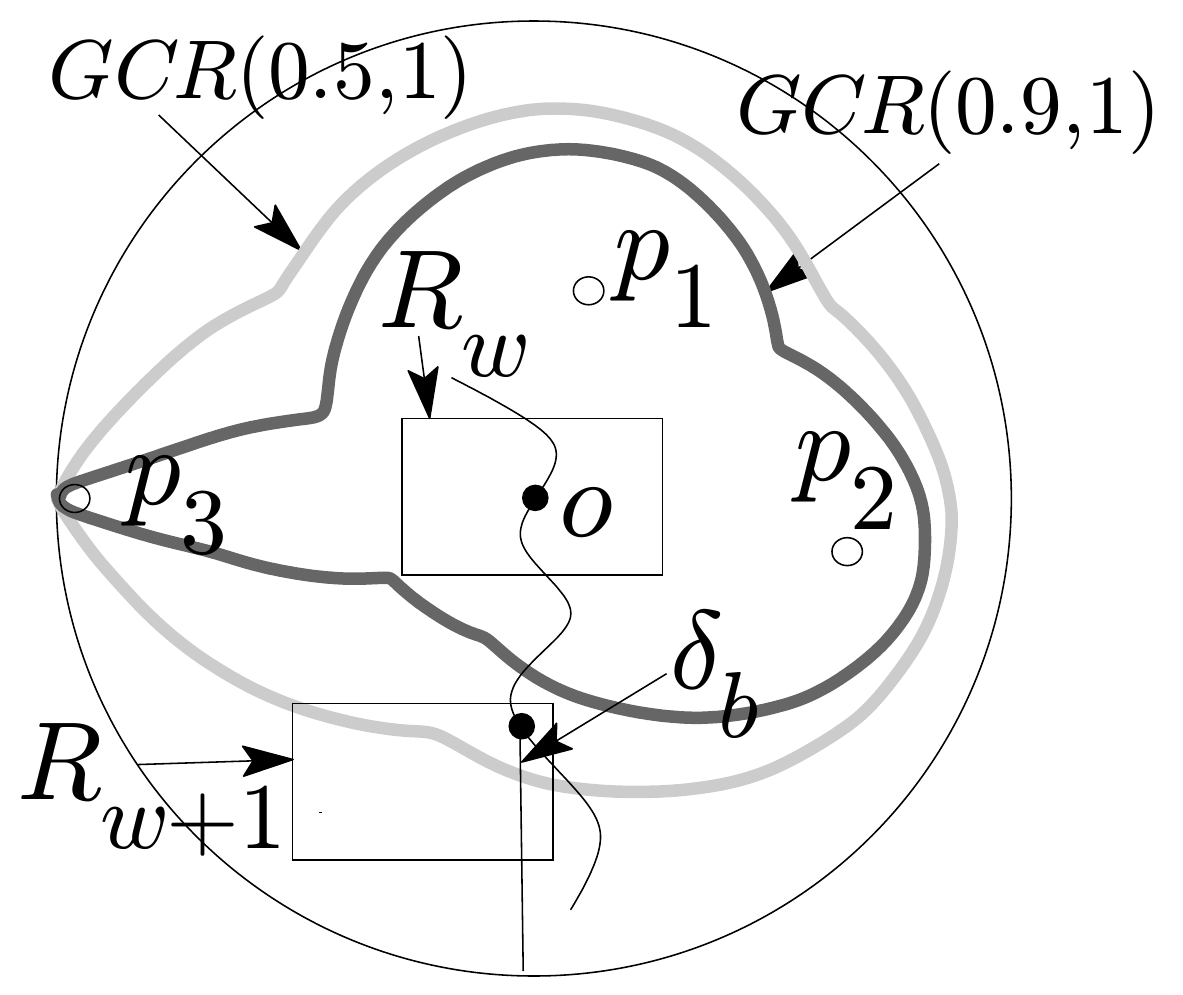}} &
       \hspace{-4mm}
      \resizebox{40mm}{!}{\includegraphics{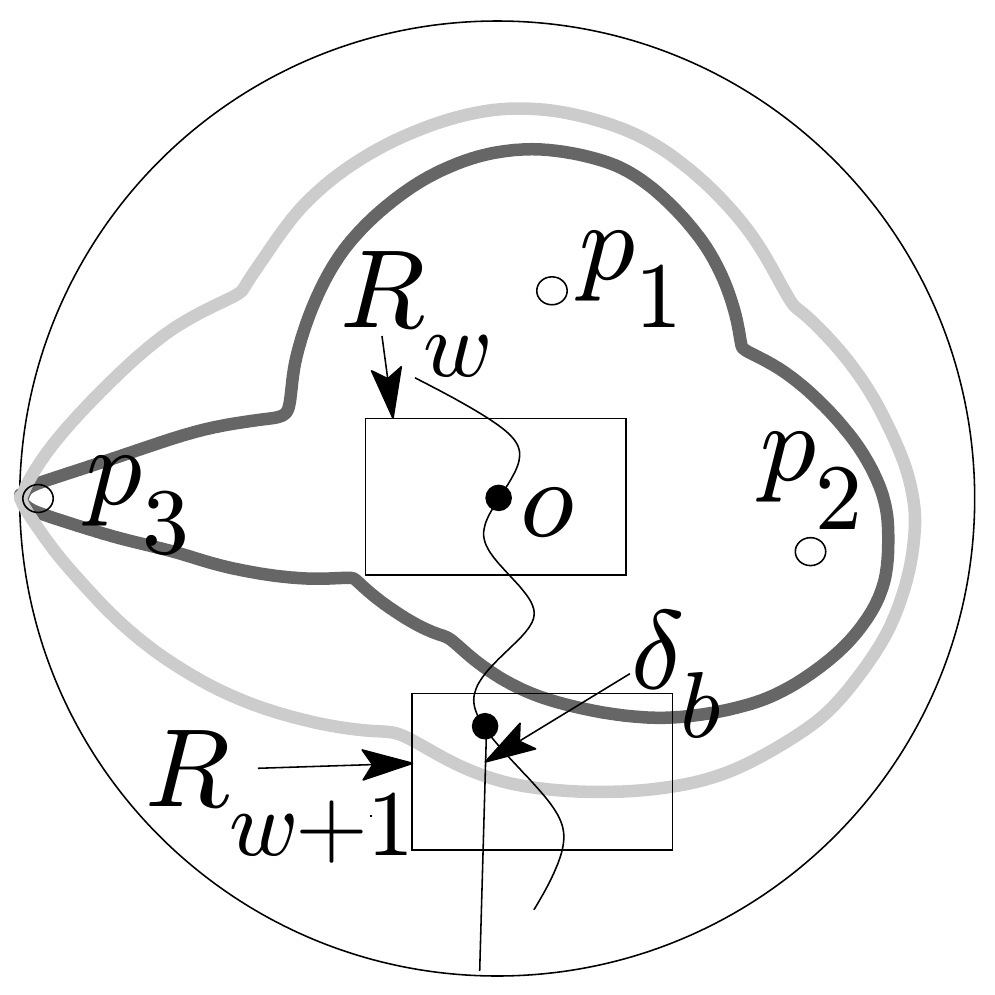}} &
      \hspace{-4mm}
      \resizebox{40mm}{!}{\includegraphics{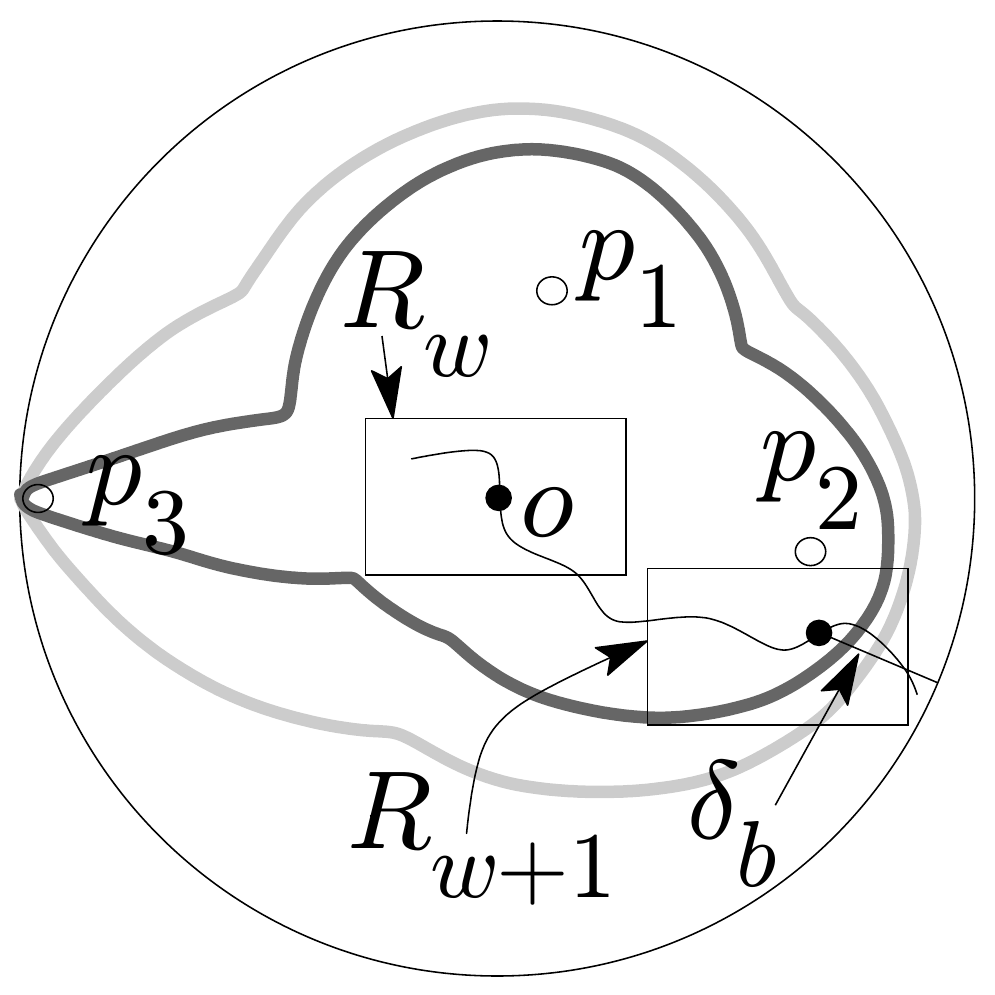}} \\
       \scriptsize{(a)\hspace{0mm}} & \scriptsize{(b)\hspace{-0mm}}& \scriptsize{(c)}& \scriptsize{(d)}
        \end{tabular}
        \end{center}
    \caption{(a) An attack from $R_{w+1} \cap GCR(0.5,1)$, (b)-(d)
    Removal of attacks with $cl_{r}=0.5$ and $cl=0.9$}
    \label{fig:csaferegion}
\end{figure*}

To overcome the above mentioned attack and the overlapping
rectangle attack, the key idea of our technique is \emph{to increase the
size of $GCR$ without informing the LSP about this extended
region}. To achieve the extended region of $GCR$ without informing
the LSP, the user has three options while requesting a
PM$k$NN query: the user specifies a higher
value than (i) the required confidence level or (ii) the required
number of nearest data objects or (iii) both. It is important to note that the user does not
reveal the required confidence level and the required number of
NNs to the LSP. Let $cl_r$ and $k_r$ represent the required
confidence level and the required number of NNs for a user,
respectively, and $cl$ and $k$ represent the specified confidence
level and the specified number of NNs to the LSP by the user,
respectively.

Consider the first option, where a user specifies a higher value
than the required confidence level, i.e., $cl > cl_r$. We know
that the $GCR$ is constructed from $GR$s of data objects in $P$
and the $GR$ of a data object becomes smaller with the increase of
the confidence level for a fixed $C(o,r)$ as shown in
Figure~\ref{fig:shape}(b), which justifies the following lemma.

\begin{lemma}
\label{lemma:6_one} Let $cl>cl_r$ and $k=k_r$. Then
$GCR(cl_r,k_r)\supset GCR(cl,k)$ for a fixed $C(o,r)$.
\end{lemma}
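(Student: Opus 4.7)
The plan is to reduce the containment claim on the guaranteed combined regions $GCR$ to a pointwise monotonicity statement on the single-object guaranteed regions $GR$, and then lift it through the union-of-intersections formula in Definition~\ref{def:6_2}. The first step will prove the (non-strict) inclusion $GCR(cl,k)\subseteq GCR(cl_r,k_r)$, and the second step will argue strictness by exhibiting a point that lies in the larger region but not in the smaller one.

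\textbf{Step 1: monotonicity of $GR$ in $cl$.} I would first show that for every data object $p_h\in P$, if $cl>cl_r$ then $GR(cl,p_h)\subseteq GR(cl_r,p_h)$. This is immediate from Definition~\ref{def:6_1}: any $q\in GR(cl,p_h)$ satisfies $CL(q,p_h)\geq cl>cl_r$, so $q\in GR(cl_r,p_h)$. The geometric picture justifying this is the formula in Definition~\ref{def:1}: for a fixed $C(o,r)$ and fixed $p_h$, the value $CL(q,p_h)$ depends only on $q$, and the sublevel set $\{q:CL(q,p_h)\geq \alpha\}$ is nested decreasingly in $\alpha$.

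\textbf{Step 2: lifting the inclusion to $GCR$.} Since $k=k_r$, both combined regions use the same index family in Definition~\ref{def:6_2}. Using Step~1 and the elementary fact that $A_h\subseteq B_h$ for all $h$ implies $\bigcap_{h\in P'}A_h\subseteq \bigcap_{h\in P'}B_h$ and then $\bigcup_{P'}\bigcap_{h\in P'}A_h\subseteq \bigcup_{P'}\bigcap_{h\in P'}B_h$, I obtain $GCR(cl,k)\subseteq GCR(cl_r,k_r)$.

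\textbf{Step 3: strictness.} The main obstacle is showing that the inclusion is proper. For this I would exploit the continuity of $CL(q,p_h)$ in $q$ on the annular region outside the ``safe core'' of radius $r'$ around which $CL$ drops below~$1$. On that annulus the ``otherwise'' branch $CL(q,p_h)=r'/dist(q,p_h)$ of Definition~\ref{def:1} takes every value in $(0,1)$ continuously, so in particular there exists a point $q^\star$ with $cl_r\leq CL(q^\star,p_h)<cl$ for some $p_h$ that is among the $k$ nearest to $q^\star$ in $P$ (such a $p_h$ exists because $P$ is finite and nonempty). By Lemma~\ref{lemma:6_zero}-style reasoning combined with Definition~\ref{def:6_2}, such a $q^\star$ lies in $GCR(cl_r,k_r)$ but not in $GCR(cl,k)$. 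Verifying that the annular region is large enough for this constructed point to actually realize $k$ data objects at confidence level $\geq cl_r$ but strictly below $cl$ is the delicate part, and I expect this to be the main obstacle; if the authors' intent is only non-strict containment despite the use of ``$\supset$'', then Steps~1 and~2 already suffice.
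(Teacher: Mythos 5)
Your Steps 1--2 are essentially the paper's entire argument: the authors give no formal proof, only the one-sentence observation that each guaranteed region $GR(cl,p_h)$ shrinks as $cl$ increases for a fixed $C(o,r)$, which combined with the union-of-intersections construction of $GCR$ is precisely your monotonicity-plus-lifting reasoning. The paper never addresses strictness of the ``$\supset$'' at all, so your Step 3 goes beyond what the authors do, and your closing hedge --- that non-strict containment is the substantive content --- is the correct reading of their intent.
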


Since $GCR(cl_r,k_r)\supset GCR(cl,k)$, now the user can delay the
next request with a new obfuscation rectangle $R_{w+1}$ until the
user leaves $GCR(cl_r,k_r)$. Since the LSP does not know about
$GCR(cl_r,k_r)$, it is not possible for the LSP to find more
precise trajectory path from the overlap of $GCR(cl_r,k_r)$ and
$R_{w+1}$. Figure~\ref{fig:csaferegion}(b) shows an example for
$k=1$, where a user's required confidence level is $cl_r=0.5$ and
the specified confidence level is $cl=0.9$. The LSP does not know
about the boundary of $GCR(0.5,1)$ and thus cannot find the user's
precise location from the overlap of $GCR(0.5,1)$ and $R_{w+1}$.

However, the next location update $R_{w+1}$ has to be in $C(o,r)$
of $R_w$. Otherwise, the LSP is able to determine more precise
location of the user as $R_{w+1} \cap C(o,r)$ at the time of
requesting $R_{w+1}$. For any location outside $C(o,r)$, the user
has a confidence level $0$ which in turn means that $q$ cannot be
within the region of $R_{w+1}$ that falls outside $C(o,r)$ at the
time of requesting $R_{w+1}$. As a result whenever $C(o,r)$ is
small, then the restriction might cause a large part of $R_{w+1}$
to overlap with $GCR(cl,k)$ and $R_w$. The advantage of our
technique is that this overlap does not cause any privacy threat
for the user's trajectory due to the availability of
$GCR(cl_r,k_r)$ to the user. Since there is no guarantee that the
user's trajectory passes through the overlap or not, the LSP is
not able to determine the user's precise trajectory path from the
overlap of $R_{w+1}$ with $GCR(cl,k)$ and $R_w$. Without loss of
generality, Figures~\ref{fig:csaferegion}(c)
and~\ref{fig:csaferegion}(d) show two examples, where $R_{w+1}$
overlaps with $GCR(0.9,1)$ for $cl_r=0.5$, $cl=0.9$, and $k=1$. In
Figure~\ref{fig:csaferegion}(c) we see that the user's trajectory
does not pass through $GCR(0.9,1) \cap R_{w+1}$, whereas
Figure~\ref{fig:csaferegion}(d) shows a case, where the user's
trajectory passes through the overlap.

Another possible threat on the user's trajectory privacy could
arise if $R_{w+1}$ overlaps with $GCR(cl,k)$ and $R_w$. A user
does not need to send the next request with $R_{w+1}$ as long as
the user is in $GCR(cl_r,k_r)$ which in turn means the user's
location must not be within $GCR(cl_r,k_r) \cap R_{w+1}$ at the
time of sending $R_{w+1}$ to the LSP. Since the LSP does not know
$GCR(cl_r,k_r)$, the LSP cannot identify the overlap of
$GCR(cl_r,k_r)$ with $R_{w+1}$ and determine more precise location
of the user as $R_{w+1} \setminus (GCR(cl_r,k_r) \cap R_{w+1})$.
However consider the case when $R_{w+1}$ overlaps with $GCR(cl,k)$
and $R_w$: since $GCR(cl,k), R_w \subset GCR(cl_r,k_r)$ and the
LSP knows $GCR(cl,k)$ and $R_w$, the LSP can refine more precise
location of the user at the time of sending $R_{w+1}$ as $R_{w+1}
\setminus (GCR(cl,k) \cap R_{w+1})$. To overcome the above
mentioned privacy threat, we use two variables $\delta_{b}$ and
$\delta$:

\begin{itemize}
\item \emph{Boundary distance} $\delta_{b}$: the minimum distance of user's
current position $q$ from the boundary of $C(o,r)$.
\item \emph{Safe distance} $\delta$: the user specified distance, which is used to determine when the next request needs to be sent.
\end{itemize}

In our technique, the user's next request is sent to the LSP as
soon as $\delta_b$ becomes less or equal to $\delta$. Using
$\delta$, whose value is unknown to the LSP, there is no possible
privacy attack from the overlap of $R_{w+1}$ with $GCR(cl,k)$ and
$R_w$ as the user might need to send $R_{w+1}$ in advance due to
the constraint of $\delta_b \leq \delta$.
Figure~\ref{fig:csaferegion}(d) shows a case where the user's
location at the time of requesting $R_{w+1}$ is within $GCR(0.9,1)
\cap R_{w+1}$ to satisfy $\delta_b \leq \delta$.

In the second option of achieving the extended region of GCR
without informing the LSP, a user specifies a higher value than
the required number of NNs, i.e., $k>k_r$. From the construction
method of a $GCR$, we know that $GCR(cl,k+1)\subset GCR(cl,k)$ for
a fixed $C(o,r)$, which leads to the following lemma.

\begin{lemma}
\label{lemma:6_two} Let $cl=cl_r$ and $k>k_r$. Then
$GCR(cl_r,k_r)\supset GCR(cl,k)$ for a fixed $C(o,r)$.
\end{lemma}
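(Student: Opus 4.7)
The plan is to reduce this lemma to the containment $GCR(cl,k+1)\subseteq GCR(cl,k)$ stated in the sentence preceding the lemma, and then iterate from $k_r$ up to $k$. Since $cl=cl_r$ here, the confidence level plays no role; the whole claim is really about monotonicity in the second (neighbor-count) argument of $GCR$.

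First I would unpack Definition~6.2: a point $q$ lies in $GCR(cl,m)$ iff there exists some $P'\subseteq P$ with $|P'|=m$ such that $q\in\bigcap_{h\in P'}GR(p_h,cl)$, equivalently iff the set $S_q:=\{p_h\in P : CL(q,p_h)\ge cl\}$ satisfies $|S_q|\ge m$. With this reformulation, monotonicity in $m$ is immediate: if $|S_q|\ge k$ and $k>k_r$, then certainly $|S_q|\ge k_r$, so $q\in GCR(cl,k_r)$. Taking $cl=cl_r$ yields the containment $GCR(cl,k)\subseteq GCR(cl_r,k_r)$.

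For the strict part of $\supset$, I would exhibit a point witnessing the proper inclusion. The idea is to pick any $p_h\in P$ and consider a point $q$ very close to $p_h$ but far from every other data object in $P$; for such a $q$ we have $CL(q,p_h)\ge cl$ while $CL(q,p_{h'})$ for $h'\neq h$ can be made arbitrarily small, so $|S_q|$ can be brought below any threshold $>1$. Choosing the configuration so that $|S_q|=k_r$ (possible since $k_r\ge 1$ and $k\ge k_r+1$) places $q$ in $GCR(cl_r,k_r)\setminus GCR(cl,k)$. For $k_r\ge 2$ one picks $q$ near the $k_r$-th nearest data object in $P$ but well separated from the remaining points; the existence of such a $q$ is guaranteed whenever $|P|>k_r$ in general position, which is the implicit standing assumption.

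The main obstacle is really just the strict-inclusion clause; the set-inclusion half is a one-line consequence of the definition once the combinatorial characterization of $GCR$ in terms of $|S_q|$ is written down. I expect the authors' proof to either mirror this reformulation and then point at strict inclusion as "similar to Lemma~6.1", or to simply quote the unnumbered observation $GCR(cl,k+1)\subset GCR(cl,k)$ and apply it $k-k_r$ times, which is the shortest possible argument.
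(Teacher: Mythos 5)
Your main argument is exactly the content of the paper's justification: the paper gives no formal proof of this lemma, it simply observes ``from the construction method of a $GCR$, we know that $GCR(cl,k+1)\subset GCR(cl,k)$ for a fixed $C(o,r)$,'' and your reformulation of Definition~\ref{def:6_2} via the set $S_q=\{p_h\in P: CL(q,p_h)\ge cl\}$ and the observation $|S_q|\ge k>k_r\Rightarrow|S_q|\ge k_r$ is a clean formalization of that one-line remark. So for the inclusion $GCR(cl,k)\subseteq GCR(cl_r,k_r)$ you are correct and in line with the paper.

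The strict-inclusion clause, however, contains a genuine gap. You claim that by taking $q$ close to some $p_h$ the values $CL(q,p_{h'})$ for $h'\neq h$ ``can be made arbitrarily small,'' but for a \emph{fixed} $C(o,r)$ and fixed $P$ this is false: by Definition~\ref{def:1}, $CL(q,p_{h'})=r'/dist(q,p_{h'})$ with $r'=r-dist(o,q)$, so for $q$ near an interior data object $p_h$ one has $r'\approx r-dist(o,p_h)>0$ while $dist(q,p_{h'})\le 2r$, giving $CL(q,p_{h'})$ a positive lower bound that may well exceed $cl$. You cannot tune the configuration, since the lemma quantifies over a given known region and its data objects; and indeed strictness can fail outright in degenerate cases (e.g.\ co-located data objects, where $|S_q|$ only takes the values $0$ and $|P|$, so $GCR(cl,k)=GCR(cl,k_r)$ for all $k,k_r\le|P|$). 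The paper sidesteps this because it uses $\subset$/$\supset$ loosely throughout and only ever relies on the weak inclusion (the user may delay the next request while inside $GCR(cl_r,k_r)$), so your reduction to monotonicity in the second argument is the part that matters; the strictness witness as you constructed it does not go through.
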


Since we also have $GCR(cl_r,k_r)\supset GCR(cl,k)$ for the second
option, similar to the case of first option, a user can protect
her trajectory privacy using the extended region, which is used
when the user cannot sacrifice the accuracy of answers.

In the third option, a user requests higher values for both
confidence level and the number of NNs than required and can
obtain a larger extension for the $GCR(cl_r,k_r)$ as both $cl$ and
$k$ contribute to extend the region. The larger extension ensures
a user with a higher level of trajectory privacy because
$GCR(cl_r,k_r)$ covers a longer part of the user's trajectory,
which in turn reduces the number of times the user needs to send
the obfuscation rectangle. The level of trajectory privacy also
increases with the increase of the difference between $cl$ and
$cl_r$ or $k$ and $k_r$ because with the decrease of $cl_r$ or
$k_r$, the size of $GCR(cl_r,k_r)$ increases for a fixed $C(o,r)$
and with the increase of $cl$ or $k$, $C(o,r)$ becomes larger,
which results in a larger $GCR(cl_r,k_r)$. Thus, the difference
between $cl$ and $cl_r$ or $k$ and $k_r$ can be increased by
either incurring a higher query processing overhead (i.e.,
specifying a higher value for $cl$ or $k$) or sacrificing the
required quality of the answers (i.e., specifying a lower value
for $cl_r$ or $k_r$). Note that, a large value for $cl$ or $k$
incurs higher query processing overhead as more data objects need
to be retrieved.

The parameters $cl$, $cl_r$, $k$, $k_r$, $\delta$, and the size of
the obfuscation rectangle can be changed according to the user's
privacy profile and quality of service requirements. A user can
specify a high level of privacy requirement in her profile for
some locations that are more sensitive to her. Different values
for $cl$, $cl_r$, $k$, $k_r$, and $\delta$ in consecutive requests
prevent an LSP from gradually learning or guessing any bound of $cl_r$ and $k_r$ to
apply reverse engineering and predict a more precise user location
within the obfuscation rectangle.

Based on the above discussion of our technique, we present the
algorithm that protects the user's trajectory privacy while
processing an M$k$NN query. Before going to the details of the
algorithm, we summarize commonly used symbols in
Table~\ref{table:symb}.

\begin{table}[htbp]
\begin{center}
\begin{tabular}{|c|c|}
  \hline
  Symbol& Meaning\\
  \hline
  $R_w$ & Obfuscation Rectangle \\
  \hline
  $cl_r$ & Required confidence level \\
  \hline
   $cl$ & Specified confidence level \\
  \hline
  $k_r$ & Required number of NNs \\
  \hline
   $k$ & Specified number of NNs \\
  \hline
  $C(o,r)$ & Known region \\
  \hline
  $GCR(.,.)$ & Guaranteed combined region \\
  \hline
 $\delta$ & Safe distance \\
  \hline
  $\delta_b$ & Boundary distance \\
  \hline
\end{tabular}
\caption{Symbols} \label{table:symb}
\end{center}
\end{table}

\subsection{Algorithm}
Algorithm~\ref{algo:RequestkNN}, \textsc{Request\_PM$k$NN}, shows
the steps for requesting a PM$k$NN query. A user initiates an
M$k$NN query by generating an obfuscation rectangle $R_w$ that
includes her current location $q$. The parameters $cl$, $cl_r$,
$k$, $k_r$, and $\delta$ are set according to the user's
requirement. Then a request is sent with $R_w$ to the LSP for $k$
NNs with a confidence level $cl$. The LSP returns the set of data
objects $P$ that includes the $k$ NNs for every point of $R_w$
with a confidence level at least $cl$. Then according to
Lemma~\ref{lemma:6_zero}, the user continues to have the $k_{r}$
NNs with a confidence level at least $cl_{r}$ as long as the user
resides within $GCR(cl_r,k_r)$. In this paper, we do not focus on
developing algorithms to maintain the rank of $k_{r}$ NNs from $P$
for every position of the user's trajectory, because this is
orthogonal to our problem of protecting privacy of users'
trajectories for an M$k$NN query. For this purpose, any of the
existing approaches (e.g.,~\cite{kulik06.GIS}) can be used.

%

\setlength{\algomargin}{2em} \dontprintsemicolon
\begin{algorithm}[htbp]
\label{algo:RequestkNN} \caption{\textsc{Request\_PM$k$NN}}
    $w\leftarrow 1$\;
    $cl, cl_r\leftarrow$ user specified and required confidence level\;
    $k, k_r\leftarrow$ user specified and required number of NNs\;
    $\delta \leftarrow$ user specified safe distance\;
    $R_w\leftarrow GenerateRectangle(q)$\;
    $P\leftarrow RequestkNN(R_w,cl,k)$\;
    \While{service required}{
        $q\leftarrow NextLocationUpdate()$\;
        ${p_h}_k\leftarrow {k_r}^{th}$ NN from $q$\;
        $cl, cl_r\leftarrow$ user specified and required confidence level\;
        $k, k_r\leftarrow$ user specified and required number of NNs\;
        $\delta \leftarrow$ user specified safe distance\;
        $\delta_{b}\leftarrow r - dist(o,q)$\;
        \If{$(r \leq cl_r \times dist({p_h}_k,q) + dist(o,q))$ or $(\delta_{b} \leq \delta)$}
        {
            $R_{w+1}\leftarrow GenerateRectangle(q, C(o,r))$\;
            $P\leftarrow RequestkNN(R_{w+1},cl,k)$\;
              $w\leftarrow w+1$\;
        }
    }
\end{algorithm}

For every location update, the algorithm checks two conditions:
whether the user's current position $q$ is outside her current
$GCR(cl_r,k_r)$ or the minimum boundary distance from $C(o,r)$,
$\delta_b$, has become less or equal to the user specified
distance, $\delta$. To check whether the user is outside her
$GCR(cl_r,k_r)$, the algorithm checks the constraint $r \leq cl_r
\times dist({p_h}_k,q) + dist(o,q)$, where $r$ is the radius of
current known region and $cl_r \times dist({p_h}_k,q) + dist(o,q)$
represents the required radius of the known region to have $k_r$
NNs with confidence level at least $cl_r$ from the current
position $q$. For the second condition, $\delta_b$ is computed by
subtracting $dist(o,q)$ from $r$ (Line 1.13). If any of the two
conditions in Line 1.14 becomes true, then the new obfuscation
rectangle $R_{w+1}$ is computed with the restriction that it must
be included within the current $C(o,r)$. After computing
$R_{w+1}$, the next request is sent and $k$ NNs are retrieved for
$R_{w+1}$ with a confidence level at least $cl$. The process
continues as long as the service is required.



The function $GenerateRectangle$ is used to compute an obfuscation
rectangle for a user according to her privacy requirement. We
assume that a user can compute her rectangle based on any existing
obfuscation or $l$-diversity
techniques~\cite{Damiani09.SPRINGL,Xue09.LOCA,yiu08.ICDE} and
therefore a detailed discussion for the function
$GenerateRectangle$ goes beyond the scope of this paper.

The following theorem shows the correctness of the algorithm
\textsc{Request\_PM$k$NN}.
\begin{theorem}
\label{th:0} The algorithm \textsc{Request\_PM$k$NN} protects a
user's trajectory privacy for M$k$NN queries.
\end{theorem}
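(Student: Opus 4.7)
The plan is to verify, for every obfuscation rectangle $R_w$ produced by \textsc{Request\_PM$k$NN}, the two conditions of Definition~\ref{def:2_3}. Proceed by induction on $w$. The base case ($w=1$) is immediate: before any request has been made, the LSP only knows $R_1$, which already contains $q$ by construction of $GenerateRectangle$, and the user's past/future trajectory is completely unconstrained from the LSP's viewpoint. For the inductive step, assume the two conditions hold for $R_1,\ldots,R_w$ and establish them for $R_{w+1}$. The key invariants I would track during the LSP's view of the interaction are: the LSP sees $R_1,\ldots,R_{w+1}$ together with the announced parameters $cl$ and $k$, but never sees $cl_r$, $k_r$, or $\delta$.

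For condition~1 (the user's location at the time of sending $R_{w+1}$ cannot be refined to a proper subset of $R_{w+1}$), the argument splits on which trigger in Line~1.14 fired. If the user moved outside $GCR(cl_r,k_r)$, I would point out that the LSP can at most compute $GCR(cl,k)\subset GCR(cl_r,k_r)$ by Lemmas~\ref{lemma:6_one} and~\ref{lemma:6_two}, so even though the user must lie outside $GCR(cl_r,k_r)$ when $R_{w+1}$ is issued, from the LSP's viewpoint any point of $R_{w+1}$, including points inside $GCR(cl,k)\cap R_{w+1}$ and inside $R_w \cap R_{w+1}$, remains a feasible location of the user: it would simply correspond to some other (and unknown) values of $cl_r$ and $k_r$. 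If the second trigger $\delta_b\le\delta$ fired, then the user may in fact be anywhere in $R_{w+1}\cap C(o,r)$, including inside $GCR(cl,k)$; again the LSP cannot exclude any such point because $\delta$ is unknown. Combining the two cases, the LSP cannot rule out any point of $R_{w+1}$.

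For condition~2 (the trajectory between $R_w$ and $R_{w+1}$ cannot be refined to a proper subset of $R_{w+1}$), I would use the same ambiguity. Because the LSP has no bound on $cl_r$, $k_r$, or $\delta$, for any candidate curve entering $R_{w+1}$ at an arbitrary boundary point there is a consistent scenario (choose $cl_r,k_r$ small enough that the candidate curve lies inside $GCR(cl_r,k_r)$, and choose $\delta$ so that the final point triggers Line~1.14). Hence no subset of $R_{w+1}$ can be excluded as containing the trajectory. To handle the auxiliary maximum movement bound attack, I would cite the standard defense referenced in the paper so that the intersection of $R_{w+1}$ with the maximum-movement region from $R_w$ is either all of $R_{w+1}$ or contributes no new refinement beyond what the above argument already rules out.

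The main obstacle I expect is the inductive step when $R_{w+1}$ overlaps both $R_w$ and $GCR(cl,k)$ as in Figure~\ref{fig:csaferegion}(d): naively, the LSP might try to locate the user in $R_{w+1}\setminus(GCR(cl,k)\cap R_{w+1})$, reasoning that the user would not otherwise have bothered to re-request. The crucial step here is to argue carefully that the $\delta_b\le\delta$ trigger, with $\delta$ hidden, is exactly what defeats this inference: the user may legitimately send $R_{w+1}$ while still inside $GCR(cl_r,k_r)\supset GCR(cl,k)$, and the LSP, lacking $\delta$, cannot tell whether this happened. Making this indistinguishability precise---by exhibiting, for each point of $R_{w+1}$, a consistent choice of $(cl_r,k_r,\delta)$ and trajectory that explains the observed transcript---is the technical heart of the proof and is what turns the informal discussion surrounding Figure~\ref{fig:csaferegion} into a rigorous argument.
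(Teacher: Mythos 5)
Your proposal is correct and follows essentially the same route as the paper's proof: both verify the two conditions of Definition~\ref{def:2_3} for each requested rectangle by arguing that the LSP cannot compute $GCR(cl_r,k_r)$ and that, since $\delta$ (and $cl_r,k_r$) are hidden, the computable overlaps $GCR(cl,k)\cap R_{w+1}$ and $R_w\cap R_{w+1}$ give the LSP no way to decide whether the user's location or trajectory lies in them. Your version is simply a more explicit rendering of the paper's informal argument (induction on $w$, case split on the trigger in Line~1.14, and exhibiting consistent $(cl_r,k_r,\delta)$ witnesses), so no gap to report.
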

\begin{proof}

The obfuscation rectangles $R_{w+1}$ for a user requesting a
PM$k$NN query always overlaps with $GCR(cl_r,k_r)$ and sometimes
also overlaps with $GCR(cl,k)$ and $R_w$. We will show that these
overlaps do not reveal a more precise user location to the LSP,
i.e., the user's trajectory privacy is protected.

The LSP does not know about the boundary of $GCR(cl_r,k_r)$, which
means that the LSP cannot compute $GCR(cl_r,k_r) \cap R_{w+1}$.
Thus, the LSP cannot refine a user's location at the time of
requesting $R_{w+1}$ or the user's trajectory path from
$GCR(cl_r,k_r) \cap R_{w+1}$.

Since the LSP knows $GCR(cl,k)$ and $R_w$, it can compute the
overlaps, $GCR(cl,k) \cap R_{w+1}$ and $R_w \cap R_{w+1}$, when it
receives $R_{w+1}$. However, the availability of $GCR(cl_r,k_r)$
to the user and the option of having different values for $\delta$
prevent the LSP to determine whether the user is located within
$GCR(cl,k) \cap R_{w+1}$ and $R_w \cap R_{w+1}$ at the time of
requesting $R_{w+1}$ or whether the user's trajectory passes
through these overlaps.

In summary there is no additional information to render a more
precise user position or user trajectory within the rectangle.
Thus, every obfuscation rectangle computed using the algorithm
\textsc{Request\_PM$k$NN} satisfies the two required conditions
(see Definition~\ref{def:2_3}) for protecting a user's trajectory
privacy.
\end{proof}

\subsubsection{The maximum movement bound attack}
As we have discussed in Section~\ref{sec:pblm_stup}, if a user's
maximum velocity is known to the LSP then the maximum movement
bounding attack can identify a user's more precise position. To
prevent the maximum movement bound attack, existing
solutions~\cite{cheng06.PET,ghinita09.GIS,Hu09.TKDE,Xu09.TPDS}
have proposed that $R_{w+1}$ for the next request needs to be
computed in a way so that $R_{w+1}$ is completely included within
the maximum movement bound of $R_w$, denoted as $M_w$. Our
proposed algorithm to generate $R_{w+1}$ can also consider this
constraint of $M_w$ whenever the LSP knows the user's maximum
velocity. Incorporating the constraint of $M_{w}$ in our algorithm
does not cause any new privacy violation for users.

Note that, Algorithm~\ref{algo:RequestkNN} to protect a user's
trajectory privacy for an M$k$NN query with obfuscation rectangles
can be also generalized for the case where a user uses other
geometric shapes (e.g., a circle) to represent the imprecise
locations if the known region for other geometric shapes is also a
circle. For example, if a user uses obfuscation circles instead of
obfuscation rectangles then the overlapping rectangle attack turns
into overlapping circle attack. From
Algorithm~\ref{algo:RequestkNN}, we observe that our technique to
protect overlapping rectangle attack is independent of any
parameter of obfuscation rectangle; it only depends on the center
and radius of the known region. Thus, as long as the
representation of the known region is a circle, our technique can
be also applied for an overlapping circle attack.
%

\section{Server-side Processing}
\label{sec:knnq}

For a PM$k$NN query with a customizable confidence level, an LSP
needs to provide the $k$ NNs with the specified confidence level
for all points of every requested obfuscation rectangle.
Evaluating the $k$ NNs with a specified confidence level for every
point of an obfuscation rectangle separately is an expensive
operation and doing it continuously for a PM$k$NN query incurs
large overhead. In this section, we develop an efficient algorithm
that finds the $k$ NNs for every point of an obfuscation rectangle
with a specified confidence level in a single search using an
$R$-tree. Our proposed algorithm allows an LSP to provide the user
with a known region, which helps protecting the user's trajectory
privacy and further to reduce the overall PM$k$NN query processing
overhead.


We show different properties of a confidence level for an
obfuscation rectangle, which we use to improve the efficiency of
our algorithms. Let $R_w$ be a user's obfuscation rectangle with
center $o$ and corner points $\{c_1, c_2, c_3, c_4\}$, and
$m_{ij}$ be the middle point of $\overline{c_ic_j}$, where $(i,j)
\in \{(1,2),(2,3),(3,4),(4,1)\}$. To avoid the computation of the
confidence level for a data object with respect to every point of
$R_w$ while searching for the query answers, we exploit the
following properties of the confidence level. We show that if two
endpoints, i.e., a corner point and its adjacent middle point or
the center and a point in the border of $R_w$, of a line have a
confidence level at least $cl$ for a data object then every point
of the line has a confidence level at least $cl$ for that data
object. Formally, we have the following theorems.

%
%
%



%

\begin{theorem}
\label{th:1} Let $c_i, c_j$ be any two adjacent corner points of
an obfuscation rectangle $R_w$ and $m_{ij}$ be the middle point of
$\overline{c_ic_j}$. For $t\in \{i,j\}$, if $c_t$ and $m_{ij}$
have a confidence level at least $cl$ for a data object $p_{h}$
then all points in $\overline{m_{ij}c_t}$ have a confidence level
at least $cl$ for $p_{h}$.
\end{theorem}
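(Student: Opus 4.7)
The plan is to parametrize the segment $\overline{m_{ij}c_t}$ linearly as $q(s) = (1-s)\,m_{ij} + s\,c_t$ for $s\in[0,1]$ and to argue via a single concavity argument that $CL(q(s), p_h)\ge cl$ holds on the whole interval once it holds at the endpoints. I would first reduce to the case $cl>0$ (for $cl=0$ the conclusion is trivial from Definition~\ref{def:1}); the positive confidence level at both endpoints then forces $c_t, m_{ij}\in C(o,r)$, and since $C(o,r)$ is a disk and therefore convex, the entire segment lies inside it. This lets me use $CL(q,p_h) = \min\{1,\,r'(q)/dist(q,p_h)\}$ with $r'(q) = r - dist(o,q)$ throughout the segment.

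Next I would exploit the crucial geometric fact that $m_{ij}$ is the midpoint of the edge $\overline{c_ic_j}$, so the line $\overline{om_{ij}}$ is perpendicular to that edge. Placing $m_{ij}$ at the origin with the edge along the $x$-axis and $o = (0, a)$ where $a = dist(o, m_{ij})$, the parametrization becomes $q(s) = (sL, 0)$ with $L = dist(m_{ij}, c_t)$, and
\[
dist(o, q(s)) = \sqrt{a^2 + L^2 s^2}, \qquad dist(q(s), p_h) = \|p_h - (sL, 0)\|.
\]
Both expressions are norms of affine functions of $s$ and are therefore convex in $s$; consequently $r'(q(s)) = r - \sqrt{a^2 + L^2 s^2}$ is concave in $s$, while $dist(q(s), p_h)$ is convex.

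The key reduction is that, whenever $q \in C(o,r)$, the inequality $CL(q,p_h) \ge cl$ is equivalent to $r'(q) \ge cl\cdot dist(q,p_h)$: since $cl\le 1$ and $CL = \min\{1,\,r'/dist\}$, the minimum is at least $cl$ precisely when $r'/dist$ is. Setting $h(s) := r'(q(s)) - cl\cdot dist(q(s), p_h)$, the previous paragraph shows $h$ is concave on $[0,1]$ as the sum of a concave function and a negative multiple of a convex function. The superlevel set $\{s : h(s)\ge 0\}$ of a concave function is convex and hence an interval; since $h(0)\ge 0$ and $h(1)\ge 0$ by hypothesis, $h(s)\ge 0$ on all of $[0,1]$, which gives the theorem.

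The main obstacle I anticipate is handling the piecewise structure of $CL$: one must confirm that the reduction to the single inequality $r'\ge cl\cdot dist(\cdot,p_h)$ is correct in both the $CL=1$ and $CL<1$ regimes, and that the concavity argument does not silently break at a transition between them. Both checks are immediate from the $\min$ representation. The geometric insight that makes the concavity work is precisely that $m_{ij}$ is the foot of the perpendicular from $o$ to the edge; this is why $dist(o,q(s))^2$ decomposes via the Pythagorean theorem into a constant plus a pure quadratic in $s$, delivering the convexity of $dist(o, q(s))$ that drives the whole argument.
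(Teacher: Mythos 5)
Your proof is correct, and it takes a genuinely different and substantially cleaner route than the paper. The paper proves Theorem~\ref{th:1} by splitting into cases according to where the perpendicular foot of $p_h$ lands relative to $\overline{m_{ij}c_t}$, tracking the monotone trends of $dist(x,x^\prime)$ and $dist(x,p_h)$ as $x$ slides along the segment, and then combining Lemma~\ref{lemma:one} (the easy monotone case) with Lemma~\ref{lemma:two}, whose proof is a contradiction argument comparing difference quotients and is justified by second-derivative ``curve sketching'' of $r-dist(o,m_{ij})\sec\theta_x$ and $dist(p_h,l_t)\csc\alpha_x$. You replace all of this with one observation: writing $r^\prime(q)=r-dist(o,q)$, the condition $CL(q,p_h)\geq cl$ for $q\in C(o,r)$ is equivalent to $r^\prime(q)\geq cl\cdot dist(q,p_h)$, and along any affine parametrization of the segment both $dist(o,q(s))$ and $dist(q(s),p_h)$ are convex (norms of affine maps), so $h(s)=r^\prime(q(s))-cl\cdot dist(q(s),p_h)$ is concave and its nonnegativity at the two endpoints propagates to the whole segment; the preliminary step that positive endpoint confidence forces both endpoints, hence the segment, into the convex disk $C(o,r)$ is also right. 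What your approach buys is the elimination of the case analysis and of the delicate rate-of-decrease comparison (the least rigorous part of the paper's argument), plus extra generality: contrary to your closing remark, the Pythagorean decomposition via the perpendicular from $o$ to the edge is not needed at all—convexity of the norm along any segment suffices—so your argument actually proves the statement for an arbitrary segment whose endpoints have confidence at least $cl$, which yields Theorem~\ref{th:2} as an immediate corollary rather than by the separate adaptation the paper sketches. The only blemish is that mischaracterization of the midpoint/perpendicularity as essential; deleting it makes your proof both simpler and stronger.
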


\begin{theorem}
\label{th:2} Let $o$ be the center of an obfuscation rectangle
$R_w$, $c_i, c_j$ be any two adjacent corner points of $R_w$, and
$c$ be a point in $\overline{c_ic_j}$. If $o$ and $c$ have a
confidence level at least $cl$ for a data object $p_{h}$ then all
points in $\overline{oc}$ have a confidence level at least $cl$
for $p_{h}$.
\end{theorem}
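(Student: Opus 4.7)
The plan is to recast the condition $CL(q,p_h)\ge cl$ as a single scalar inequality and then exploit convexity of that inequality along the segment $\overline{oc}$. Since the hypothesis gives $CL(o,p_h),CL(c,p_h)\ge cl>0$, neither endpoint is in the first case of Definition~\ref{def:1}, so both lie in $C(o,r)$ and the bound $CL(q,p_h)\ge cl$ is equivalent to the requirement that the maximum inscribed radius $r'=r-dist(o,q)$ at $q$ satisfies $cl\cdot dist(q,p_h)\le r'$. Rewritten, this is a single affine-in-$r$ inequality
\[
\Phi(q)\ :=\ cl\cdot dist(q,p_h)\ +\ dist(o,q)\ \le\ r,
\]
which also forces $dist(o,q)\le r$ and therefore $q\in C(o,r)$ automatically (since $cl\ge 0$). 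So the theorem reduces to showing that $\Phi(q)\le r$ for every $q\in\overline{oc}$, given that it holds at the two endpoints.

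Next I would observe that $\Phi$ is a convex function of $q\in\mathbb{R}^2$: both $q\mapsto dist(q,p_h)$ and $q\mapsto dist(o,q)$ are Euclidean norms of affine expressions in $q$, hence convex, and $\Phi$ is a nonnegative linear combination of the two (with coefficients $cl$ and $1$). Parameterizing an arbitrary point of $\overline{oc}$ as $q=(1-t)o+t\,c$ with $t\in[0,1]$, convexity gives
\[
\Phi(q)\ \le\ (1-t)\,\Phi(o)+t\,\Phi(c)\ \le\ \max\{\Phi(o),\Phi(c)\}\ \le\ r,
\]
so $q\in C(o,r)$ and $cl\cdot dist(q,p_h)\le r-dist(o,q)=r'$. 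Unwinding the case analysis in Definition~\ref{def:1} yields $CL(q,p_h)\ge cl$, which is the desired conclusion.

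The only real obstacle is the bookkeeping at the boundaries of Definition~\ref{def:1}'s piecewise definition: one must verify that the single inequality $\Phi(q)\le r$ faithfully encodes $CL(q,p_h)\ge cl$ in all three branches. In the middle branch, $dist(q,p_h)\le r'$ already implies $\Phi(q)\le dist(o,q)+r'=r$, so the inequality is consistent with $CL(q,p_h)=1\ge cl$; in the third branch, $cl\le r'/dist(q,p_h)$ is exactly $\Phi(q)\le r$; and the first branch is excluded since $\Phi(q)\le r$ forces $dist(o,q)\le r$. Once this equivalence is pinned down, the convexity step is a one-line application of Jensen's inequality, and no further geometric argument about the rectangle's corners or center is needed. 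The same template, incidentally, proves Theorem~\ref{th:1} by applying convexity to the segment $\overline{m_{ij}c_t}$ instead of $\overline{oc}$.
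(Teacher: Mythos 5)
Your proof is correct, but it takes a genuinely different and in fact more streamlined route than the paper. The paper does not argue via convexity at all: it proves Theorem~\ref{th:1} by writing $CL(x,p_h)=\frac{dist(x,x^\prime)}{dist(x,p_h)}$ with $dist(x,x^\prime)=r-dist(o,x)$, splitting into cases according to where the foot of the perpendicular from $p_h$ falls (Lemma~\ref{lemma:one} handles the direction in which $dist(x,p_h)$ decreases away from $c_t$, Lemma~\ref{lemma:two} the other direction), and for the hard case compares the ``rate of decreasing rate'' of $dist(x,x^\prime)$ and $dist(x,p_h)$ via a curve-sketching/second-derivative discussion followed by a contradiction argument on difference quotients; Theorem~\ref{th:2} is then obtained only by analogy (replace $m_{ij}$ by $o$ and $c_t$ by $c$). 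Your reformulation $CL(q,p_h)\ge cl \iff \Phi(q):=cl\cdot dist(q,p_h)+dist(o,q)\le r$ (valid because $cl\in(0,1]$, and with the branch bookkeeping you carried out, including that $\Phi(q)\le r$ forces $q\in C(o,r)$) replaces all of that with one application of convexity of a nonnegative combination of Euclidean distances, and — exactly as you note — it proves Theorems~\ref{th:1} and~\ref{th:2} simultaneously since the segment is arbitrary; it also dispenses with the paper's informal monotonicity-of-rates step, which is the least rigorous part of the original argument (the claimed concavity trend of $dist(x,p_h)$ is asserted from sketches rather than proved). What the paper's route buys in exchange is only some geometric intuition about how the two distances evolve along the rectangle's border; analytically your argument is strictly cleaner and more general, and it is fully compatible with Definition~\ref{def:1} as stated.
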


Next we discuss the proof of Theorem~\ref{th:1}. We omit the proof
of Theorem~\ref{th:2}, since a similar proof technique used for
Theorem~\ref{th:1} can be applied for Theorem~\ref{th:2} by
considering $o$ as $m_{ij}$ and $c$ as $c_t$.

As mentioned in Section~\ref{sec:conflevel}, our algorithm to
evaluate $k$NN answers expands the known region $C(o,r)$ until the
$k$ NNs with the specified confidence level for every point of
$R_w$ are found. Since any point outside $C(o,r)$ has a confidence
level 0 (see Definition~\ref{def:1}), $C(o,r)$ needs to be at
least expanded until $R$ is within $C(o,r)$ to ensure $k$NN
answers with a specified confidence level greater than 0. Hence,
we assume that $R \subset C(o,r)$ at the current state of the
search. Let the extended lines
$\overrightarrow{om_{ij}}$\footnote{We use the symbol
$\rightarrow$ for directional line segments.} and
$\overrightarrow{oc_t}$ intersect the border of $C(o,r)$ at
$m_{ij}^\prime$ and $c_t^\prime$, respectively, where $t \in
\{i,j\}$. Figure~\ref{fig:basicfig}(a) shows an example for $i=1$,
$j=2$, and $t=j$. For a data object $p_h$ in $C(o,r)$, the
confidence levels of $c_t$ and $m_{ij}$, $CL(c_t,p_h)$ and
$CL(m_{ij},p_h)$, can be expressed as
$\frac{dist(c_t,{c_t}^\prime)}{dist(c_t,p_h)}$ and
$\frac{dist(m_{ij},{m_{ij}}^\prime)}{dist(m_{ij},p_h)}$,
respectively.

\begin{figure*}[htbp]
  \begin{center}
    \begin{tabular}{ccc}
      \resizebox{40mm}{!}{\includegraphics{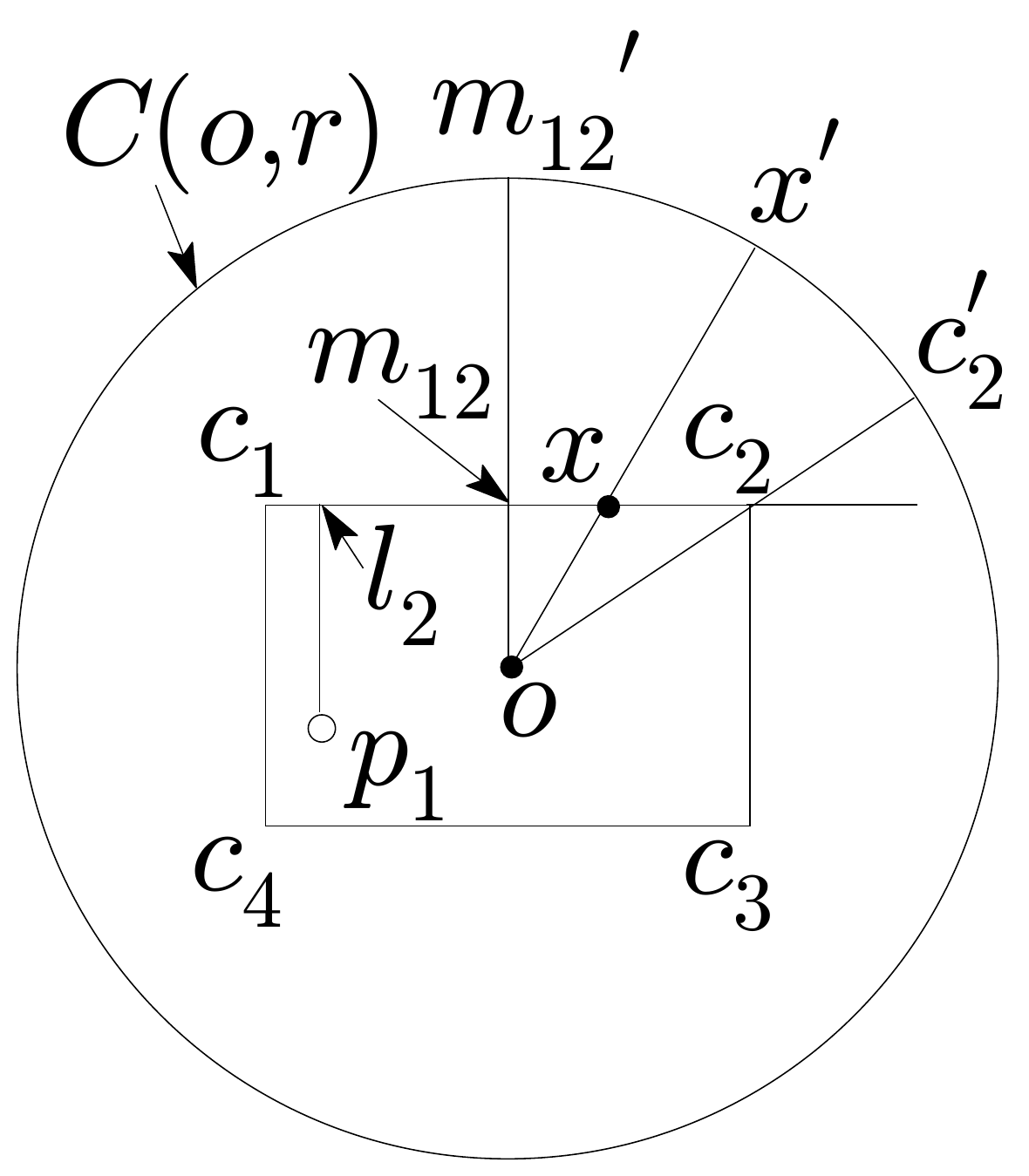}} &
      \resizebox{40mm}{!}{\includegraphics{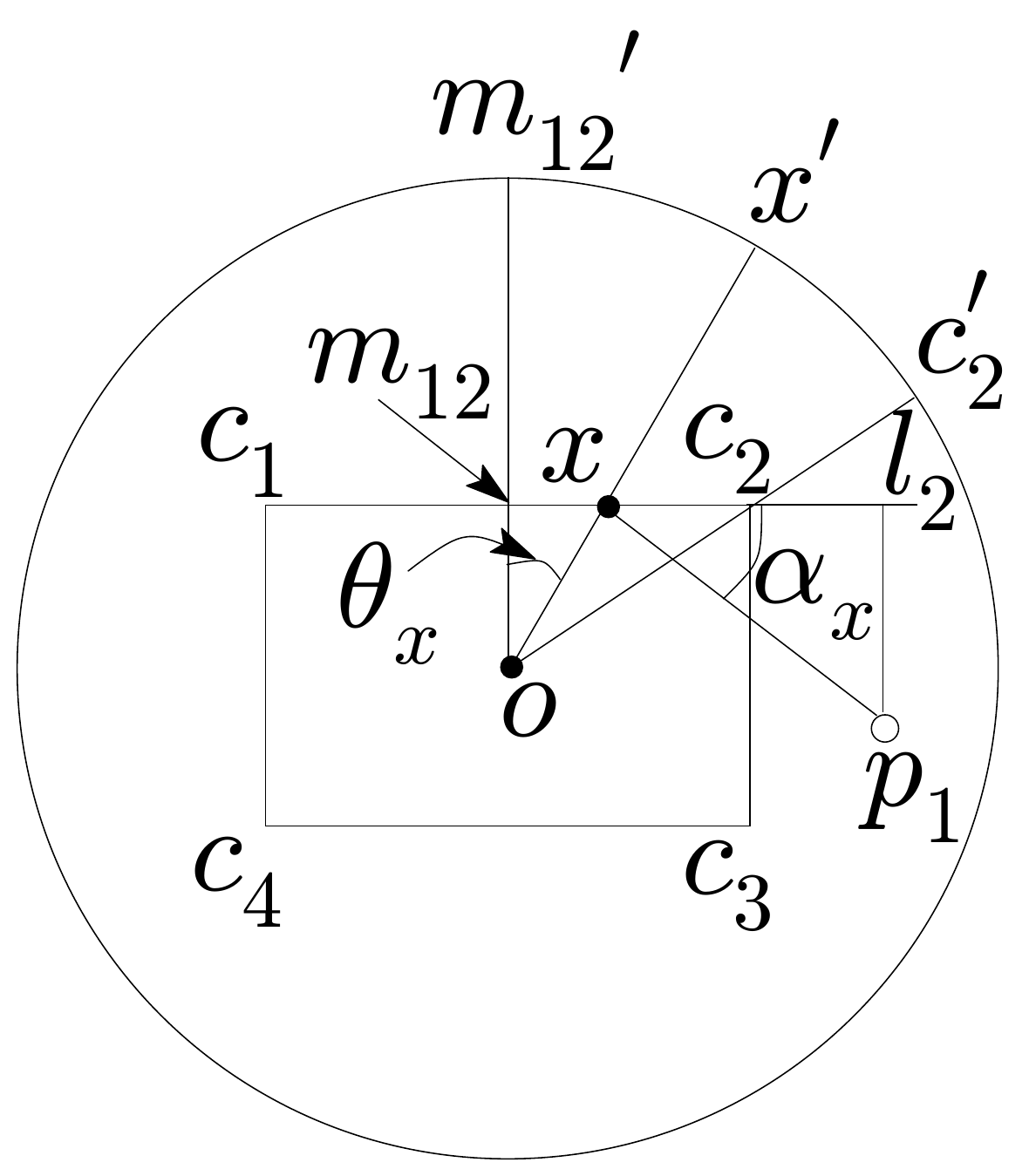}} &
      \resizebox{40mm}{!}{\includegraphics{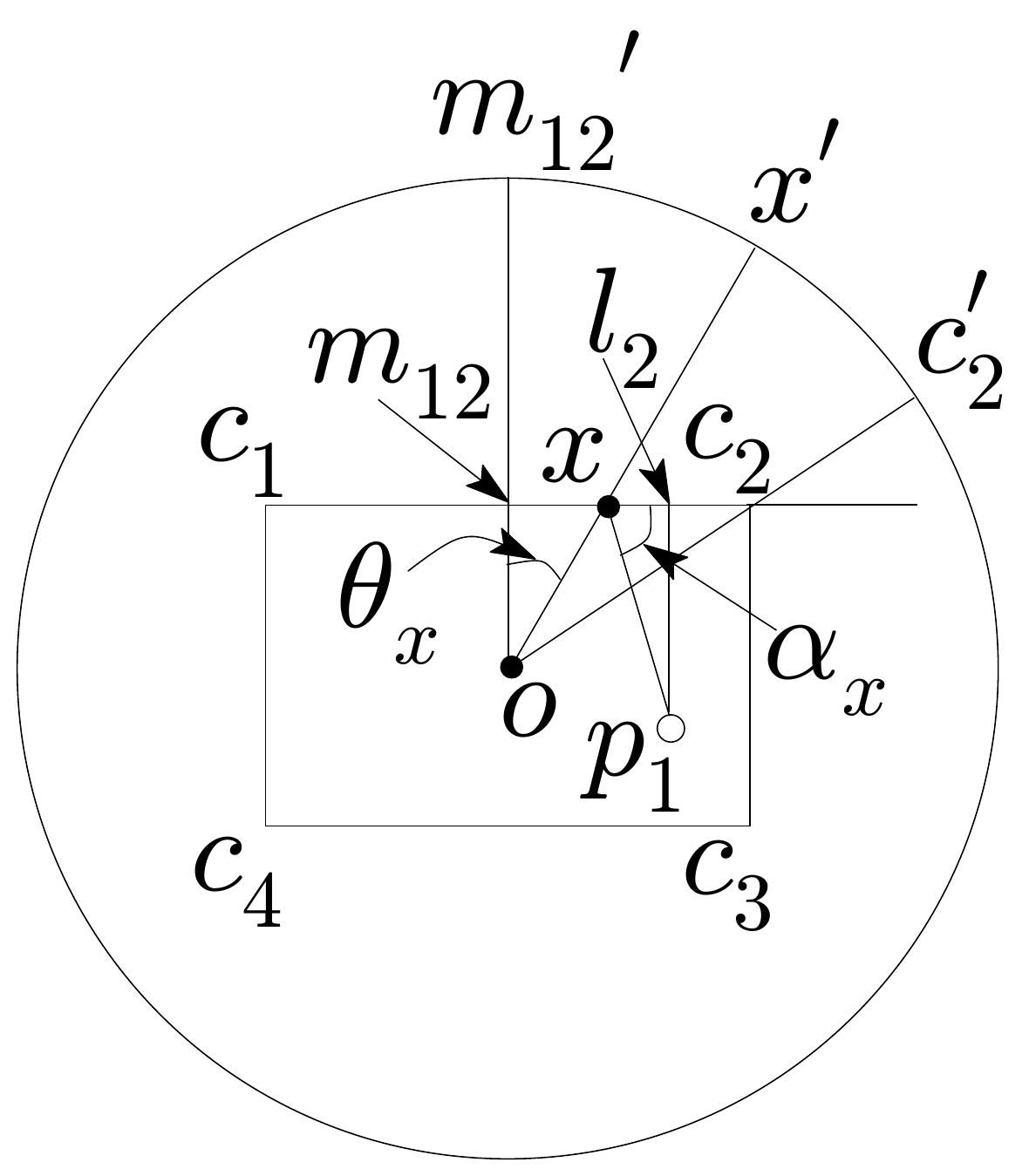}} \\
       \scriptsize{(a)\hspace{0mm}} & \scriptsize{(b)\hspace{-0mm}}& \scriptsize{(c)}
        \end{tabular}
    \caption{Impact of different positions of a data object}
    \label{fig:basicfig}
  \end{center}
\end{figure*}

Let $x$ be a point in $\overline{m_{ij}c_t}$, and
$\overrightarrow{ox}$ intersect the border of $C(o,r)$ at
$x^\prime$. For a data object $p_h$ in $C(o,r)$, the confidence
level of $x$, $CL(x,p_h)$, is measured as
$\frac{dist(x,x^\prime)}{dist(x,p_h)}$. As $x$ moves from $c_t$
towards $m_{ij}$, although $dist(x,x^\prime)$ always increases,
$dist(x,p_h)$ can increase or decrease (does not maintain a single
trend) since it depends on the position of $p_h$ within $C(o,r)$.
Without loss of generality we consider an example in
Figure~\ref{fig:basicfig}, where $p_1$ is a data object within
$C(o,r)$. Based on the position of $p_1$ with respect to $m_{12}$
and $c_2$, we have three cases: the perpendicular from $p_{1}$
intersects the extended line $\overrightarrow{c_2m_{12}}$ (see
Figure~\ref{fig:basicfig}(a)) or the extended line
$\overrightarrow{m_{12}c_2}$ (see Figure~\ref{fig:basicfig}(b)) or
the segment $\overline{m_{12}c_2}$ (see
Figure~\ref{fig:basicfig}(c)) at $l_2$. In the first case,
$dist(x,p_1)$ decreases as $x$ moves from $c_2$ towards $m_{12}$
as shown in Figure~\ref{fig:basicfig}(a). In the second case,
$dist(x,p_1)$ decreases as $x$ moves from $m_{12}$ towards $c_2$
as shown in Figure~\ref{fig:basicfig}(b). In the third case,
$dist(x,p_h)$ is the minimum at $x=l_2$, i.e., $dist(x,p_1)$
decreases as $x$ moves from $c_2$ or $m_{12}$ towards $l_2$ as
shown in Figure~\ref{fig:basicfig}(c). From these three cases we
observe that for different positions of $p_h$, $dist(x,p_h)$ can
decrease for moving $x$ in both directions, i.e., from $c_t$
towards $m_{ij}$ or from $m_{ij}$ towards $c_t$.

For the scenario, where $dist(x,p_h)$ decreases as $x$ moves from
$c_t$ towards $m_{ij}$ (first case) or from $c_t$ towards $l_t$
(third case), i.e., $dist(x,p_h) \leq dist(c_t,p_h)$, we have the
following lemma.
\begin{lemma}
\label{lemma:one} If $dist(x,p_h) \leq dist(c_t,p_h)$ and
$CL(c_t,p_h) \geq cl$ then $CL(x,p_h) \geq cl$, for any point $x
\in \overline{m_{ij}c_t}$.
\end{lemma}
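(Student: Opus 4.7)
The plan is to reduce the claim to two simple monotonicity comparisons between the expressions for $CL(x,p_h)$ and $CL(c_t,p_h)$ given by Definition~\ref{def:1}. Since $R_w\subset C(o,r)$ is the standing assumption of this section and the segment $\overline{m_{ij}c_t}$ lies along an edge of $R_w$, the point $x$ is inside $C(o,r)$, so its confidence level falls under one of the last two branches of the definition, i.e.\ $CL(x,p_h)=1$ when $dist(x,p_h)\leq r-dist(o,x)$ and $CL(x,p_h)=(r-dist(o,x))/dist(x,p_h)$ otherwise. The same dichotomy applies at $c_t$.

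First I would establish the geometric inequality $dist(o,x)\leq dist(o,c_t)$. The map $y\mapsto dist(o,y)$ is convex, so on the segment $\overline{m_{ij}c_t}$ it attains its maximum at an endpoint. An elementary computation on the rectangle shows $dist(o,m_{ij})<dist(o,c_t)$ (the midpoint of a side is strictly closer to the center than either corner of that side), so the maximum along the segment is $dist(o,c_t)$. Consequently $r-dist(o,x)\geq r-dist(o,c_t)$: the ``numerator'' of $CL$ can only grow when passing from $c_t$ to $x$. Combined with the hypothesis $dist(x,p_h)\leq dist(c_t,p_h)$, which shrinks the ``denominator'', this gives
\[
\frac{r-dist(o,x)}{dist(x,p_h)}\;\geq\;\frac{r-dist(o,c_t)}{dist(c_t,p_h)}.
\]

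Next I would finish by a case analysis on which branch of Definition~\ref{def:1} applies at $c_t$. If $CL(c_t,p_h)=1$, then $dist(c_t,p_h)\leq r-dist(o,c_t)$; chaining with the hypothesis and with $dist(o,x)\leq dist(o,c_t)$ yields $dist(x,p_h)\leq r-dist(o,x)$, so $CL(x,p_h)=1\geq cl$. Otherwise $CL(c_t,p_h)=(r-dist(o,c_t))/dist(c_t,p_h)\geq cl$, and the displayed inequality gives $(r-dist(o,x))/dist(x,p_h)\geq cl$; in either the ``equal to $1$'' or the fractional branch at $x$, this forces $CL(x,p_h)\geq cl$.

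I expect no serious obstacle: the only substantive step is the geometric claim $dist(o,x)\leq dist(o,c_t)$, which is precisely where the hypothesis that $x$ lies on the half-edge $\overline{m_{ij}c_t}$, rather than on an arbitrary chord of $R_w$, is used. The rest is bookkeeping over the piecewise definition of the confidence level.
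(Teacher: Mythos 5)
Your proof is correct and takes essentially the same route as the paper: the paper's one-line argument rests precisely on the observation $dist(x,x^\prime)\geq dist(c_t,c_t^\prime)$, which is your inequality $r-dist(o,x)\geq r-dist(o,c_t)$, combined with the shrinking denominator $dist(x,p_h)\leq dist(c_t,p_h)$. You merely make explicit what the paper leaves implicit, namely the geometric fact $dist(o,x)\leq dist(o,c_t)$ on the half-edge (via convexity of the distance to $o$) and the bookkeeping over the branches of Definition~\ref{def:1}.
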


The proof of this lemma directly follows from $dist(x,x^\prime)
\geq dist(c_t,{c_t}^\prime)$.

In the other scenario, $dist(x,p_h)$ decreases as $x$ moves from
$m_{ij}$ towards $c_t$ (second case) or from $m_{ij}$ towards
$l_t$ (third case). In the general case, let $u_t$ be a point that
represents $c_t$ for the second case and $l_t$ for the third case.
To prove that $CL(x,p_h) \geq cl$, in contrast to
Lemma~\ref{lemma:one} where we only need to have $CL(c_t,p_h)\geq
cl$, for the current scenario we need to have the confidence level
at least equal to $cl$ for $p_h$ at both end points, i.e.,
$m_{ij}$ and $u_t$. According to the given conditions of
Theorem~\ref{th:1}, we already have $CL(m_{ij},p_h)\geq cl$ and
$CL(c_t,p_h)\geq cl$. Since $u_t$ is $c_t$ in the second case and
$l_t$ in the third case, we need to compute the confidence level
of $l_t$ for $p_h$ in the third case and using
Lemma~\ref{lemma:one} we find that $CL(l_t,p_h) \geq cl$. Thus, we
have the confidence level of both $m_{ij}$ and $u_t$ for $p_h$ at
least equal to $cl$.

However, showing $CL(x,p_h) \geq cl$ if both $m_{ij}$ and $u_t$
have a confidence level of at least $cl$ for $p_h$ is not
straightforward, because in the current scenario both
$dist(x,p_h)$ and $dist(x,x^\prime)$ decrease with the increase of
$dist(m_{ij},x)$. Thus, we need to compare the rate of decrease
for $dist(x,x^\prime)$ and $dist(x,p_h)$ as $x$ moves from
$m_{ij}$ to $u_t$. Assume that $\angle{xom_{ij}}=\theta_x$ and
$\angle{p_hxl_t}=\alpha_x$. The range of $\theta_x$ can vary from
0 to $\theta$, where $\theta_{m_{ij}}=0$, $\theta_{u_t}=\theta$,
and $\theta \leq \frac{\Pi}{4}$. For a fixed range of $\theta_x$
the range of $\alpha_x$, $[\alpha_{m_{ij}},\alpha_{u_t}]$, can
have any range from $[0,\frac{\Pi}{2}]$ depending upon the
position of $p_{h}$. We express $dist(x,x^\prime)$ and
$dist(x,p_{h})$ as follows:
\begin{displaymath}
\label{eq:one}
    dist(x,x^\prime)=r-dist(o,m_{ij}) \times \sec\theta_x\\
    \end{displaymath}
\begin{displaymath}
{dist(x,p_{h}) =
        \left\{ \begin{array}{ll}
            dist(p_{h}, l_t) \times \csc\alpha_x & \mbox{if $\alpha_x \neq 0$}\\
            dist(m_{ij},p_{h})-dist(m_{ij},x) & \mbox{otherwise}.
        \end{array} \right.}
\end{displaymath}

\begin{figure*}[htbp]
  \begin{center}
    \begin{tabular}{ccc}
      \resizebox{40mm}{!}{\includegraphics{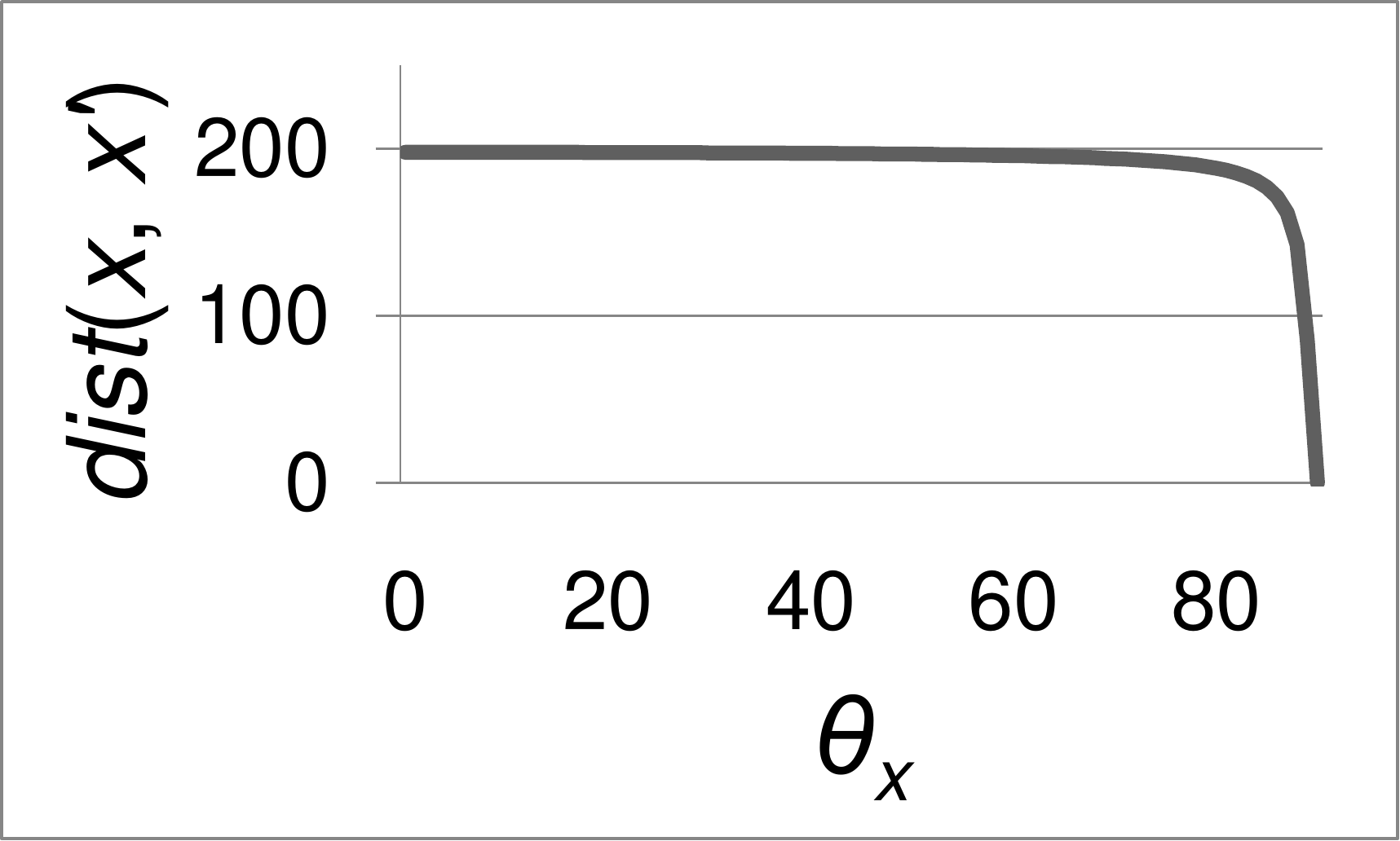}} &
      \resizebox{40mm}{!}{\includegraphics{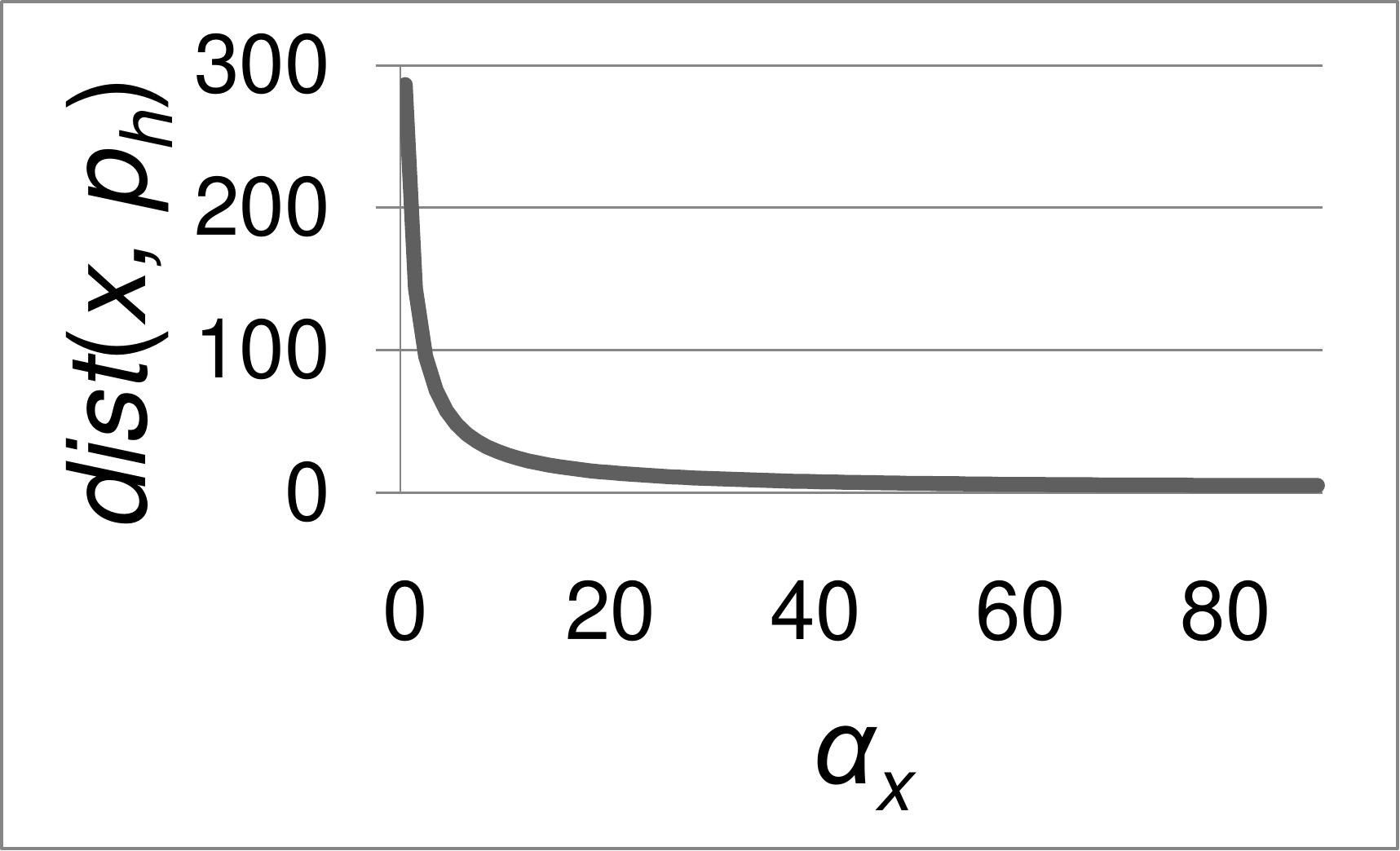}} &
      \resizebox{40mm}{!}{\includegraphics{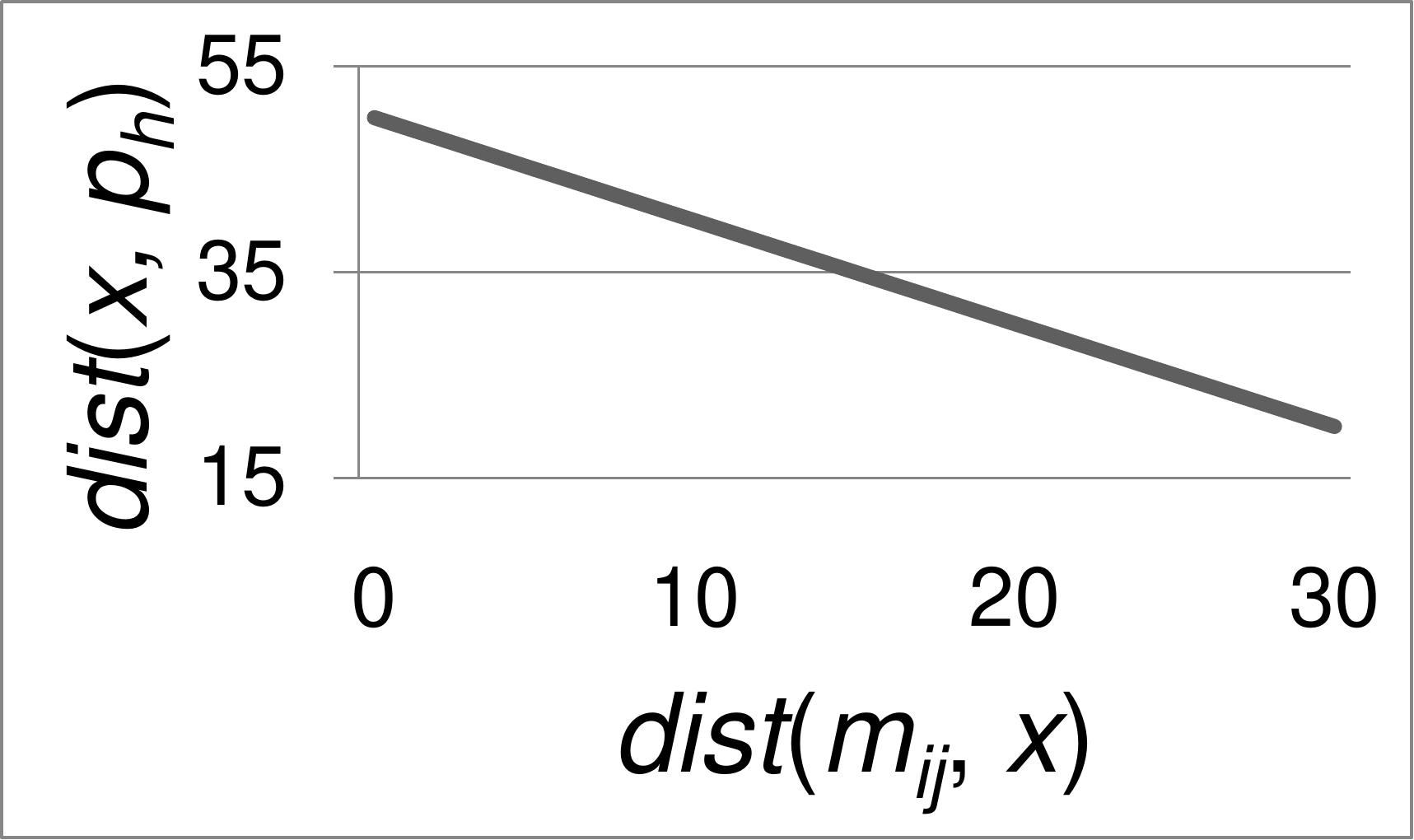}} \\
       \scriptsize{(a)\hspace{0mm}} & \scriptsize{(b)}& \scriptsize{(c)}
        \end{tabular}
    \caption{Curve sketching}
    \label{fig:curve}
  \end{center}
\end{figure*}

The rate of decrease for $dist(x,x^\prime)$ and $dist(x,p_h)$ are
not comparable by computing their first order derivative as they
are expressed with different variables and there is no fixed
relation between the range of $\theta_x$ and $\alpha_x$.
Therefore, we perform a curve sketching and consider the second
order derivative in Figure~\ref{fig:curve}. From the second order
derivative, we observe in Figure~\ref{fig:curve}(a) that the rate
of decreasing rate of $dist(x,x^\prime)$ increases with the
increase of $\theta_x$, whereas in Figure~\ref{fig:curve}(b) the
rate of decreasing rate of $dist(x,p_{h})$ decreases with the
increase of $\alpha_x$ for $\alpha_x \neq 0$ and in
Figure~\ref{fig:curve}(c) the rate of decreasing rate remains
constant with the increase of $dist(m_{ij},x)$ for $\alpha_x = 0$.
The different trends of the decreasing rate and the constraint of
confidence levels at two end points $m_{ij}$ and $u_t$ allow us to
make a qualitative comparison between the rate of decrease for
$dist(x,x^\prime)$ and $dist(x,p_{h})$ with respect to the common
metric $dist(m_{ij},x)$, as $dist(m_{ij},x)$ increases with the
increase of both $\theta_x$ and $\alpha_x$ for a fixed $p_h$. We
have the following lemma.


\begin{lemma}
\label{lemma:two} Let $dist(x,p_h)$ decrease as $x$ moves from
$m_{ij}$ to $u_t$  for any point $x \in \overline{m_{ij}u_t}$. If
$CL(m_{ij},p_h) \geq cl$ and $CL(u_t,p_h) \geq cl$, then
$CL(x,p_h) \geq cl$.
\end{lemma}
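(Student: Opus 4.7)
The plan is to reduce the lemma to showing that a single auxiliary function is non-negative on an interval, and then obtain that non-negativity from a concavity argument built on the second-order analysis already sketched in Figure~\ref{fig:curve}.

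First I would restate the conclusion algebraically. Define
\[
F(x) \;:=\; dist(x,x') \;-\; cl\cdot dist(x,p_h).
\]
Since $dist(x,p_h)>0$, the inequality $CL(x,p_h)\ge cl$ is equivalent to $F(x)\ge 0$. The hypotheses $CL(m_{ij},p_h)\ge cl$ and $CL(u_t,p_h)\ge cl$ give exactly $F(m_{ij})\ge 0$ and $F(u_t)\ge 0$, so both endpoint values of $F$ are already non-negative. What must be shown is that $F$ cannot dip below zero in between.

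Next I would parametrize $\overline{m_{ij}u_t}$ by the common arc-length variable $s:=dist(m_{ij},x)\in[0,dist(m_{ij},u_t)]$, which is monotone in both $\theta_x$ and $\alpha_x$ (as the paragraph preceding the lemma emphasizes). The curve-sketching already performed yields, in this common parameter:
\begin{itemize}
\item $dist(x,x')$ is \emph{concave} in $s$, since $\tfrac{d}{d\theta_x}dist(x,x')<0$ with $|\tfrac{d}{d\theta_x}dist(x,x')|$ increasing (Figure~\ref{fig:curve}(a)), and $\theta_x$ is an increasing function of $s$ obtained from $\tan\theta_x=s/dist(o,m_{ij})$;
\item $dist(x,p_h)$ is \emph{convex} in $s$ when $\alpha_x\ne 0$ (Figure~\ref{fig:curve}(b)) and \emph{affine} in $s$ when $\alpha_x=0$ (Figure~\ref{fig:curve}(c)), because in the current scenario $dist(x,p_h)$ is decreasing with $s$ while its rate of decrease is non-increasing in magnitude.
\end{itemize}
Combining these, $F(s)=dist(x,x')-cl\cdot dist(x,p_h)$ is concave plus a non-negative multiple of $-(\text{convex})$, hence concave on $[0,dist(m_{ij},u_t)]$.

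Finally, a concave function on an interval lies above the secant joining its endpoint values. Since $F(0)=F(m_{ij})\ge 0$ and $F(dist(m_{ij},u_t))=F(u_t)\ge 0$, the secant is itself non-negative on the whole interval, and therefore $F(s)\ge 0$ throughout. This gives $CL(x,p_h)\ge cl$ for every $x\in\overline{m_{ij}u_t}$, as required.

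The main obstacle I anticipate is rigorously carrying the concavity and convexity through the change of variable to the common parameter $s$: concavity is not generally preserved under arbitrary monotone reparametrizations, so the argument must either (i) verify directly that the composition $dist(x,x')(s)=r-dist(o,m_{ij})\sec(\arctan(s/dist(o,m_{ij})))$ has non-positive second derivative in $s$, and do the analogous check for $dist(x,p_h)$ using $dist(x,p_h)=dist(p_h,l_t)\csc\alpha_x$ with $\alpha_x$ expressed through the foot $l_t$, or (ii) replace the concavity-via-secant step with the weaker fact that suffices here, namely that a function whose slope is monotone decreasing and whose values at the endpoints are non-negative cannot achieve a strictly negative interior minimum. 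Either route suffices; the essential geometric content is the opposing second-derivative trends already isolated in Figure~\ref{fig:curve}, together with the endpoint constraints supplied by the hypothesis.
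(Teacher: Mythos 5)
Your proposal is correct, and it rests on exactly the same geometric facts as the paper's proof --- the opposing second-order trends of $dist(x,x^\prime)$ and $dist(x,p_h)$ along $\overline{m_{ij}u_t}$, combined with the two endpoint constraints --- but it packages them differently. The paper argues by contradiction: assuming $CL(x,p_h)<cl$ at an interior point, it compares the average rates of decrease of $dist(\cdot,\cdot^\prime)$ and $dist(\cdot,p_h)$ over the subintervals from $m_{ij}$ to $x$ and from $x$ to $u_t$ (Equations~\ref{eq:three}--\ref{eq:six}) and derives contradictory chord inequalities; those chord inequalities are precisely the discrete form of the concavity/convexity you invoke, so your concave-$F$-plus-secant argument is the direct version of the same proof. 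What your route buys is rigor and economy: carrying out your option (i), one has $dist(x,x^\prime)=r-\sqrt{dist(o,m_{ij})^2+s^2}$ and $dist(x,p_h)=\sqrt{(s-s_0)^2+dist(p_h,l_t)^2}$ in the common arc-length parameter $s$, so concavity of the first and convexity of the second are one-line second-derivative checks, whereas the paper transfers the trends from the variables $\theta_x$ and $\alpha_x$ to the common metric $dist(m_{ij},x)$ only qualitatively via the curve sketching of Figure~\ref{fig:curve} --- the reparametrization concern you raise is real and is not explicitly addressed there. Two further small advantages of your formulation: it keeps the factor $cl$ consistently (in the paper's Equation~\ref{eq:three} the factor $cl$ is silently dropped, a slip that the argument with $cl$ retained, i.e.\ your $F$, repairs), and it does not actually need the monotonicity hypothesis on $dist(x,p_h)$, since convexity alone suffices, making your proof marginally more general than the statement requires.
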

\begin{proof}
(By contradiction) Assume to the contrary that there is a point $x
\in \overline{m_{ij}u_t}$ such that $CL(x,p_h) < cl$, i.e.,
$\frac{dist(x,x^\prime)}{dist(x,p_{h})} < cl$. Then we have the
following relations.
\begin{equation}
\label{eq:three}
\frac{dist(m_{ij},m_{ij}^\prime)-dist(x,x^\prime)}{dist(m_{ij},x)}>
\frac{dist(m_{ij},p_{h})-dist(x,p_{h})}{dist(m_{ij},x)}
\end{equation}
\begin{equation}
\label{eq:four}
\frac{dist(x,x^\prime)-dist(u_t,u_t^\prime)}{dist(x,u_t)}<
\frac{dist(x,p_{h})-dist(u_t,p_{h})}{dist(x,u_t)}
\end{equation}
Since we know that for $dist(x,x^\prime)$, the rate of decreasing
rate increases with the increase of $dist(m_{ij},x)$ and for
$dist(x,p_{h})$, the rate of decreasing rate decreases or remains
constant with the increase of $dist(m_{ij},x)$, we have the
following relations.
\begin{equation}
\label{eq:five}
\frac{dist(m_{ij},m_{ij}^\prime)-dist(x,x^\prime)}{dist(m_{ij},x)}<\frac{dist(x,x^\prime)-dist(u_t,u_t^\prime)}{dist(x,u_t)}
\end{equation}
\begin{equation}
\label{eq:six}
\frac{dist(m_{ij},p_{h})-dist(x,p_{h})}{dist(m_{ij},x)} \geq
\frac{dist(x,p_{h})-dist(u_t,p_{h})}{dist(x,u_t)}
\end{equation}
>From Equations~\ref{eq:three},~\ref{eq:four}, and~\ref{eq:five} we
have,
\begin{displaymath}
\frac{dist(m_{ij},p_{h})-dist(x,p_{h})}{dist(m_{ij},x)}<\frac{dist(x,p_{h})-dist(u_t,p_{h})}{dist(x,u_t)}
\end{displaymath}
which contradicts Equation~\ref{eq:six}, i.e., our assumption.
\end{proof}

Finally, from Lemmas~\ref{lemma:one} and~\ref{lemma:two}, we can
conclude that if $CL(c_t,p_h) \geq cl$ and $CL(m_{ij},p_h) \geq
cl$, then $CL(x,p_h) \geq cl$ for any point $x \in
\overline{m_{ij}c_t}$, which proves Theorem~\ref{th:1}.

\subsection{Algorithms}\label{sec:server_algo}

We develop an efficient algorithm, \emph{\textsc{Clappinq}}
(Confidence Level Aware Privacy Protection In Nearest Neighbor
Queries), that finds the $k$ NNs for an obfuscation rectangle with
a specified confidence level. Algorithm~\ref{algo:Rect-kNN} gives
the pseudo code for \textsc{Clappinq} using an $R$-tree. The input
to Algorithm~\ref{algo:Rect-kNN} are an obfuscation rectangle
$R_w$, a confidence level $cl$, and the number of NNs $k$ and the
output is the candidate answer set $P$ that includes the $k$ NNs
with a confidence level at least $cl$ for every point of $R_w$.
\dontprintsemicolon
\begin{algorithm}[htbp]
\label{algo:Rect-kNN} \caption{\textsc{Clappinq}($R,cl,k$)}
    $P\leftarrow\emptyset$\;
    $status \leftarrow$ 0\;
    $Enqueue(Q_p,root,0)$\;
    \While{$Q_p$ is not empty and $status \geq 0$}
    {
        $p \leftarrow Dequeue(Q_p)$\;
        $r \leftarrow MinDist(o,p)$\;
        \If{$status>0$ and $status<r$}
            {
                $status\leftarrow -1$\;
            }

        \uIf {$p$ is a data object}
        {
            $P\leftarrow P \cup p$\;
            \If{$status=0$}
            {
                $status \leftarrow UpdateStatus(R,cl,k,P,r)$\;
            }
        }
        \Else
        {

                \For{each child node $p_c$ of $p$}
                {
                    $d_{min}(p_c) \leftarrow MinDist(o,p_c)$\;
                    $Enqueue(Q_p,p_c,d_{min}(p_c))$\;
                }

        }
    }
    \Return $P$;
\end{algorithm}

As mentioned in Section~\ref{sec:conflevel}, the basic idea of our
algorithm is to start a best first search (BFS) considering the
center $o$ of the given obfuscation rectangle $R_w$ as the query
point and continue the search until the $k$ NNs with a confidence
level of at least $cl$ are found for all points of $R_w$. The
known region $C(o,r)$ is the search region covered by BFS and $P$
is the set of data objects located within $C(o,r)$. $Q_p$ is a
priority queue used to maintain the ordered data objects and
$R$-tree nodes based on the minimum distance between the query
point $o$ and the data objects/MBRs of $R$-tree nodes (by using
the function $MinDist$). Since the size of the candidate answer
set is unknown, we use $status$ to control the execution of the
BFS. Based on the values of $status$, the BFS can have three
states: (i) when $status=0$, each time the BFS discovers the next
nearest data object, it checks whether $status$ needs to be
updated, (ii) when $status>0$, the BFS executes until the radius
of the known region becomes greater than the value of $status$,
and (iii) when $status=-1$, the BFS terminates. Initially,
$status$ is set to 0. Each time a data object/$R$-tree node $p$ is
dequeued from $Q_p$ the current radius $r$ is updated. When $p$
represents a data object, then $p$ is added to the current
candidate set $P$ and the procedure $UpdateStatus$ is called if
$status$ equals $0$.

The pseudo code for $UpdateStatus$ is shown in
Algorithm~\ref{algo:UpdateStatus}. The notations used for this
algorithm are summarized below.
\begin{enumerate}
    \item $count(c_t,cl,P)$: the number of data objects in $P$ for which a corner point
    $c_t$ of $R_w$
has a confidence level at least $cl$.
    \item $d_i^k$ $(d_j^k)$: the $k^{th}$ minimum distance from a middle point
    $m_{ij}$ of $R_w$
    to the data objects in $P_i$ $(P_j)$, where $P_i$ $(P_j) \subset
    P$ and $P_i$ $(P_j)$ is the set of data objects with
    respect to $c_i$ $(c_j)$ with a confidence level of at least
    $cl$.
    \item $d_{max}$: the maximum of all $d_{max}(m_{ij})$,
    where each $d_{max}(m_{ij})$ is the maximum of $d_i^k$ and $d_j^k$
    (see Figure~\ref{fig:algo2b}(a)).
    \item $d_{safe}$: the minimum distance of all $d_{safe}(m_{ij})$,
where $d_{safe}(m_{ij})$ represents the radius of the maximum
circular region within $C(o,r)$ centered at $m_{ij}$ (see
    Figure~\ref{fig:algo2b}(b)).
\end{enumerate}
\begin{figure}[htbp]
    \centering
        \includegraphics[width=4in]{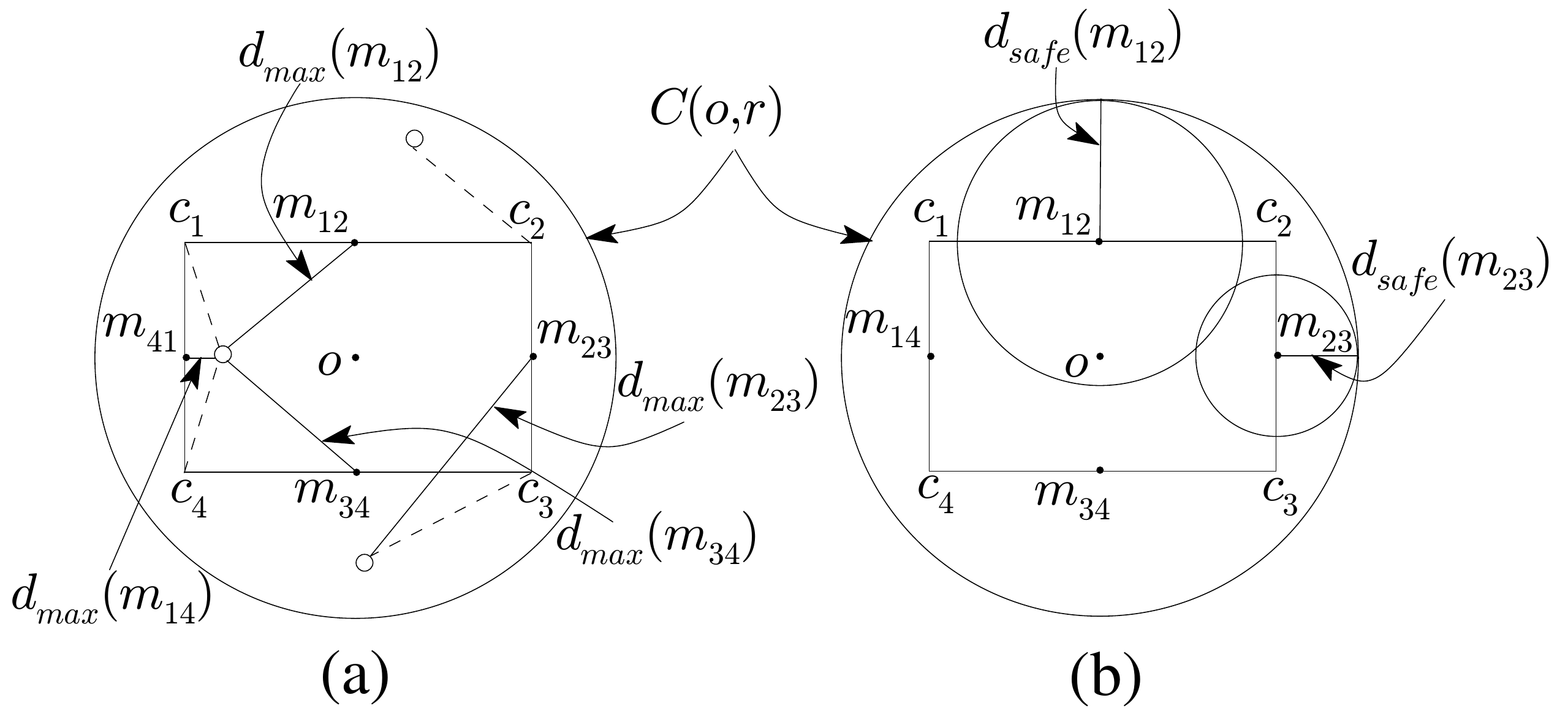}
    \caption{(a) $d_{max}=d_{max}(m_{23})$ and (b) $d_{safe}=d_{safe}(m_{23})$}
    \label{fig:algo2b}
\end{figure}

$UpdateStatus$ first updates $count(c_t,cl,P)$ using the function
$UpdateCount$. For each $p\in P$, $UpdateCount$ increments
$count(c_t,cl,P)$ by one if $CL(c_t,p)>=cl$. Note that corner
points of $R_w$ can have more than $k$ data objects with
confidence level at least $cl$ because the increase of $r$ for a
corner point of $R_w$ can make other corner points to have more
than $k$ data objects with a confidence level at least $cl$. In
the next step if $count(c_t,cl,P)$ is less than $k$ for any corner
point $c_t$ of $R_w$, $UpdateStatus$ returns the control to
Algorithm~\ref{algo:Rect-kNN} without changing $status$.
Otherwise, it computes the radius of the required known region for
ensuring the $k$ NNs with respect to $R_w$ and $cl$ (Lines
3.5-3.16). For each $m_{ij}$, $UpdateStatus$ first computes
$d_i^k$ and $d_j^k$ with the function $K_{min}$ and takes the
maximum of $d_i^k$ and $d_j^k$ as $d_{max}(m_{ij})$. Then
$UpdateStatus$ finds $d_{max}$ (Lines 3.10-3.11) and $d_{safe}$
(Line 3.12). Finally, $UpdateStatus$ checks if the size of the
current $C(o,r)$ is already equal or greater than the required
size. If this is the case then the algorithm returns $status$ as
-1, otherwise the value of the radius for the required known
region. After the call of $UpdateStatus$, \textsc{Clappinq}
continues the $BFS$ if $status \geq 0$ and terminates if
$status=-1$. For $status$ greater than 0, each time a next nearest
data object/MBR is found, \textsc{Clappinq} updates $status$ to
$-1$ if $r$ becomes greater than $status$ (Lines 2.7-2.8).

\begin{algorithm}[htbp]
\label{algo:UpdateStatus} \caption{UpdateStatus($R,cl,k,P,r$)}
    $UpdateCount(R,cl,k,P,r,count)$\;
    \uIf {$count(c_t,cl,P)\neq k$, for any corner point $c_t \in R$}
    {
        \Return 0\;
    }
    \Else
    {
        $d_{max} \leftarrow 0$\;
        \For{each middle point $m_{i,j}$}
        {
            $d_i^k\leftarrow K_{min}(m_{ij},c_i,cl,k,P)$\;
            $d_j^k\leftarrow K_{min}(m_{ij},c_j,cl,k,P)$\;
            $d_{max}(m_{ij})\leftarrow \max\{d_i^k,d_j^k\}$\;
            \If{$d_{max}(m_{ij})>d_{max}$} {$d_{max} \leftarrow d_{max}(m_{ij})$\;}
        }
        $d_{safe}\leftarrow
        r-\frac{1}{2}\times \max\{|\overline{c_1c_2}|,|\overline{c_2c_3}|\}$\;
        \uIf{$cl \times d_{max} > d_{safe}$}
        {
            \Return $(r+cl \times d_{max}-d_{safe})$\;
        }
        \Else
        {
            \Return $-1$\;
        }
    }
\end{algorithm}

In summary, \textsc{Clappinq} works in three steps. In step 1, it
runs the BFS from $o$ until it finds the $k$ NNs with a confidence
level of at least $cl$ for all corner points of $R_w$. In step 2,
from the current set of data objects it computes the radius of the
required known region to confirm that the answer set includes the
$k$ NNs with a confidence level of at least $cl$ with respect to
all points of $R_w$. Finally, in step 3, it continues to run the
BFS until the radius of the current known region is equal to the
required size.

\begin{figure}[htbp]
    \hspace{-6mm}
    \centering
        \includegraphics[width=4in]{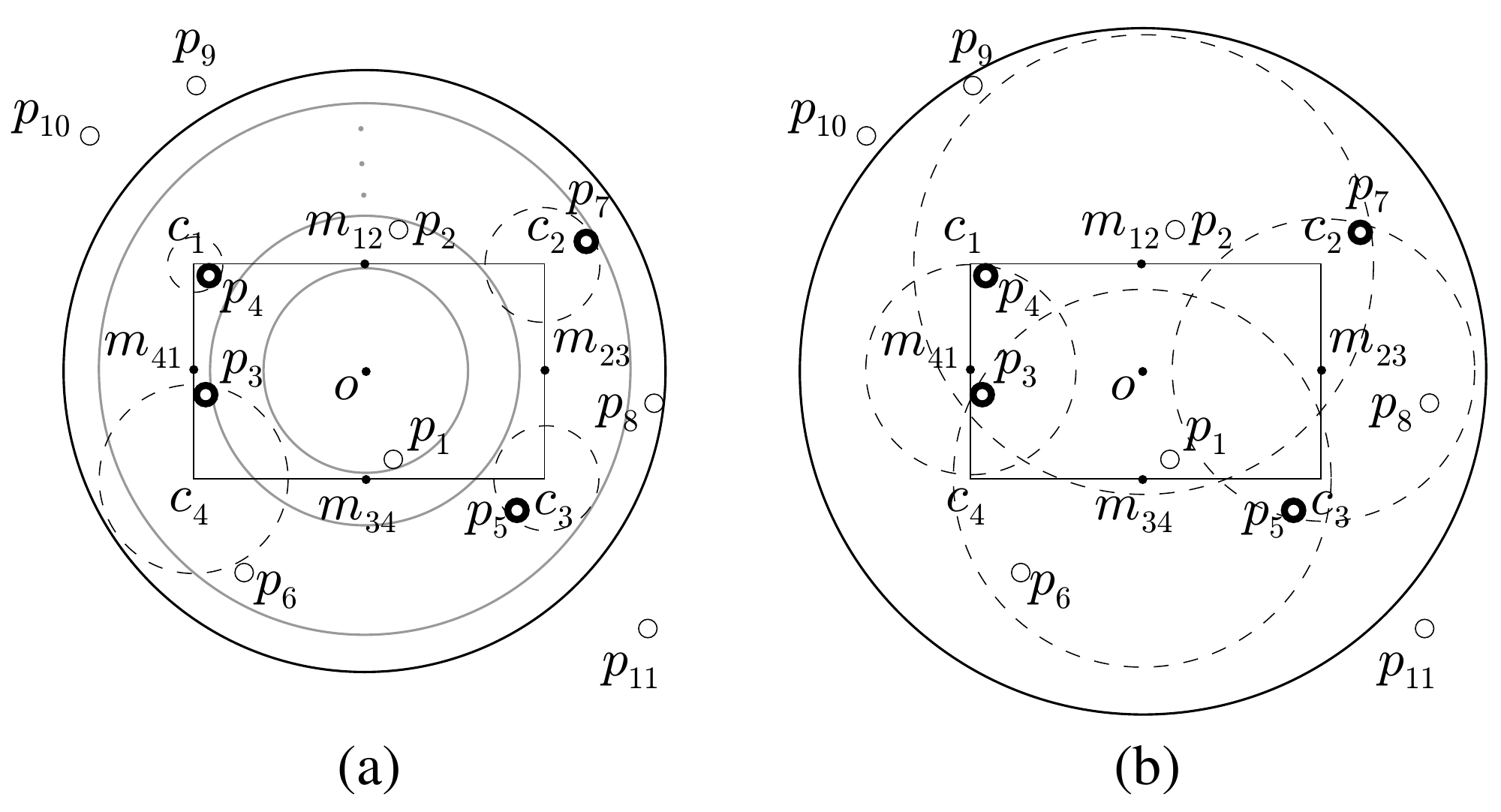}
    \caption{Steps of \textsc{Clappinq}: an example for $k=1$ and $cl=1$}
    \label{fig:algo_cknn}
\end{figure}

Figure~\ref{fig:algo_cknn} shows an example of the execution of
\textsc{Clappinq} for $k=1$ and $cl=1$. Data objects are labeled
in order of the increasing distance from $o$. \textsc{Clappinq}
starts its search from $o$ and continues until the NNs with
respect to four corner points are found as shown in
Figure~\ref{fig:algo_cknn}(a). The circles with ash border show
the continuous expanding of the known region and the circle with
black border represents the current known region. The data objects
$p_4$, $p_7$, $p_5$, and $p_3$ are the NNs with $cl=1$ from $c_1$,
$c_2$, $c_3$, and $c_4$, respectively because the four circles
with a dashed border are completely within the known region. In
the next step, the algorithm finds the maximum of $d_i^k$ and
$d_j^k$ for each $m_{ij}$. The distances $d_2^1$
(=$dist(m_{12},p_7)$), $d_2^1$ (=$dist(m_{12},p_7)$) (or $d_3^1$
(=$dist(m_{23},p_5)$)), $d_4^1$ (=$dist(m_{34},p_3)$), $d_1^1$
(=$dist(m_{41},p_4)$) are the maximum with respect to $m_{12}$,
$m_{23}$, $m_{34}$, and $m_{41}$, respectively. Finally,
\textsc{Clappinq} expands the search so that the four circles with
dashed border centered at $m_{12}$, $m_{23}$, $m_{34}$, and
$m_{41}$ and having radius $d_2^1$, $d_2^1$ (or $d_3^1$), $d_4^1$,
and $d_1^1$, respectively, are included in the known region (see
Figure~\ref{fig:algo_cknn}(b)). Therefore, the search stops when
$p_9$ is discovered and $P$ includes $p_1$ to $p_9$.


The following theorem shows the correctness for \textsc{Clappinq}.

\begin{theorem}
\label{th:3} \textsc{Clappinq} returns $P$, a candidate set of
data objects that includes the $k$ NNs with a confidence level at
least $cl$ for every point of the obfuscation rectangle $R_w$.
\end{theorem}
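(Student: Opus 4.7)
The plan is to show that every point $q\in R_w$ has at least $k$ data objects from $P$ with confidence level at least $cl$; Lemma~\ref{lemma:6_zero} then immediately yields that the $k$ NNs of $q$ lie in $P$ and each has confidence at least $cl$, which is exactly the statement of the theorem. I will first extract from the termination condition of \textsc{Clappinq} two invariants: (a) for every corner $c_t$ of $R_w$, $\mathrm{count}(c_t,cl,P)\geq k$, and (b) $cl\cdot d_{max}\leq d_{safe}$. Invariant (a) holds because \textsc{UpdateStatus} only returns a non-zero value after the corner-count test on Line~3.2, and increasing $r$ can only grow the confidence at each corner. Invariant (b) holds because \textsc{Clappinq} keeps the best-first search running while $status>r$, and when $status$ was last set by Line~3.14, expanding $r$ to that value increases $d_{safe}$ to at least $cl\cdot d_{max}$ (noting that $d_{max}$ can only decrease as $P$ grows).

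Next I will argue the claim on the boundary of $R_w$. Fix an edge $\overline{c_ic_j}$ with middle point $m_{ij}$ and let $P_i\subseteq P$ be the data objects for which $c_i$ has confidence at least $cl$; by invariant (a), $|P_i|\geq k$. The $k$ closest data objects in $P_i$ from $m_{ij}$ lie within distance $d_i^k\leq d_{max}(m_{ij})\leq d_{safe}/cl\leq d_{safe}(m_{ij})/cl$ by invariant (b), so Definition~\ref{def:1} gives each of them confidence at least $cl$ at $m_{ij}$. Both endpoints $c_i$ and $m_{ij}$ therefore have confidence at least $cl$ for these $k$ data objects, so Theorem~\ref{th:1} transfers confidence at least $cl$ to every point of $\overline{c_im_{ij}}$. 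A symmetric argument using $P_j$ and $d_j^k$ covers $\overline{m_{ij}c_j}$, and repeating on all four edges sweeps the entire boundary of $R_w$.

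Finally, for an arbitrary interior point $q\in R_w$, I extend the ray from $o$ through $q$ until it meets the boundary at a point $c$. Every $p_h\in P$ satisfies $dist(o,p_h)\leq r$, and at the center $o$ the inner radius $r'$ of Definition~\ref{def:1} equals $r$; hence $CL(o,p_h)=1\geq cl$ for all $p_h\in P$. The previous paragraph gives $k$ data objects on which $c$ has confidence at least $cl$, and these lie in $P$, so $o$ also has confidence at least $cl$ for them. Theorem~\ref{th:2} then transfers confidence at least $cl$ to $q$ for each of the $k$, and Lemma~\ref{lemma:6_zero} closes the proof. The main obstacle I anticipate lies in the middle-point step: carefully verifying that the single scalar test $cl\cdot d_{max}\leq d_{safe}$ inside \textsc{UpdateStatus} really does encode the geometric statement that each middle point has $k$ data objects from the appropriate $P_i$ (or $P_j$) of confidence at least $cl$; once this is pinned down, the remaining two stages are direct applications of Theorems~\ref{th:1} and~\ref{th:2}.
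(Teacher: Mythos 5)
Your proposal is correct and follows essentially the same route as the paper's proof: the corner condition and the $d_{max}$/$d_{safe}$ termination test give $k$ data objects with confidence at least $cl$ at each corner and middle point, Theorem~\ref{th:1} extends this along each half-edge, Theorem~\ref{th:2} together with the center (which has confidence $1$ for all of $P$) covers the interior, and Lemma~\ref{lemma:6_zero} converts ``$k$ objects with confidence $\geq cl$'' into ``the $k$ NNs with confidence $\geq cl$''. The only difference is that you make explicit why the scalar test $cl\cdot d_{max}\leq d_{safe}$ (and the continued expansion when it fails) yields $cl\cdot d_i^k\leq d_{safe}(m_{ij})$ at every middle point, a step the paper's proof asserts without derivation.
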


\begin{proof}
\textsc{Clappinq} expands the known region $C(o,r)$ from the
center $o$ of the obfuscation rectangle $R_w$ until it finds the
$k$ NNs with a confidence level at least $cl$ for all corner
points of $R_w$. Then it extends $C(o,r)$ to ensure that the
confidence level of each middle point $m_{ij}$ is at least $cl$
for both sets of $k$ nearest data objects for which $c_i$ and
$c_j$ have a confidence level at least $cl$. According to
Theorem~\ref{th:1}, this ensures that any point in
$\overline{m_{ij}c_i}$ or $\overline{m_{ij}c_j}$ has a confidence
level at least $cl$ for $k$ data objects. Again from
Lemma~\ref{lemma:6_zero}, we know that if a point has $k$ data
objects with a confidence level at least $cl$ then it also has a
confidence level at least $cl$ for its $k$ NNs. Thus, $P$ contains
the $k$ NNs with a confidence level at least $cl$ for all points
of the border of $R_w$.

To complete the proof, next we need to show that $P$ also contains
the $k$ nearest data objects with a confidence level at least $cl$
for all points inside $R_w$. The confidence level of the center
$o$ of $R_w$ for a data object $p_h$ within the known region
$C(o,r)$ is always $1$ because $C(o,r)$ is expanded from $o$ and
we have $dist(o,p_h) \leq r$. Since we have already shown that $P$
includes the $k$ NNs with a confidence level at least $cl$ for all
points of the border of $R_w$, according to Theorem~\ref{th:1} and
Lemma~\ref{lemma:6_zero}, $P$ also includes the $k$ NNs with a
confidence level at least $cl$ for all points inside $R_w$.
\end{proof}

We have proposed the fundamental algorithm and there are many
possible optimizations of it. For example,
one optimization could merge overlapping obfuscation rectangles
requested by different users at the same time, which will also
avoid redundant computation. Another optimization could exploit that $R_w$ and $R_{w+1}$ may have many
overlapping NNs. However, the focus of this paper is protecting
trajectory privacy of users while answering M$k$NN queries, and
exploring all possible optimizations of the algorithm is beyond
the scope of this paper.

\section{Experiments}\label{sec:exp}

In this section, we present an extensive experimental evaluation
of our proposed approach. In our experiments, we use both
synthetic and real data sets. Our two synthetic data sets are
generated from uniform (U) and Zipfian (Z) distribution,
respectively. The synthetic data sets contain locations of 20,000
data objects and the real data set contains 62,556 postal
addresses from California. These data objects are indexed using an
$R^{\ast}$-tree~\cite{beckmann90.SIGMOD} on a server (the LSP). We
run all of our experiments on a desktop with a Pentium 2.40 GHz
CPU and 2 GByte RAM.

In Section~\ref{sec:exp_static}, we evaluate the efficiency of our
proposed algorithm, \textsc{Clappinq}, to find $k$ NNs with a
specified confidence level for an obfuscation rectangle. We
measure the query evaluation time, I/Os, and the candidate answer
set size as the performance metric. In Section~\ref{sec:exp_cont},
we evaluate the effectiveness of our technique for preserving
trajectory privacy for M$k$NN queries.


\begin{table}[htbp]
  \centering
\begin{tabular}{|c|c|c|}
  \hline
  Parameter& Range& Default\\
  \hline
  Obfuscation rectangle area & 0.001\% to 0.01\% & 0.005\% \\
  \hline
  Obfuscation rectangle ratio & 1, 2, 4, 8 & 1 \\
  \hline
  Specified confidence level $cl$ & 0.5 to 1 & 1 \\
  \hline
  Specified number of NNs $k$ & 1 to 20 & 1 \\
  \hline
  Synthetic data set size & 5K, 10K, 15K, 20K & 20K \\
  \hline
\end{tabular}
\caption{Experimental Setup} \label{table:exp}
\end{table}

\subsection{$k$NN queries with respect to an obfuscation rectangle}\label{sec:exp_static}




There is no existing algorithm to process a PM$k$NN query. An
essential component of our approach for a PM$k$NN query is an
algorithm to evaluate a $k$NN query with respect to an obfuscation
rectangle. In this set of experiments we compare our proposed
$k$NN algorithm, \textsc{Clappinq}, with
Casper~\cite{mohamed06.VLDB}, because Casper is the only existing
related algorithm that can be adapted to process a PM$k$NN query;
further, even if we adapt it can only support $k=1$. To be more
specific, our privacy aware approach for M$k$NN queries needs an
algorithm that returns the \emph{known region} in addition to the
set of $k$ NNs with respect to an obfuscation rectangle. Among all
existing $k$NN
algorithms~\cite{chow09.SSTD,chow09.TODS,Hu06.TKDE,kalnis07.TKDE,mohamed06.VLDB,Xu09.TPDS}
only Casper supports the known region and if Casper were as
efficient as \textsc{Clappinq}, then we could extend Casper for
PM$k$NN queries for the restricted case $k=1$.

We set the data space as 10,000 $\times$ 10,000 square units. For
each set of experiments in this section, we generate $1000$ random
obfuscation rectangles of a specified area, which are uniformly
distributed in the total data space. We evaluate a $k$NN query
with respect to 1000 obfuscation rectangles and measure the
average performance with respect to a single obfuscation rectangle
for Casper and \textsc{Clappinq} in terms of the query evaluation
time, the number of page accesses, i.e., I/Os, and the candidate
answer set size. The page size is set to 1 KB which corresponds to
a node capacity of 50 entries.

Note that, in our experiments, the communication amount (i.e., the
answer set size) represents the communication cost independent of
communication link (e.g., wireless LANs, cellular link) used. The
communication delay can be approximated from the known latency of
the communication link. In our technique, sometimes the answer set
size may become large to satisfy the user's privacy requirement.
Though the large answer set size may result in a communication
delay, nowadays this should not be a problem. The latency of
wireless links has been significantly reduced, for example HSPA+
offers a latency as low as 10ms. Furthermore, our analysis
represents the communication delay scenario in the worst case. In
practice, the latency of first packet is higher than the
subsequent packets and thus, the communication delay does not
increase linearly with the increase of the answer set size.
\begin{figure}[htbp]
  \begin{center}
    \begin{tabular}{cc}
      \resizebox{45mm}{!}{\includegraphics{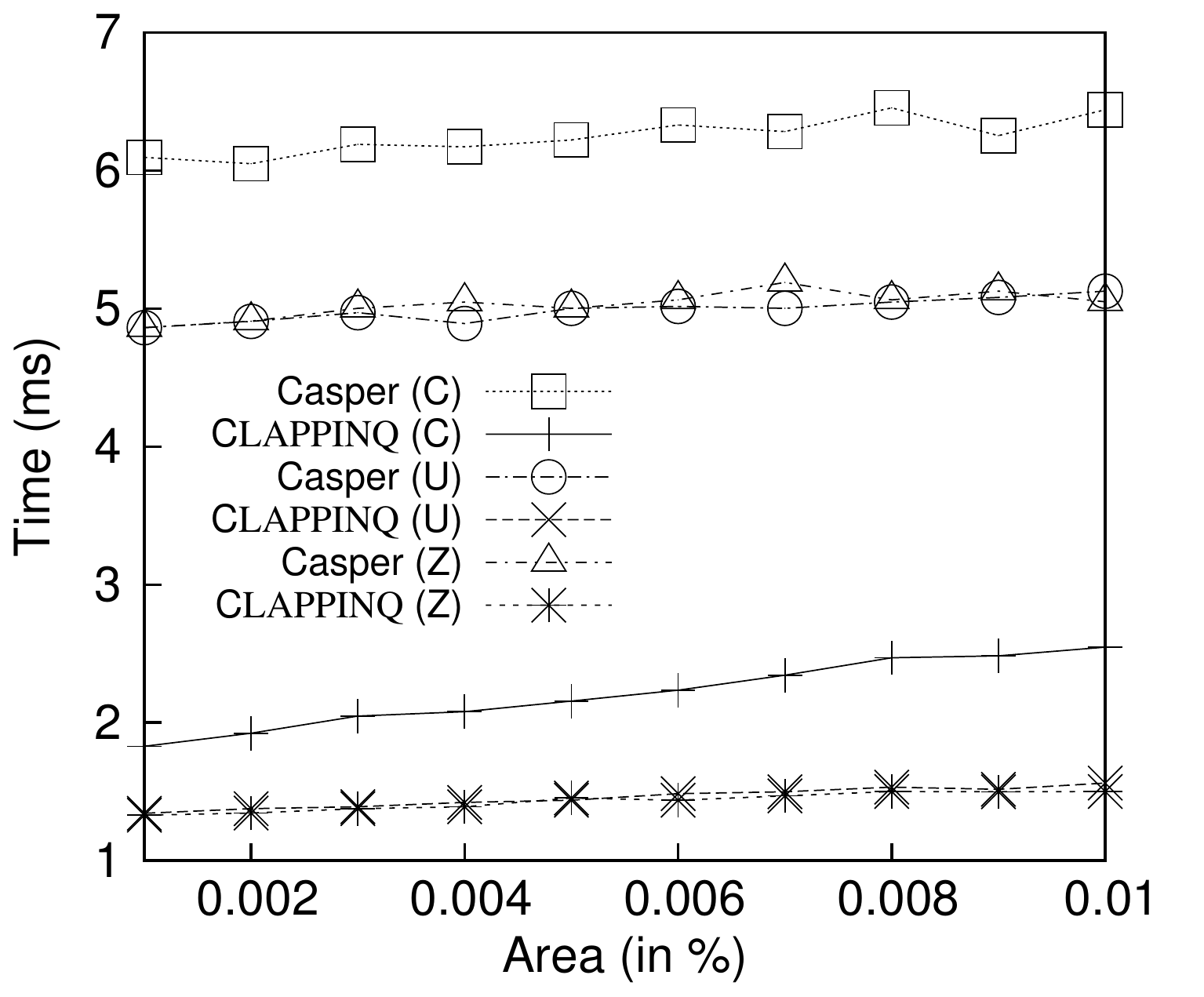}} &
      \resizebox{43mm}{!}{\includegraphics{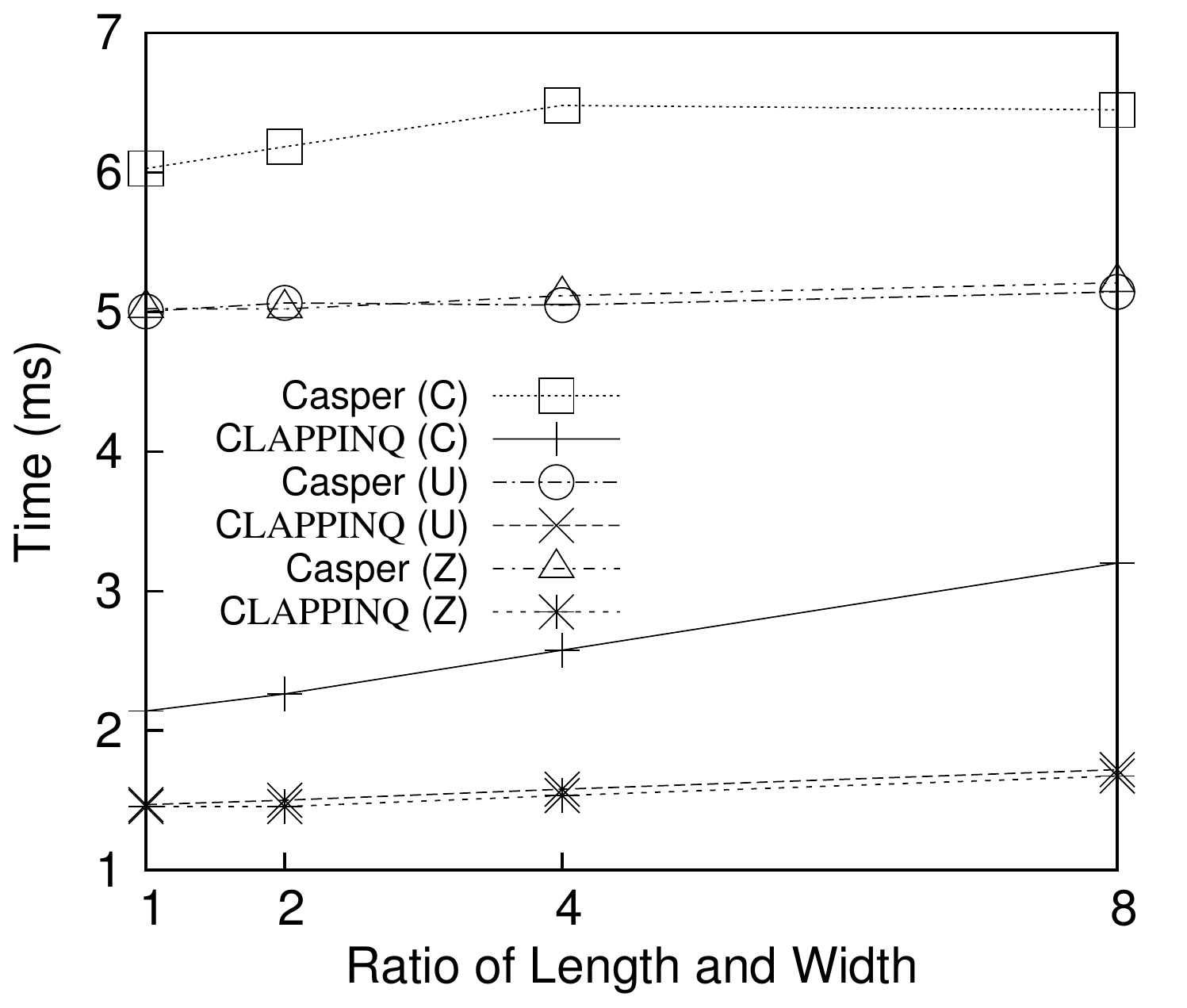}} \\
      \scriptsize{(a)\hspace{0mm}} & \scriptsize{(b)}\\
        \resizebox{45mm}{!}{\includegraphics{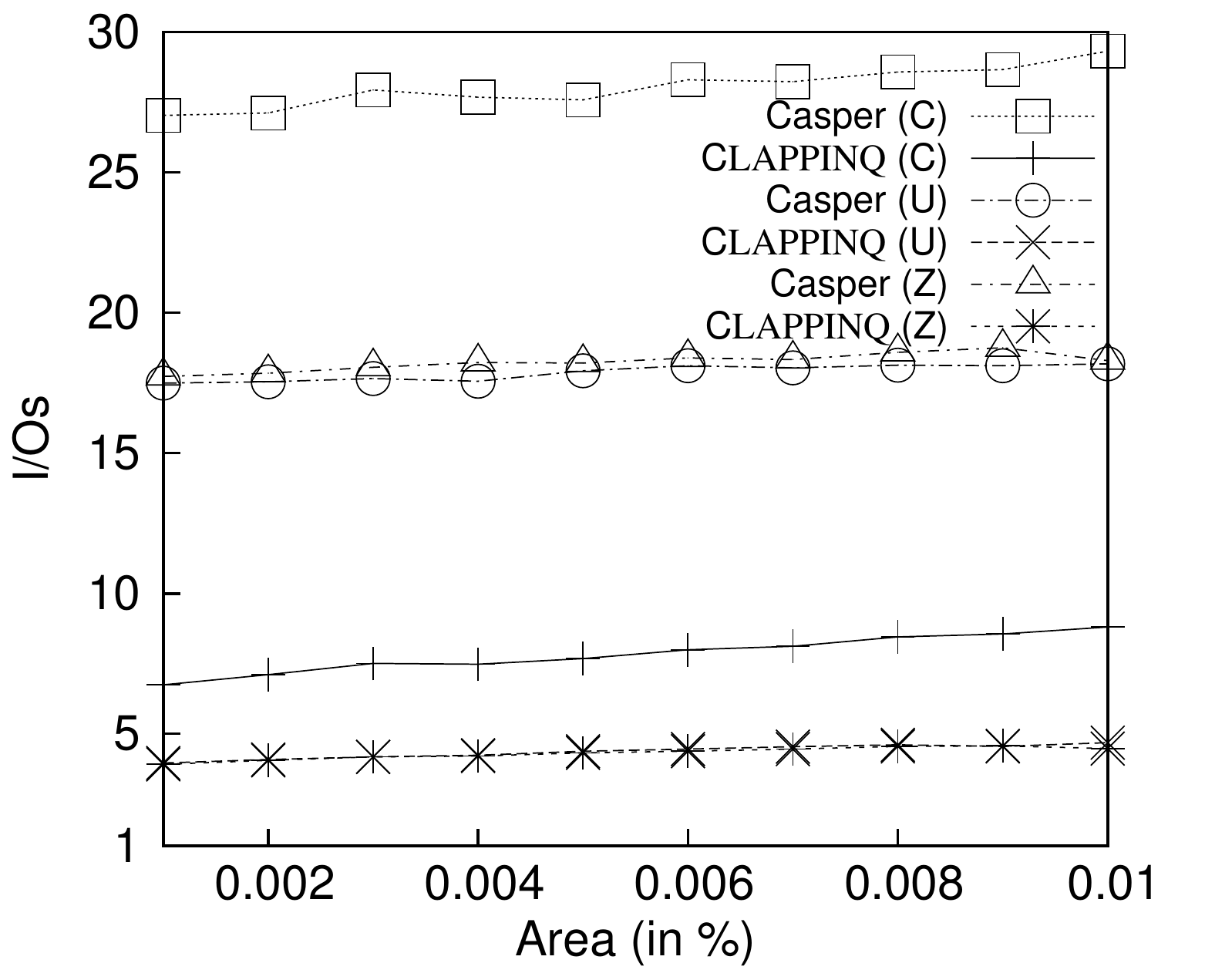}} &
      \resizebox{43mm}{!}{\includegraphics{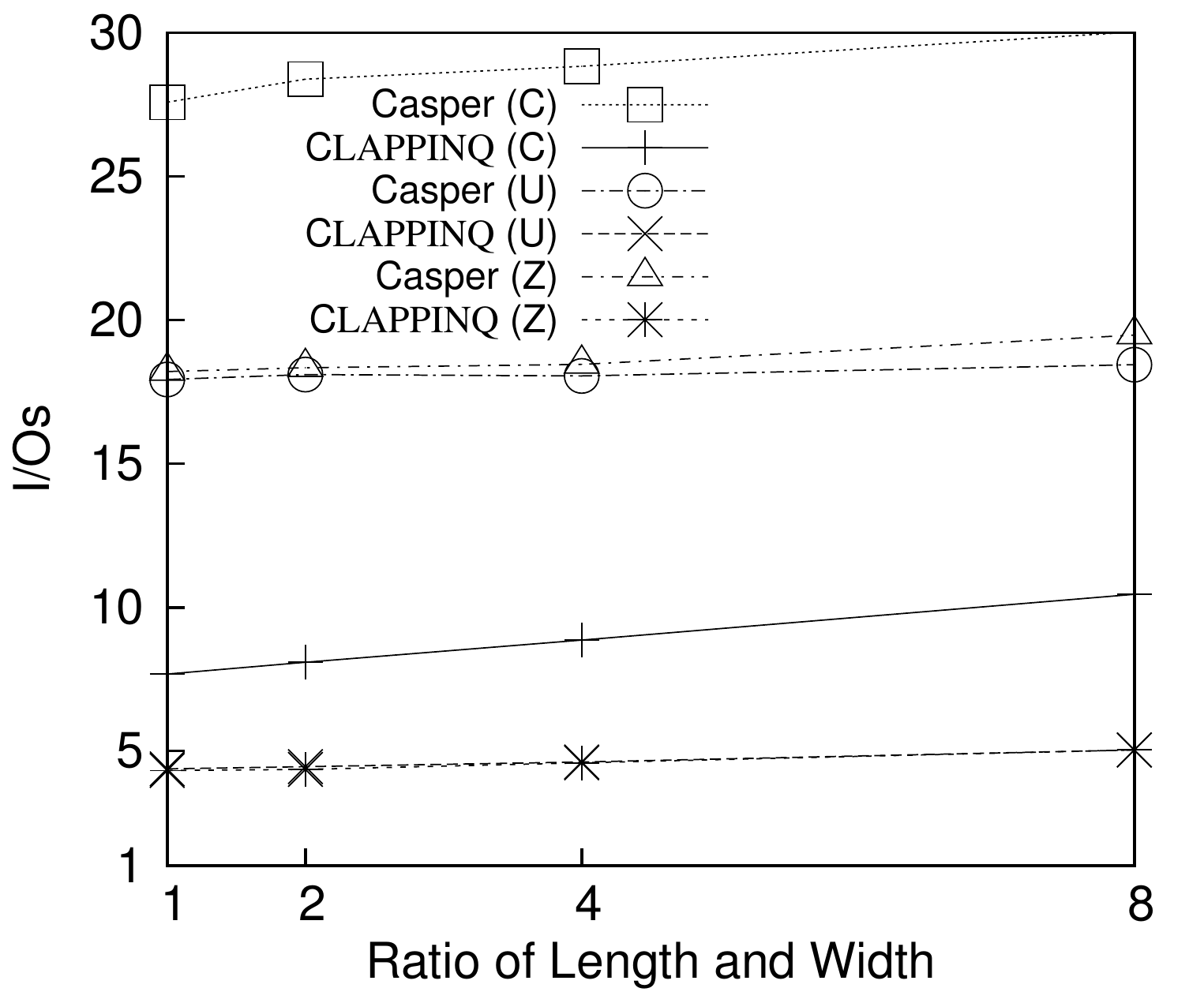}} \\
      \scriptsize{(c)\hspace{0mm}} & \scriptsize{(d)}\\
      \resizebox{45mm}{!}{\includegraphics{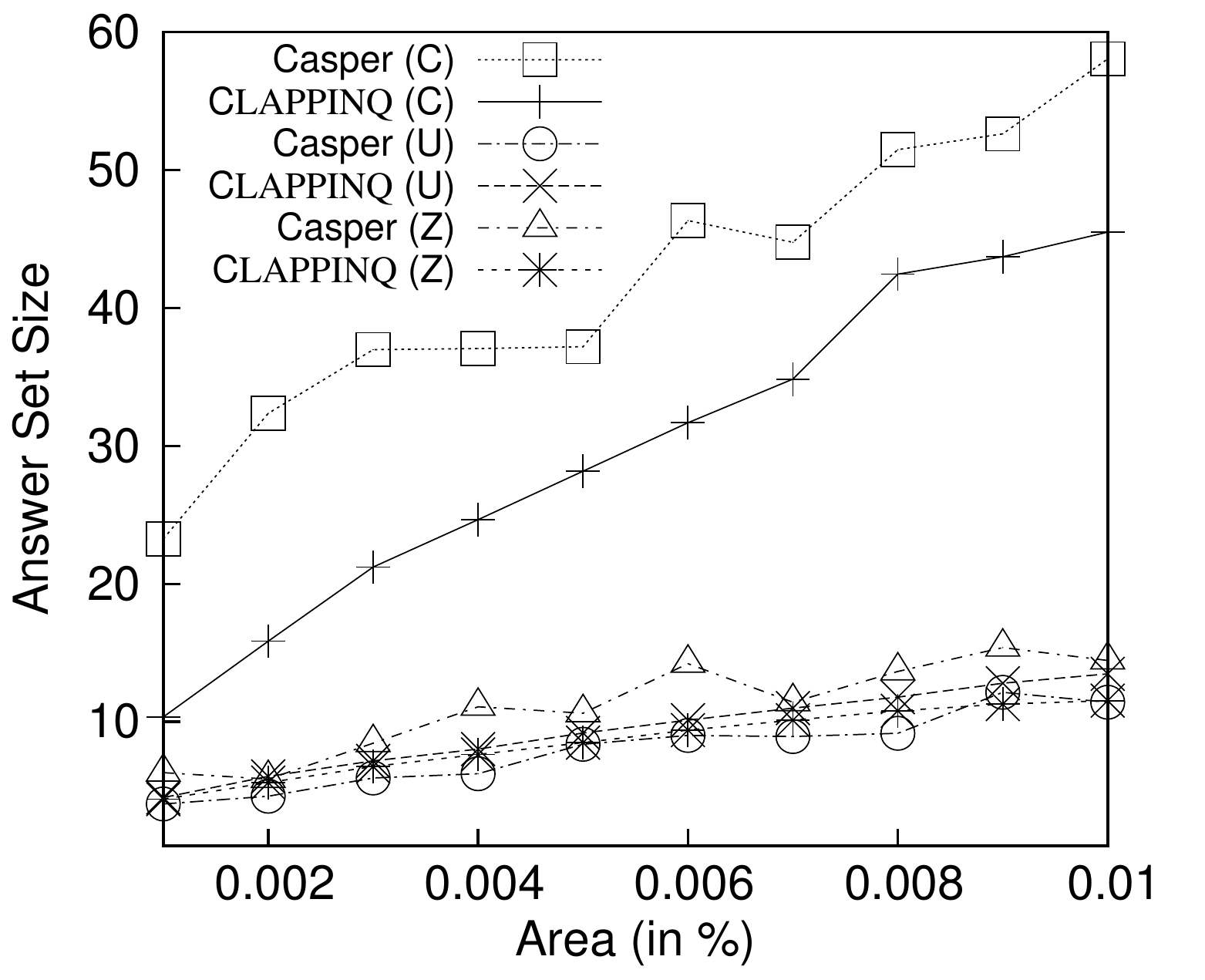}} &
      \resizebox{43mm}{!}{\includegraphics{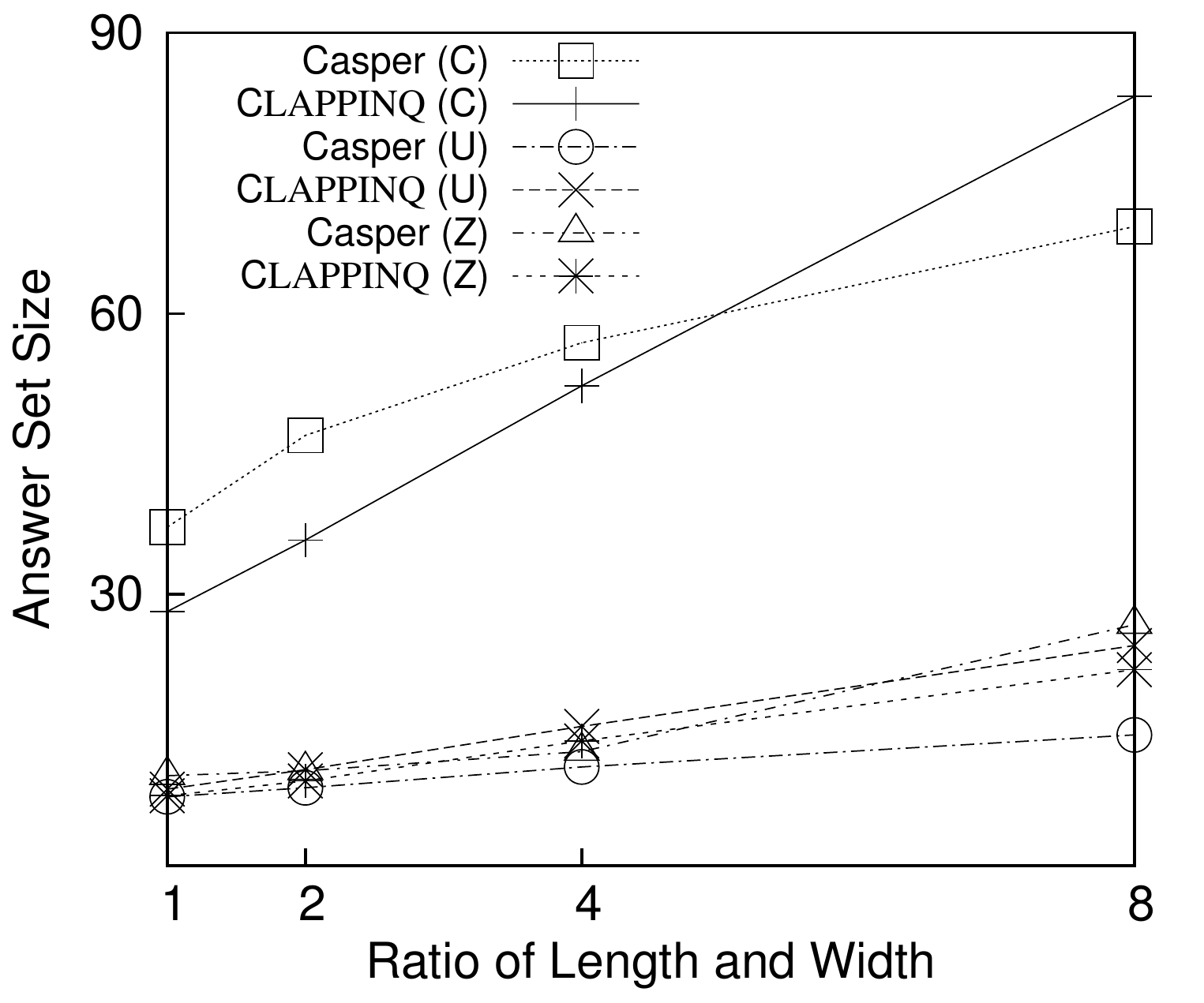}} \\
       \scriptsize{(e)\hspace{0mm}} & \scriptsize{(f)}
        \end{tabular}
    \caption{The effect of obfuscation rectangle area and ratio}
    \label{fig:graph1}
  \end{center}
\end{figure}
In different sets of experiments, we vary the following
parameters: the area of the obfuscation rectangle, the ratio of
the length and width of the obfuscation rectangle, the specified
confidence level, the specified number of NNs and the synthetic
data set size. Table~\ref{table:exp} shows the range and default
value for each of these parameters. We set 0.005\% of the total
data space as the default area for the obfuscation rectangle,
since it reflects a small suburb in California (about $20$ km$^2$
with respect to the total area of California) and is sufficient to
protect privacy of a user's location. The thinner an obfuscation
rectangle, the higher the probability to identify a user's
trajectory~\cite{Duckham05.COSIT}. Hence, we set 1 as a default
value for the ratio of the obfuscation rectangle to ensure the
privacy of the user. The original approach of Casper does not have
the concept of confidence level and only addresses 1NN queries. To
compare our approach with Casper, we set the default value in
\textsc{Clappinq} for $k$ and the confidence level as 1.

In Sections~\ref{sec:static_comp} and~\ref{sec:static_dataset}, we
evaluate and compare \textsc{Clappinq} with Casper. In
Section~\ref{sec:static_k}, we study the impact of $k$ and the
confidence level only for \textsc{Clappinq} as Casper cannot be
directly applied for $k>1$ and has no concept of a confidence
level.




\subsubsection{The effect of obfuscation rectangle area}
\label{sec:static_comp}

In this set of experiments, we vary the area of obfuscation
rectangle from 0.001\% to 0.01\% of the total data space. A larger
obfuscation rectangle represents a more imprecise location of the
user and thus ensures a higher level of privacy. We also vary the
obfuscation rectangle ratio as 1,2,4, and 8. A smaller ratio of
the width and length of the obfuscation rectangle provides the
user with a higher level of privacy.

Figures~\ref{fig:graph1}(a) and~\ref{fig:graph1}(b) show that
\textsc{Clappinq} is on an average 3times faster than Casper for
all data sets. The I/Os are also at least 3 times less than Casper
(Figures~\ref{fig:graph1}(c) and~\ref{fig:graph1}(d)). The
difference between the answer set size for \textsc{Clappinq} and
Casper is not prominent. However, in most of the cases
\textsc{Clappinq} results in a smaller answer set compared with
that of Casper (Figures~\ref{fig:graph1}(e)
and~\ref{fig:graph1}(f)). We also observe that the performance is
better when the obfuscation rectangle is a square and it continues
to degrade for a larger ratio in both \textsc{Clappinq} and Casper
(Figures~\ref{fig:graph1}(b),~\ref{fig:graph1}(d),
and~\ref{fig:graph1}(f)).

%


\begin{figure*}[htbp]
  \begin{center}
    \begin{tabular}{ccc}
        \hspace{-5mm}
      \resizebox{43mm}{!}{\includegraphics{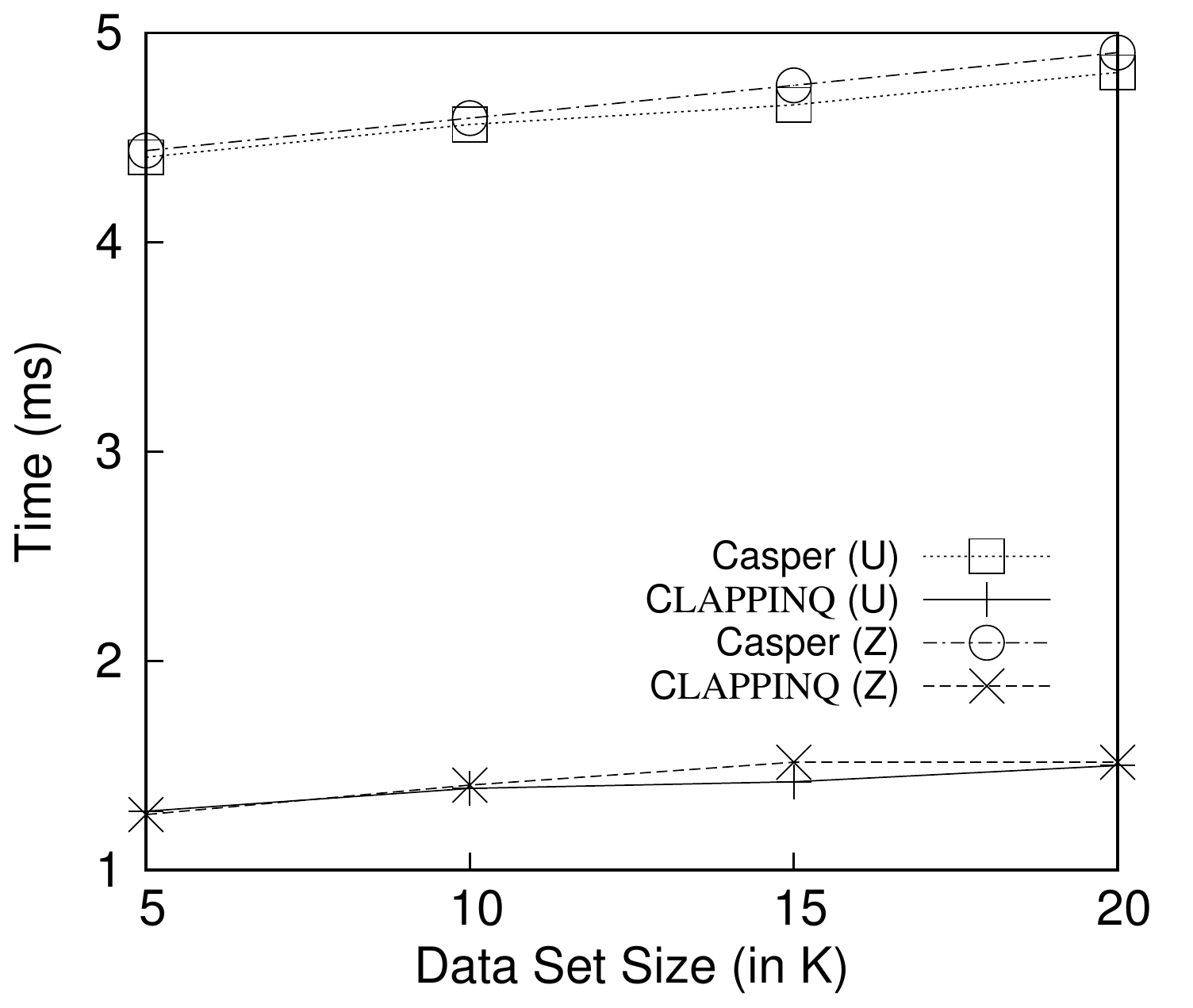}} &
        \hspace{-5mm}
       \resizebox{45mm}{!}{\includegraphics{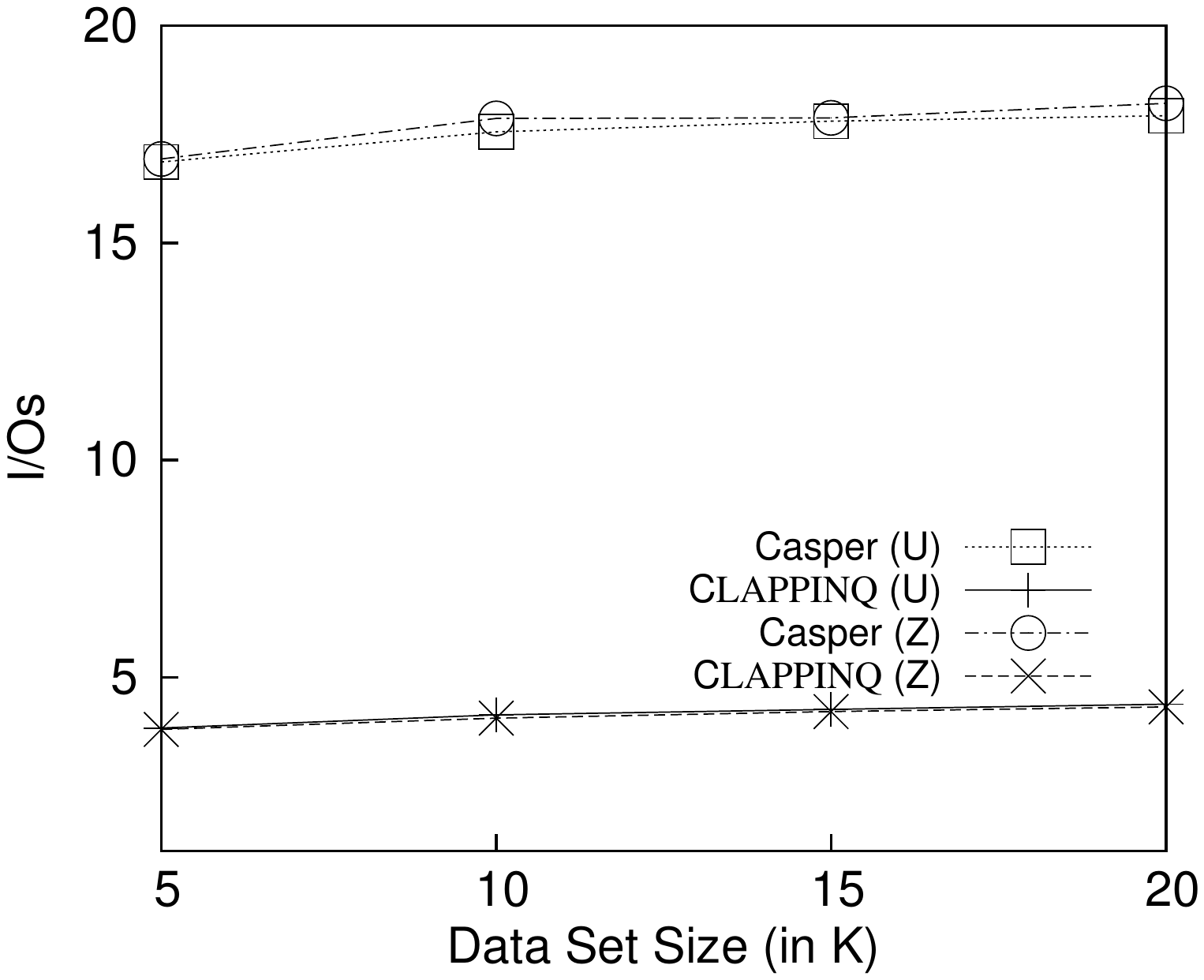}} &
        \hspace{-5mm}
      \resizebox{43mm}{!}{\includegraphics{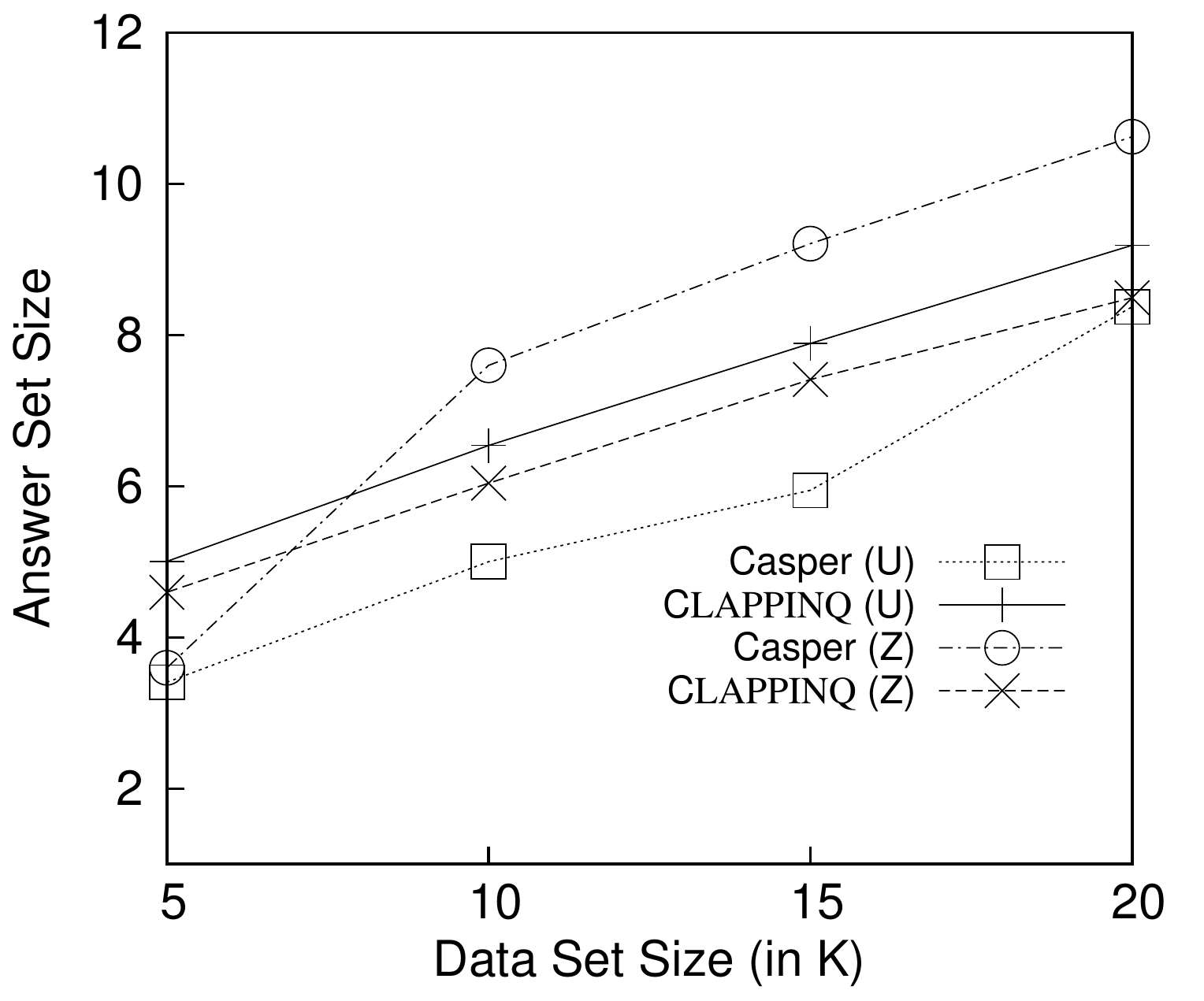}} \\
       \scriptsize{(a)\hspace{0mm}} & \scriptsize{(b)} & \scriptsize{(c)}
        \end{tabular}
    \caption{The effect of data set size}
    \label{fig:graph3}
  \end{center}
\end{figure*}

\subsubsection{The effect of the data set size}
\label{sec:static_dataset}

We vary the size of the synthetic data set as 5K, 10K, 15K and
20K, and observe that \textsc{Clappinq} is significantly faster
than that of Casper for any size of data set.
Figure~\ref{fig:graph3} shows the results for the query evaluation
time, I/Os and the answer set size. We find that \textsc{Clappinq}
is at least 3 times faster and the I/Os of \textsc{Clappinq} is at
least 4 times less than that of Casper. The time, the I/Os and the
answer set size slowly increases with the increase of data set
size.

\begin{figure*}[htbp]
  \begin{center}
    \begin{tabular}{ccc}
        \hspace{-5mm}
      \resizebox{43mm}{!}{\includegraphics{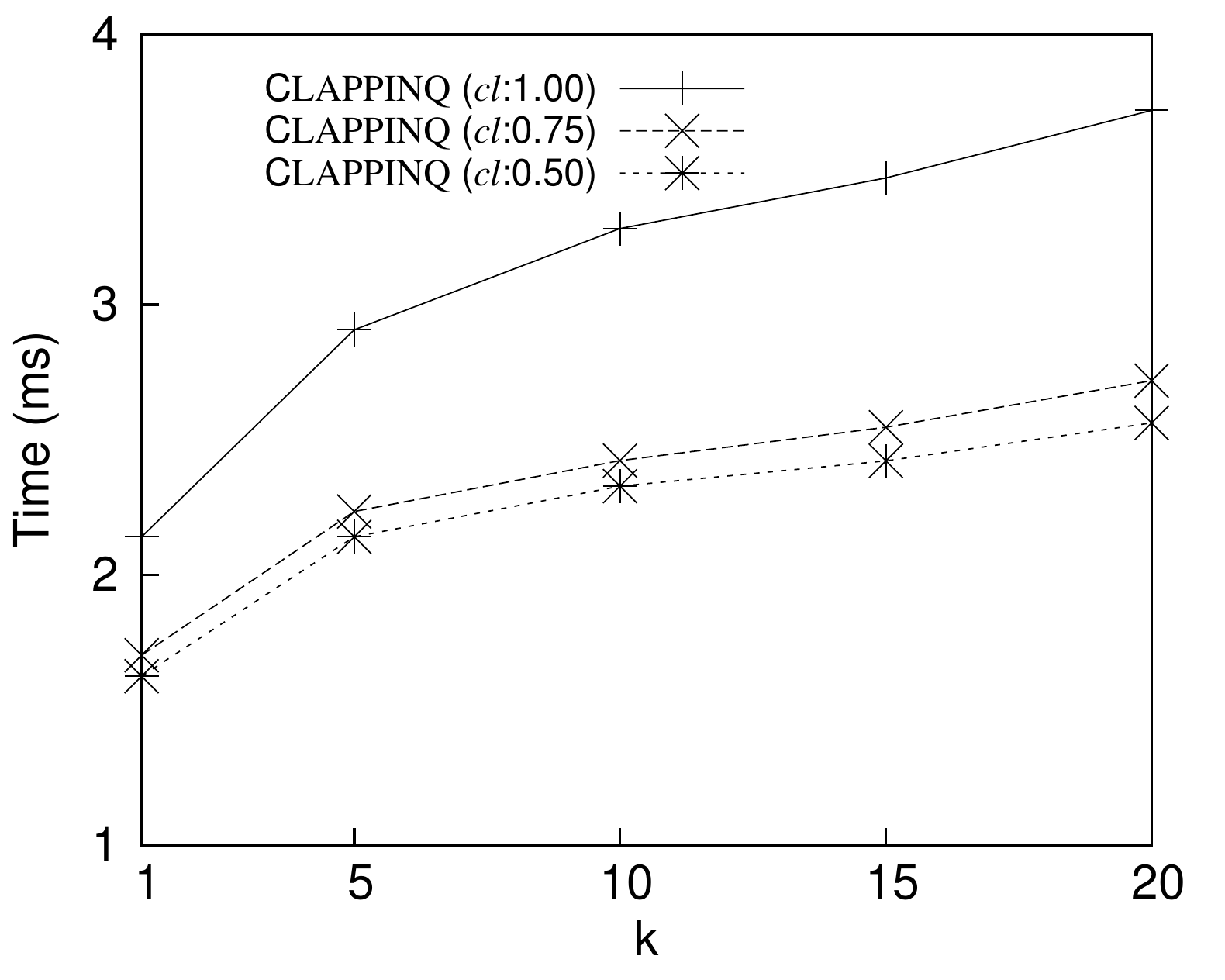}} &
        \hspace{-5mm}
       \resizebox{43mm}{!}{\includegraphics{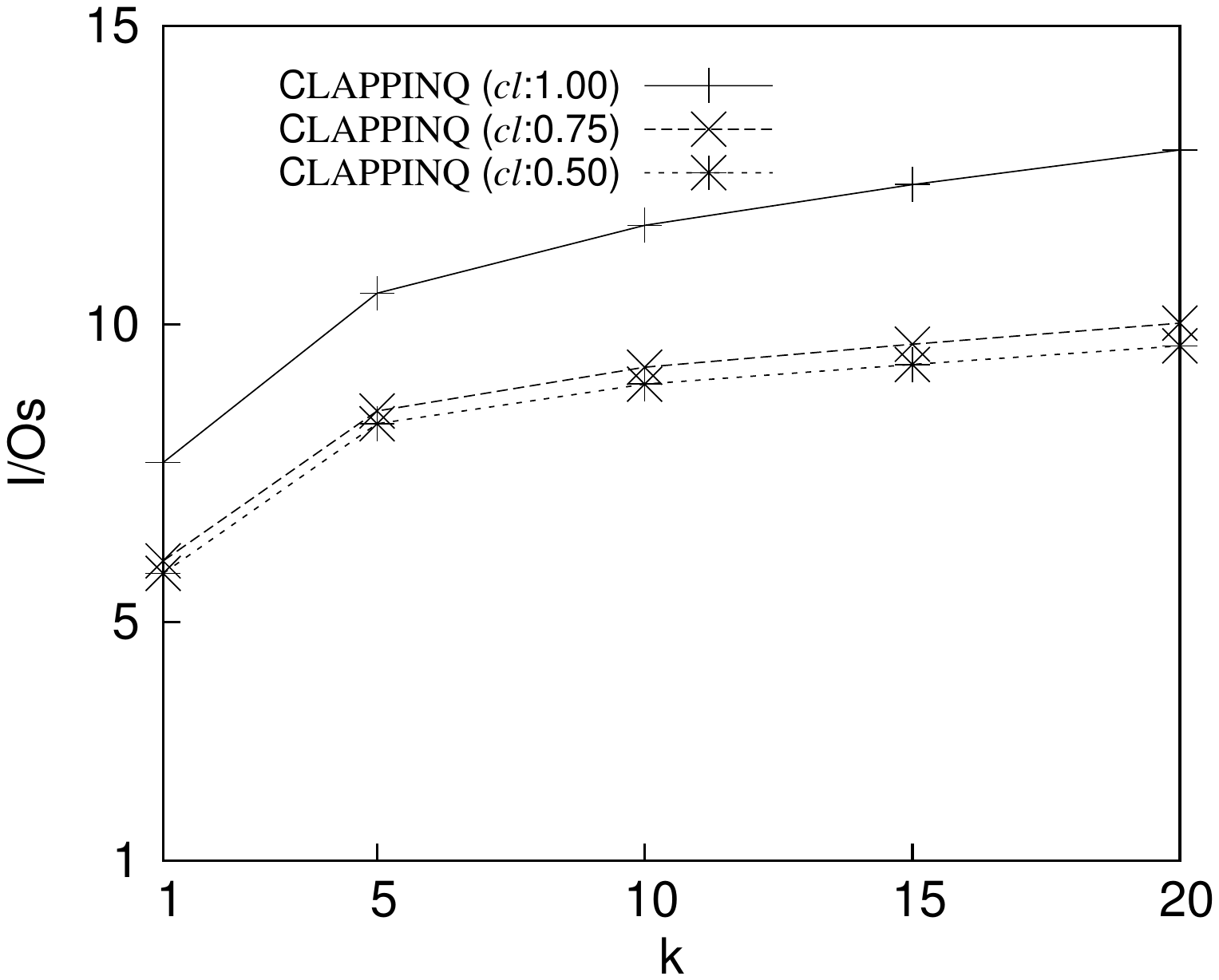}} &
        \hspace{-5mm}
      \resizebox{43mm}{!}{\includegraphics{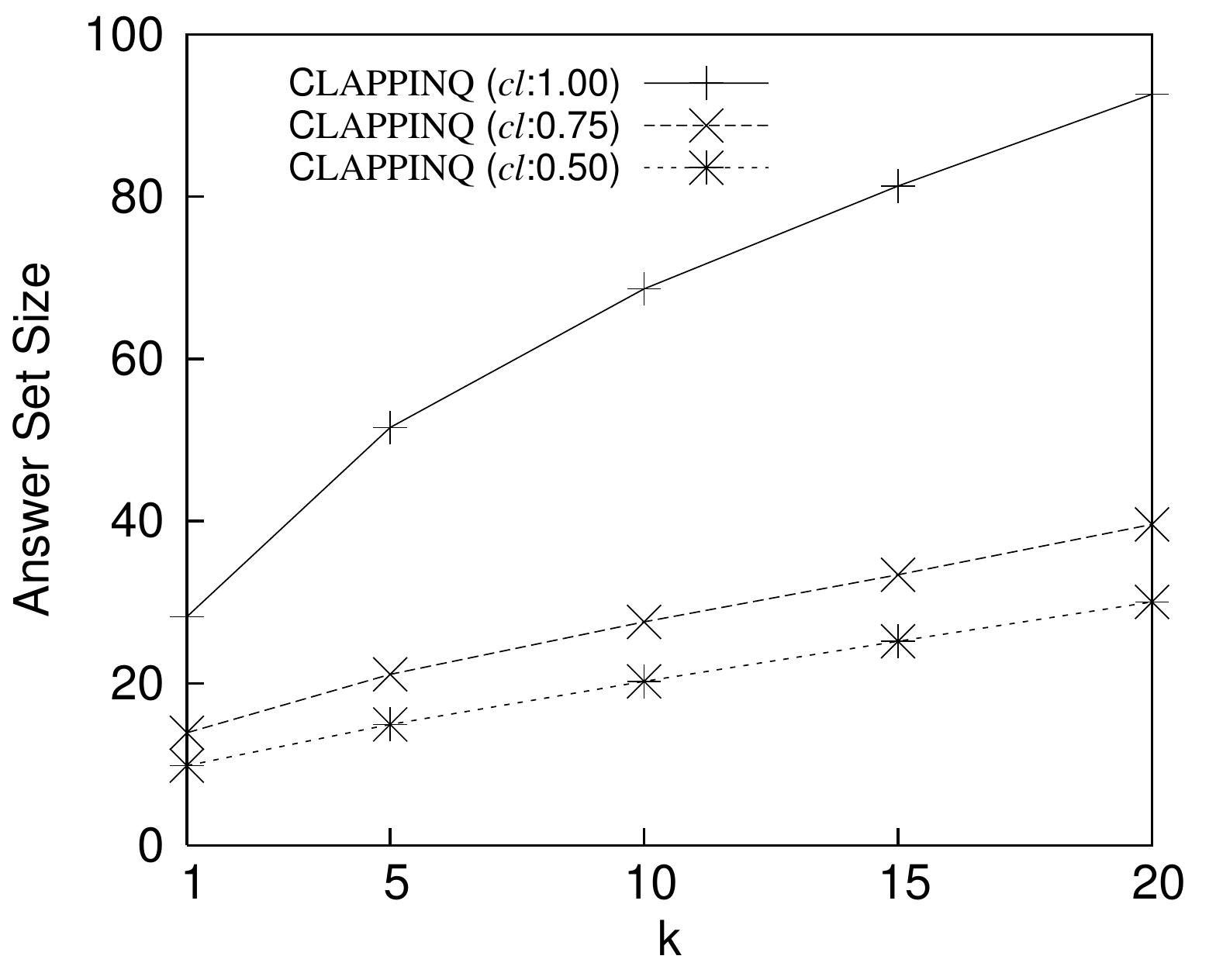}} \\
       \scriptsize{(a)\hspace{0mm}} & \scriptsize{(b)} & \scriptsize{(b)}
        \end{tabular}
    \caption{The effect of the parameter $k$ and confidence level}
    \label{fig:graph4}
  \end{center}
\end{figure*}
\subsubsection{The effect of $k$ and the confidence level}
\label{sec:static_k}

In this set of experiments, we observe that the query evaluation
time, I/Os, and the answer set size for \textsc{Clappinq} increase
with the increase of $k$ for all data sets. However, these
increasing rates decrease as $k$ increases
(Figure~\ref{fig:graph4} for the California data set). We also
vary the confidence level $cl$ and expect that a lower $cl$ incurs
less query processing and communication overhead.
Figure~\ref{fig:graph4} also shows that the average performance
improves as $cl$ decreases and the improvement is more pronounced
for higher values of $cl$. For example, the answer set size
reduces by an average factor of 2.35 and 1.37 when $cl$ decreases
from 1.00 to 0.75 and from 0.75 to 0.50, respectively.


\subsubsection{\textsc{Clappinq} vs. Casper for PM$k$NN queries}
The paper that proposed Casper~\cite{mohamed06.VLDB} did not
address trajectory privacy for M$k$NN queries. Even if we extend
it for PM$k$NN queries using our technique, Casper would only work
for $k=1$. More importantly, since we have found that
\textsc{Clappinq} is at least 2 times faster and requires at least
3 times less I/Os than Casper for finding the NNs for an
obfuscation rectangle, and an M$k$NN query requires the evaluation
of a large number of consecutive obfuscation rectangles,
\textsc{Clappinq} would outperform Casper by a greater margin for
PM$k$NN queries. Therefore, we do not perform such experiments and
conclude that \textsc{Clappinq} is efficient than Casper for
PM$k$NN queries.


\subsection{Effectiveness of our technique for trajectory privacy protection}\label{sec:exp_cont}

%
We first define a measure for trajectory privacy in
Section~\ref{sec:exp_measure_trajprivacy}. Then based on our
measure, we evaluate the effectiveness of our technique. In
Section~\ref{sec:exp_cont_qarea}, we compare three possible
options of our algorithm \textsc{Request\_PM$k$NN} for different
obfuscation rectangle areas: (i) hiding the required confidence
level, (ii) hiding the required number of nearest data objects,
and (iii) hiding both of them. We report the experimental results
for different required and specified confidence levels in
Section~\ref{sec:exp_cont_cl} and for different required and
specified number of nearest data objects in
Section~\ref{sec:exp_cont_k}. We also present the experimental
results by varying the value of $\delta$ in
Section~\ref{sec:exp_cont_delta}.

To simulate moving users, we first randomly generate starting
points of 20 trajectories which are uniformly distributed in the
data space and then generate the complete trajectory for each of
these starting points. Each trajectory has the length of $5000$
units and consists of a series of random points, where the
consecutive points are connected with a straight line of a random
length between $1$ to $10$ units. Note that the data space is set
as 10,000 $\times$ 10,000 square units. We generate the
obfuscation rectangle with a specified area when a moving user
needs to send a request. Though it is not always possible to have
the ratio of the obfuscation rectangle's length and width as 1,
our algorithm keeps the ratio as close as possible to 1: the
obfuscation rectangle needs to be inside the current known region;
sometimes the user's location is close to the boundary of the
known region and to include the user's obfuscation rectangle
inside the known region (circle), a ratio of 1 might not be
possible. Therefore we adjust the ratio of the length and width of
the obfuscation rectangle to accommodate it within the known
region.). Since the obfuscation rectangle generation procedure
is random, for each trajectory we repeat every experiment 25
times, and present the average performance results. According to
Algorithm~\ref{algo:RequestkNN}, a user can modify $cl$, $cl_r$,
$k$, $k_r$ and $\delta$ with her requirement in the consecutive
request of obfuscation rectangles for an M$k$NN query. However, in
our experiments, for the sake of simplicity, we use fixed values
for these parameters in the consecutive requests of obfuscation
rectangles for an M$k$NN query. The default value for the user's safe distance $\delta$
is set to 10.

We consider the overlapping rectangle attack and the combined
attack (i.e., the overlapping rectangle attach and the maximum
movement bound attack) in our experiments. The
combined attack arises when the user's maximum velocity is known
to the LSP. To derive the maximum movement bound in case of
combined attack, we set the user's maximum velocity as 60 km/hour.
For simplicity, we assume that the user also moves at constant
velocity of 60 km/hour.

%
%

%

The query evaluation time, I/Os, and the answer set size for a
PM$k$NN query is measured by adding the required query evaluation
time, I/Os, and answer set size for every requested obfuscation
rectangle per trajectory of length 5000 units in the data space of
10,000 $\times$ 10,000 square units.


\subsubsection{Measuring the level of trajectory privacy}
\label{sec:exp_measure_trajprivacy}

In our experiments, we measure the level of trajectory privacy by
two parameters: (i) the trajectory area, i.e., the approximated
location of the user's trajectory by the LSP, and (ii) the
frequency, i.e., the number of requested obfuscation rectangles
per a user's trajectory for a fixed obfuscation rectangle area.

The trajectory area is computed from the available knowledge of
the LSP. The LSP knows the set of obfuscated rectangles provided
by a user and the known region for each obfuscated rectangle. The
LSP does not know the user's required confidence level $cl_r$ and
the required number of data objects $k_r$ and thus, cannot compute
$GCR(cl_r,k_r)$. Although the LSP can compute $GCR(cl,k)$ from the
user's specified confidence level $cl$ and the specified number of
data object $k$, $GCR(cl,k)$ does not guarantee that the user's
location resides in $GCR(cl,k)$ for the current obfuscation
rectangle. We know that the user needs to reside within
$GCR(cl_r,k_r)$ of the current obfuscation rectangle to ensure the
required confidence level for the required number of data objects.
However, the LSP knows the known region $C(o,r)$ and that
$GCR(cl_r,k_r)$ must be inside the known region of the current
obfuscation rectangle because the confidence level of the user for
any data object outside the known region is 0. Thus, the
trajectory area for a user's trajectory is defined as the union of
the known regions with respect to the set of obfuscation
rectangles provided by the user for that trajectory. When the LSP
knows the maximum velocity, then the LSP can use the maximum
movement bound in addition to the known region to determine the
trajectory area. Formally, we define trajectory area as follows:

\textbf{Definition 8.1} \textbf{(\textit{Trajectory Area})} Let
$\{R_1, R_2,...,R_n\}$ be a set of $n$ consecutive rectangles
requested by a user to an LSP in an M$k$NN query, $C_i(o,r)$ be
the known region corresponding to $R_i$, and $M_i$ be the maximum
movement bound corresponding to $R_i$. The trajectory area is
computed as $\cup_{1 \leq i \leq n-1}{(C_i(o,r) \cap M_i) \cup
C_n(o,r)}$. If the maximum bound is unknown to the LSP then the
trajectory area is expressed as $\cup_{1 \leq i \leq
n}{C_i(o,r)}$.

\begin{figure}[htbp]
  \begin{center}
    \begin{tabular}{cc}
      \resizebox{40mm}{!}{\includegraphics{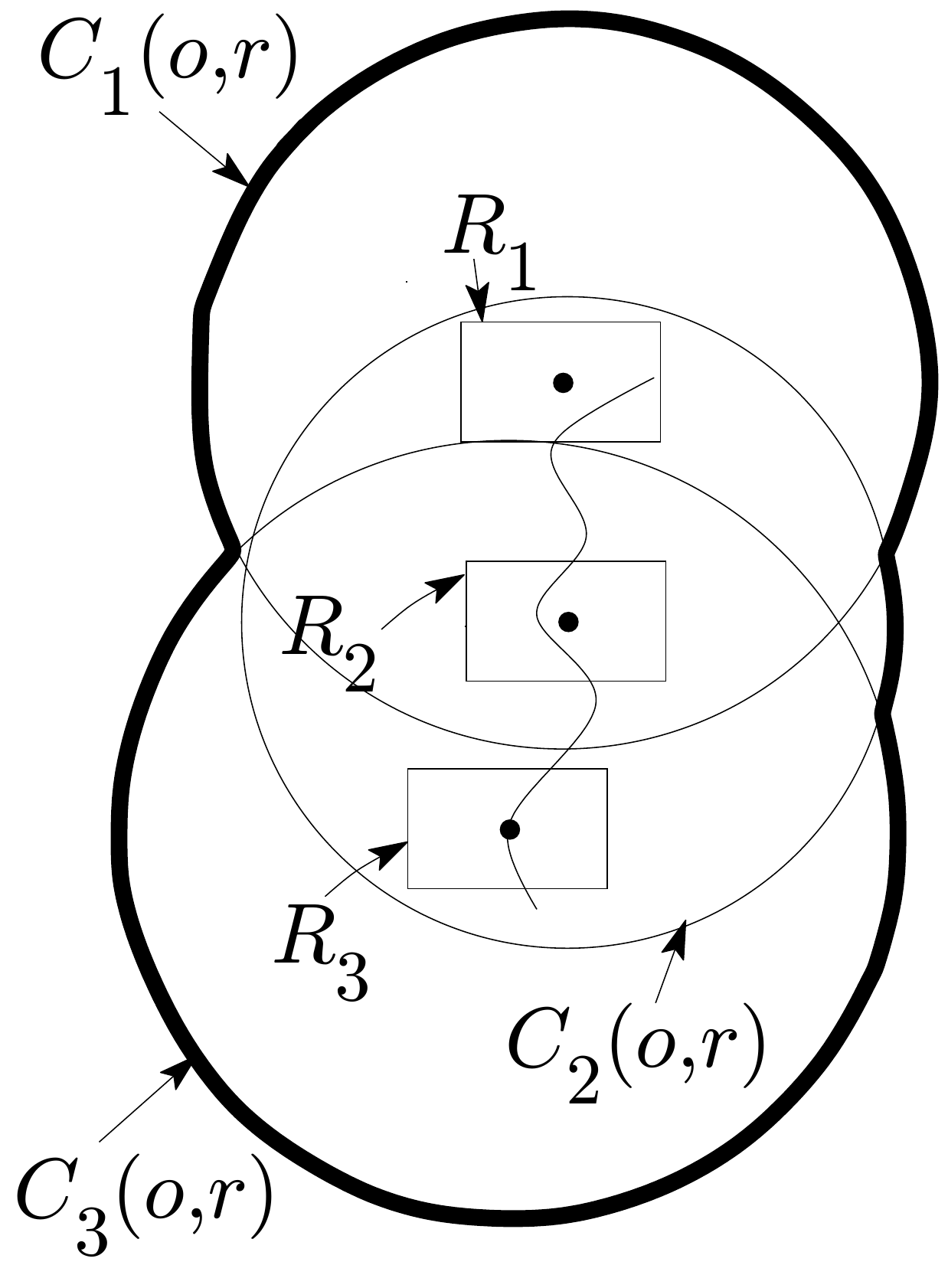}} &
      \resizebox{40mm}{!}{\includegraphics{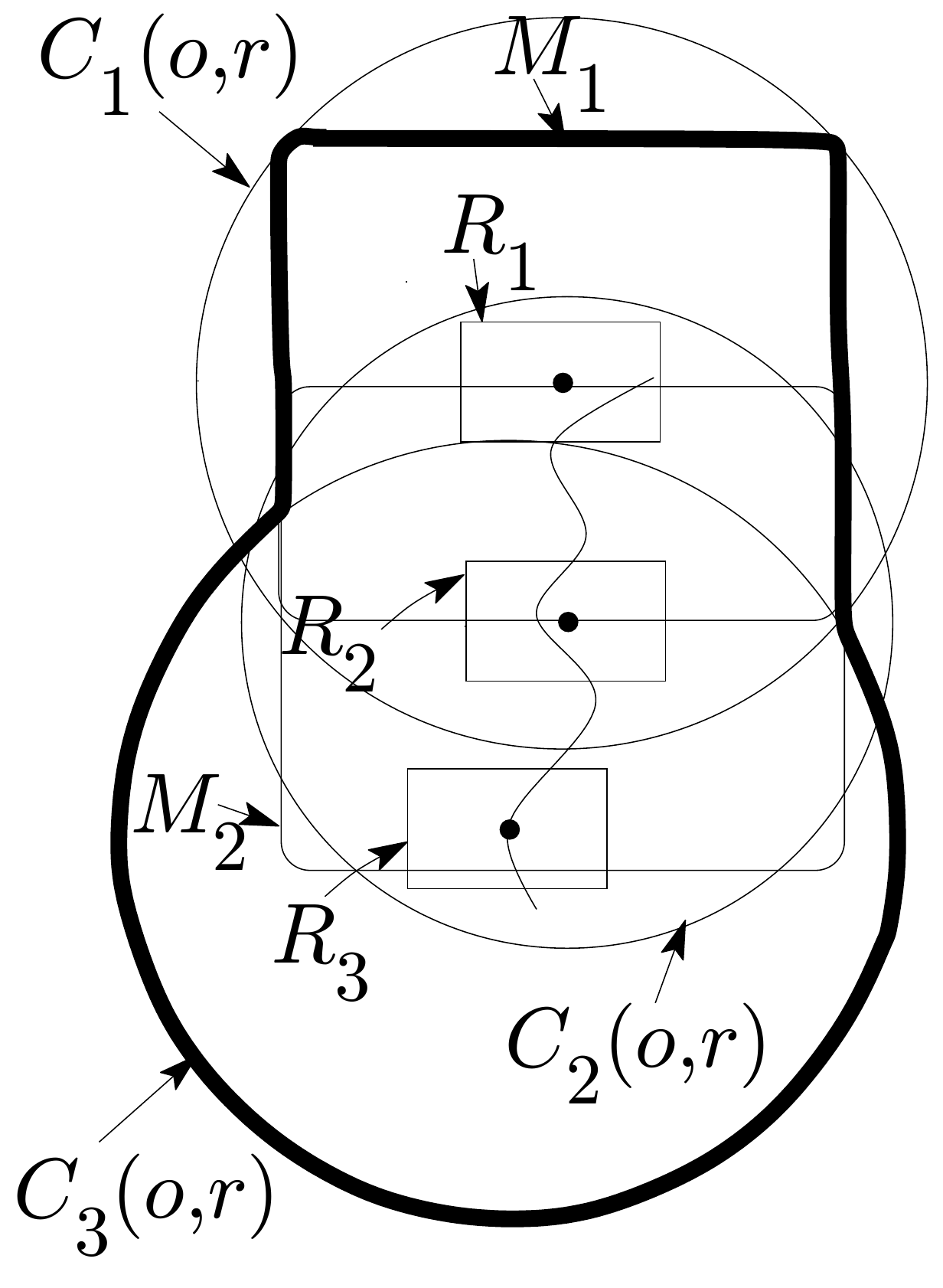}} \\
       \scriptsize{(a)\hspace{0mm}} & \scriptsize{(b)\hspace{-0mm}}
        \end{tabular}
        \end{center}
    \caption{The bold line shows the trajectory area if the maximum velocity is (a) unknown to the LSP, (b) known to the LSP}
    \label{fig:traj_area}
\end{figure}

Figure~\ref{fig:traj_area}(a) and~\ref{fig:traj_area}(b) show
trajectory areas when the user's maximum velocity is either unknown or
known to the LSP, respectively. The larger the trajectory area,
the higher the privacy for the user. This is because the
probability is high for a large trajectory area to contain
different sensitive locations and the probability is low that an
LSP can link the user's trajectory with a specific location. On
the other hand, for a fixed obfuscation rectangle area, a lower
frequency for a trajectory represents high level of trajectory
privacy since a smaller number of spatial constraints are
available for an LSP to predict the user's trajectory.

In our experiments, we compute the trajectory area through Monte
Carlo Simulation. We randomly generate 1 million points in the
total space. For the overlapping rectangle attack, we determine
the trajectory area as the percentage of points that fall inside
$\cup_{1 \leq i \leq n}{C_i(o,r)}$. On the other hand, for the
combined attack (i.e., the maximum velocity is known to the LSP),
we determine the trajectory area as the percentage of points that
fall inside $\cup_{1 \leq i \leq n-1}{(C_i(o,r) \cap M_i) \cup
C_n(o,r)}$.

Thus, the trajectory area is measured as percentage of the total
data space. On the other hand, the frequency is measured as the
number of requested obfuscation rectangles per trajectory of
length 5000 units in the data space of 10,000 $\times$ 10,000
square units.

\begin{figure*}[htbp]
  \begin{center}
    \begin{tabular}{ccccc}
        \hspace{-5mm}
      \resizebox{43mm}{!}{\includegraphics{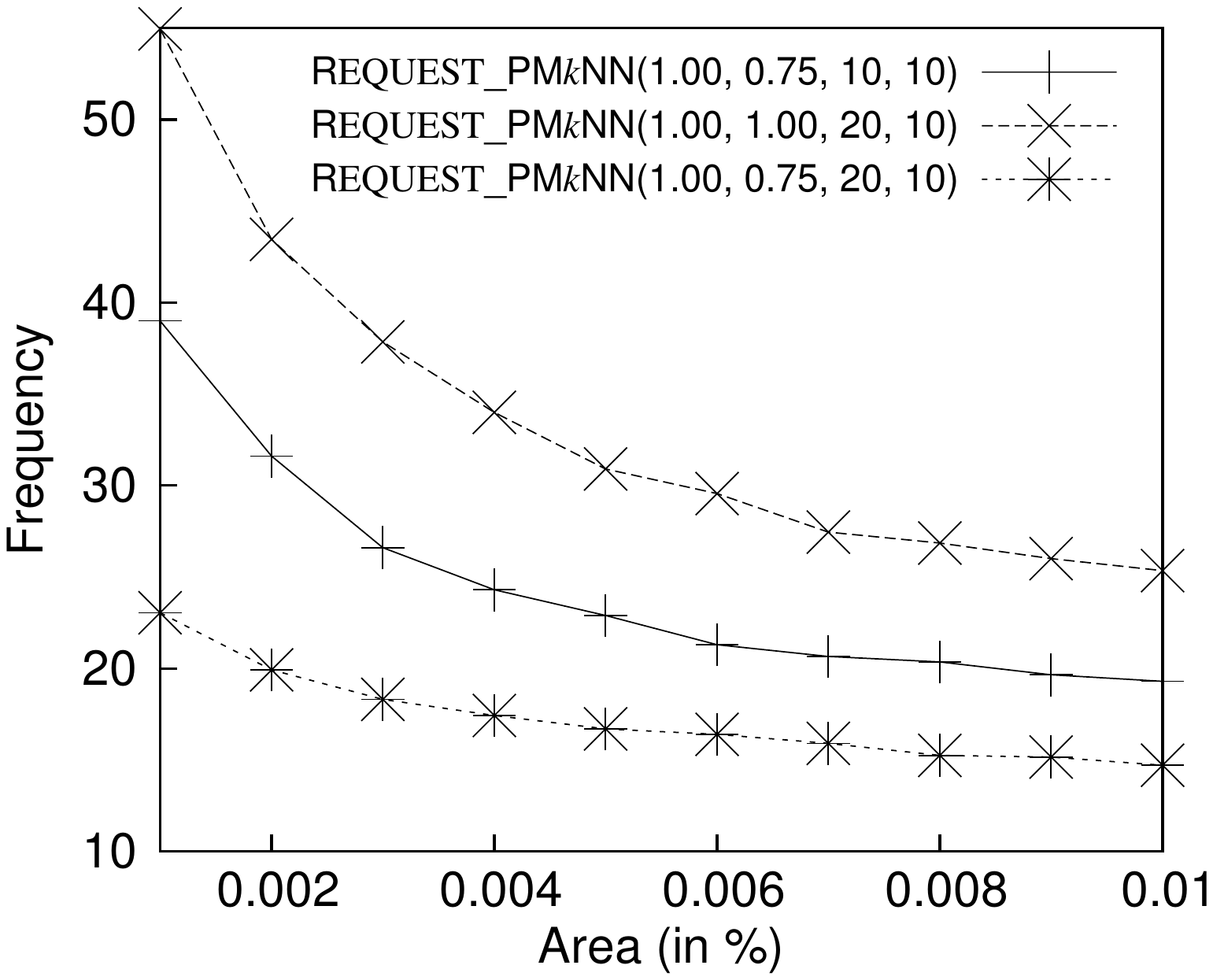}} &
      \hspace{-5mm}
      \resizebox{43mm}{!}{\includegraphics{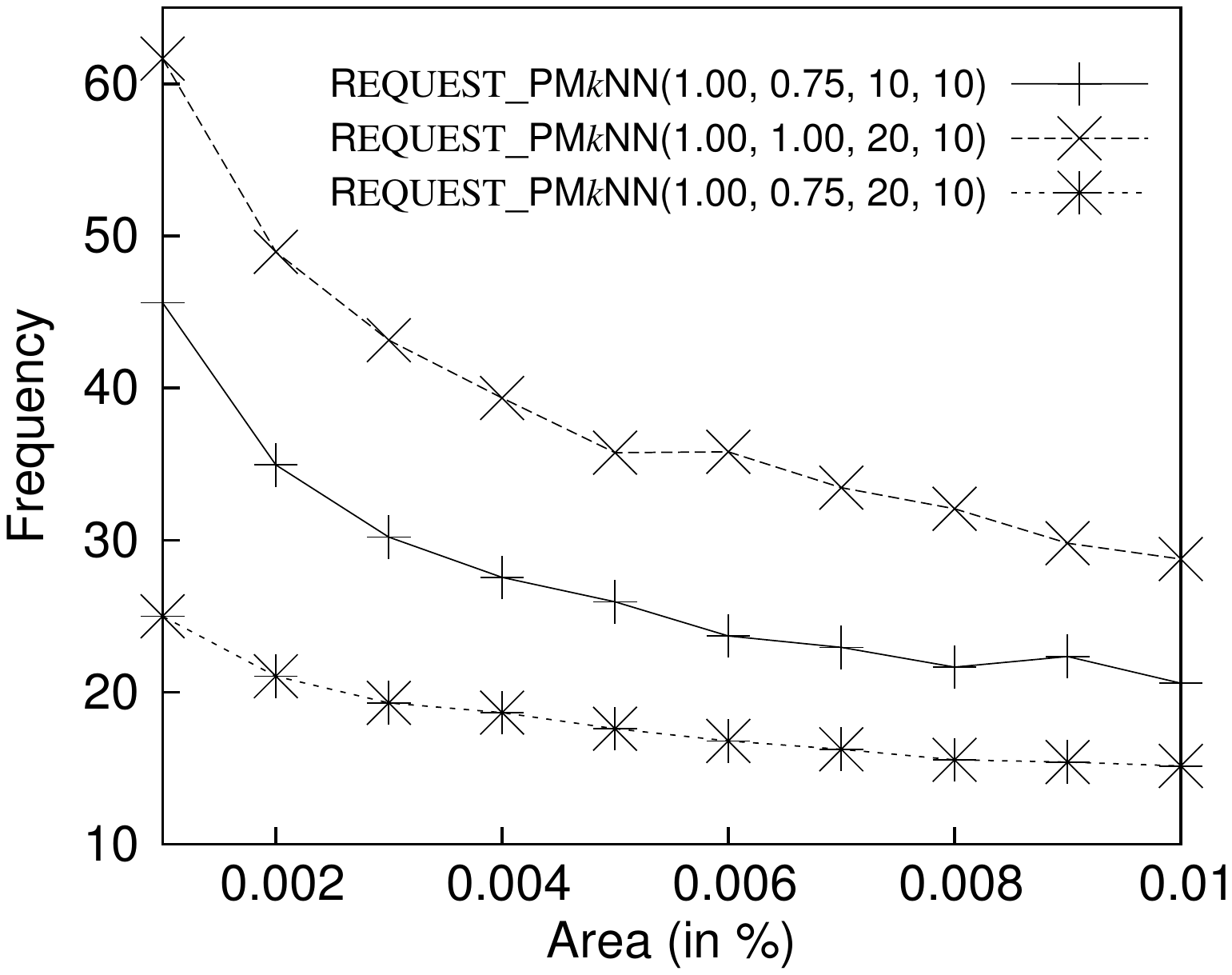}} &
      \hspace{-5mm}
      \resizebox{43mm}{!}{\includegraphics{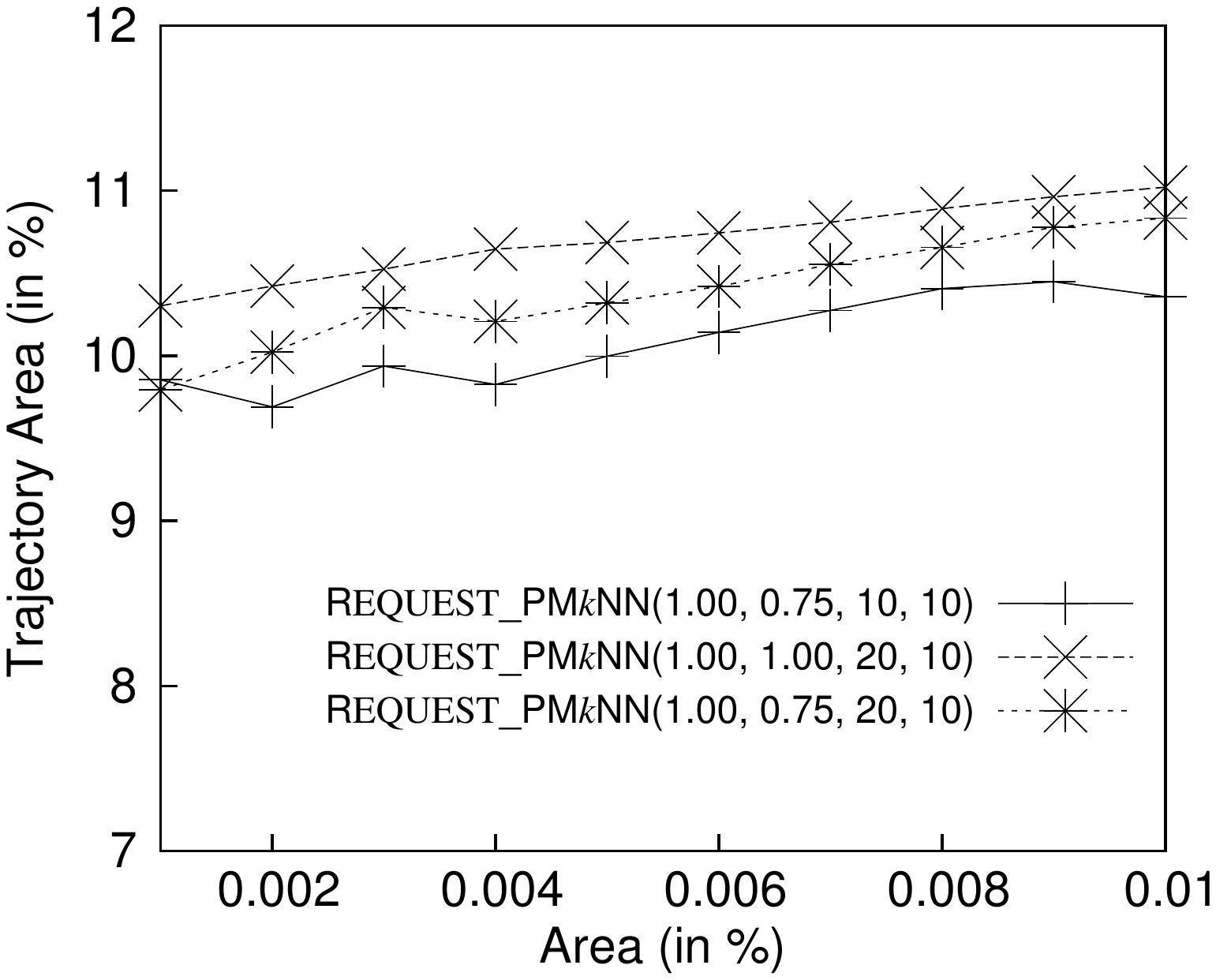}} &
        \hspace{-5mm}
      \resizebox{43mm}{!}{\includegraphics{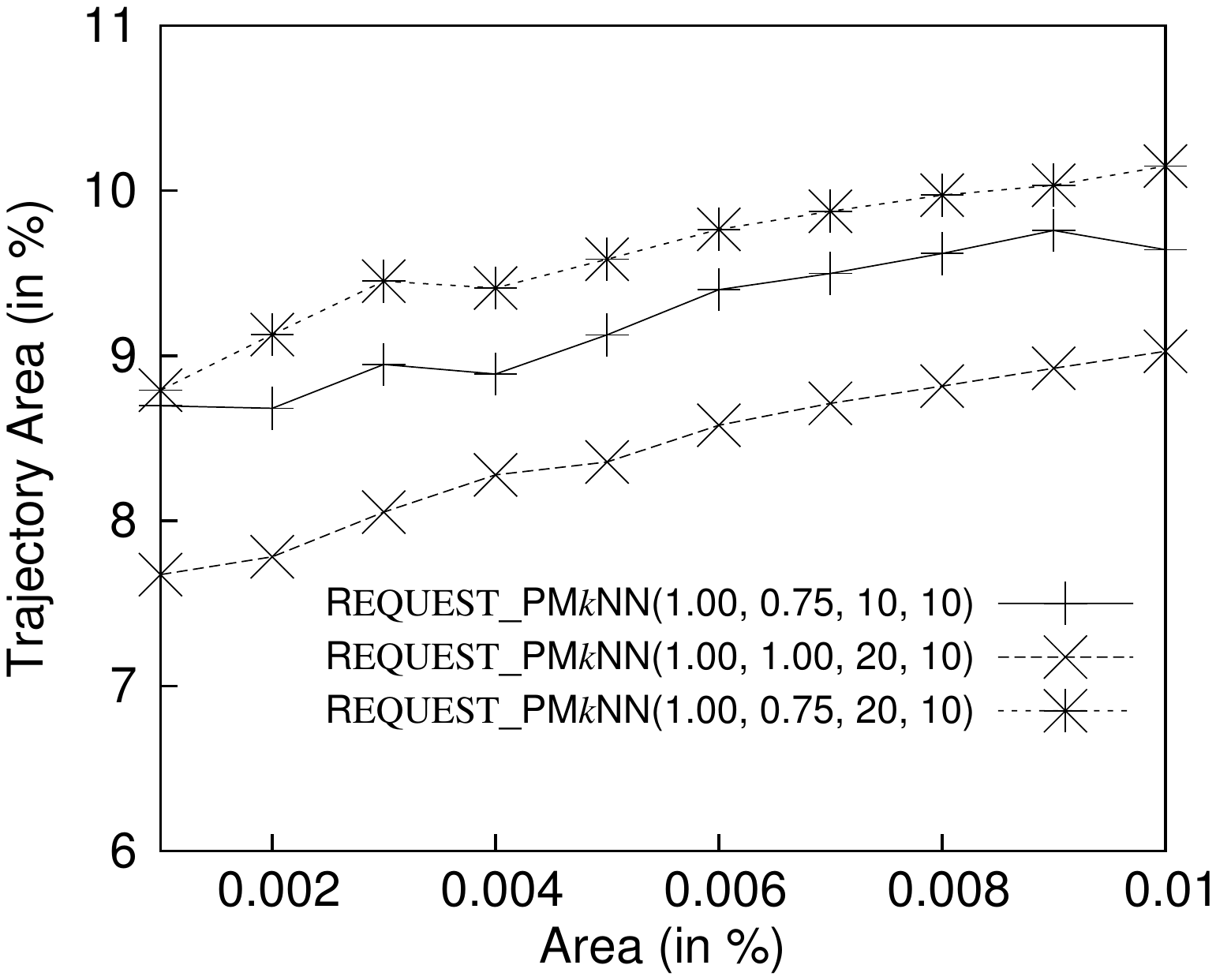}} \\
       \scriptsize{(a) Overlapping Rectangle Attack\hspace{0mm}} & \scriptsize{(b) Combined Attack} &
       \scriptsize{(c) Overlapping Rectangle Attack} & \scriptsize{(d) Combined Attack}
        \end{tabular}
    \caption{The effect of the obfuscation rectangle area on the level of trajectory privacy for the California data set}
    \label{fig:m_freq_va}
  \end{center}
\end{figure*}

\subsubsection{The effect of obfuscation rectangle area}
\label{sec:exp_cont_qarea}

In this set of experiments, we evaluate the effect of obfuscation
rectangle area on the three privacy protection options for our
algorithm \textsc{Request\_PM$k$NN}. In the first option, the user
sacrifices the accuracy of answers to achieve trajectory privacy.
Using this option, the user's required confidence level is lower
than 1 and the user specifies higher confidence level to the LSP
than her required one. In this set of experiments, we represent
the first option for our algorithm
\textsc{Request\_PM$k$NN($cl,cl_r,k,k_r$)} as
\textsc{Request\_PM$k$NN(1,0.75,10,10)}, where the user hides the
required confidence level 0.75 from the LSP, instead specifies 1
for the confidence level. In the second option, the user does not
sacrifice the accuracy of the answers for her trajectory privacy;
instead the user specifies a higher number of data objects to the
LSP than her required one. For the second option, we set the
parameters of \textsc{Request\_PM$k$NN($cl,cl_r,k,k_r$)} as
\textsc{Request\_PM$k$NN(1,1,20,10)}. In the third option, the
user hides both of the required confidence level and the required
number of data objects. Thus, the third option is represented as
\textsc{Request\_PM$k$NN(1,0.75,20,10)}.



We vary the obfuscation rectangle area from 0.001\% to 0.01\% of
the total data space. For all the three options, we observe in
Figures~\ref{fig:m_freq_va}(a) and~\ref{fig:m_freq_va}(b) that the
frequency decreases with the increase of the obfuscation rectangle
area for both overlapping rectangle attack and combined attack,
respectively. On the other hand, Figures~\ref{fig:m_freq_va}(c)
and~\ref{fig:m_freq_va}(d) show that the trajectory area increases
with the increase of the obfuscation rectangle area for
overlapping rectangle attack and combined attack, respectively.
Thus, the larger the obfuscation rectangle area, the higher the
trajectory privacy in terms of both frequency and trajectory area.
This is because the larger the obfuscation rectangle the higher
the probability that the obfuscation rectangle covers a longer
part of a user's trajectory.

\begin{figure}[htbp]
  \begin{center}
    \begin{tabular}{ccc}
      \resizebox{42mm}{!}{\includegraphics{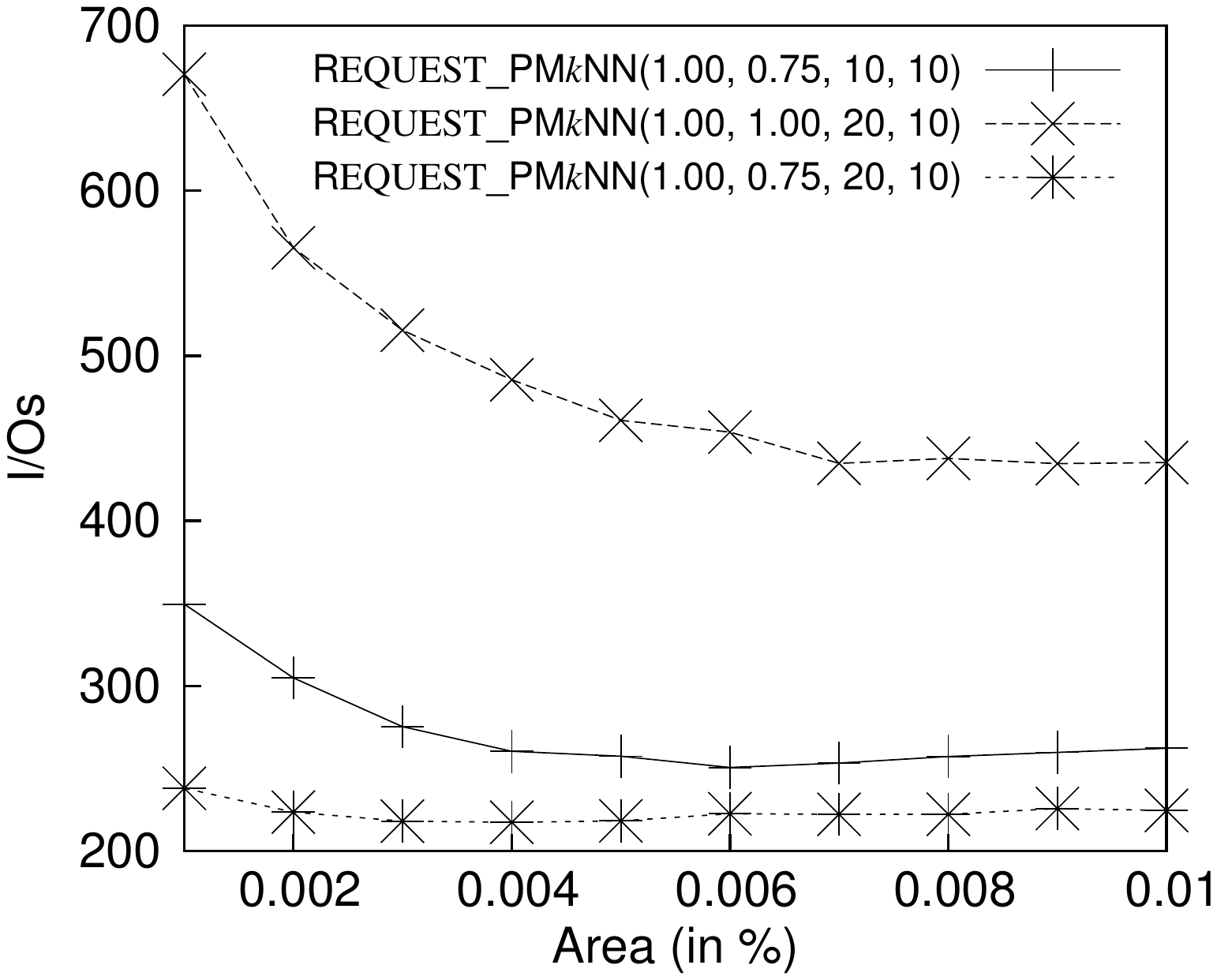}} &
      \resizebox{42mm}{!}{\includegraphics{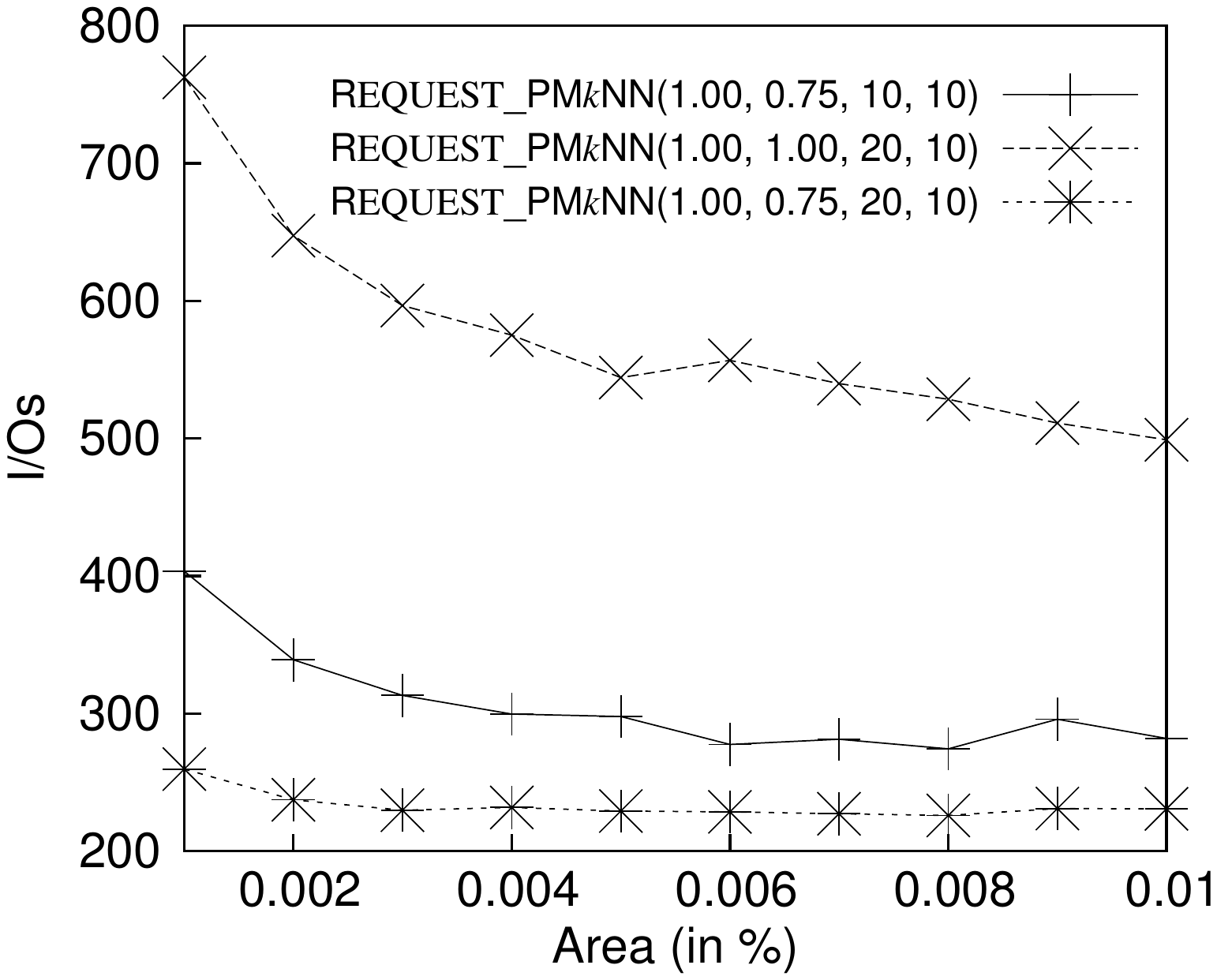}} \\
       \scriptsize{(a) Overlapping Rectangle Attack\hspace{0mm}} & \scriptsize{(b) Combined Attack}
        \end{tabular}
       \begin{tabular}{ccc}
    \resizebox{42mm}{!}{\includegraphics{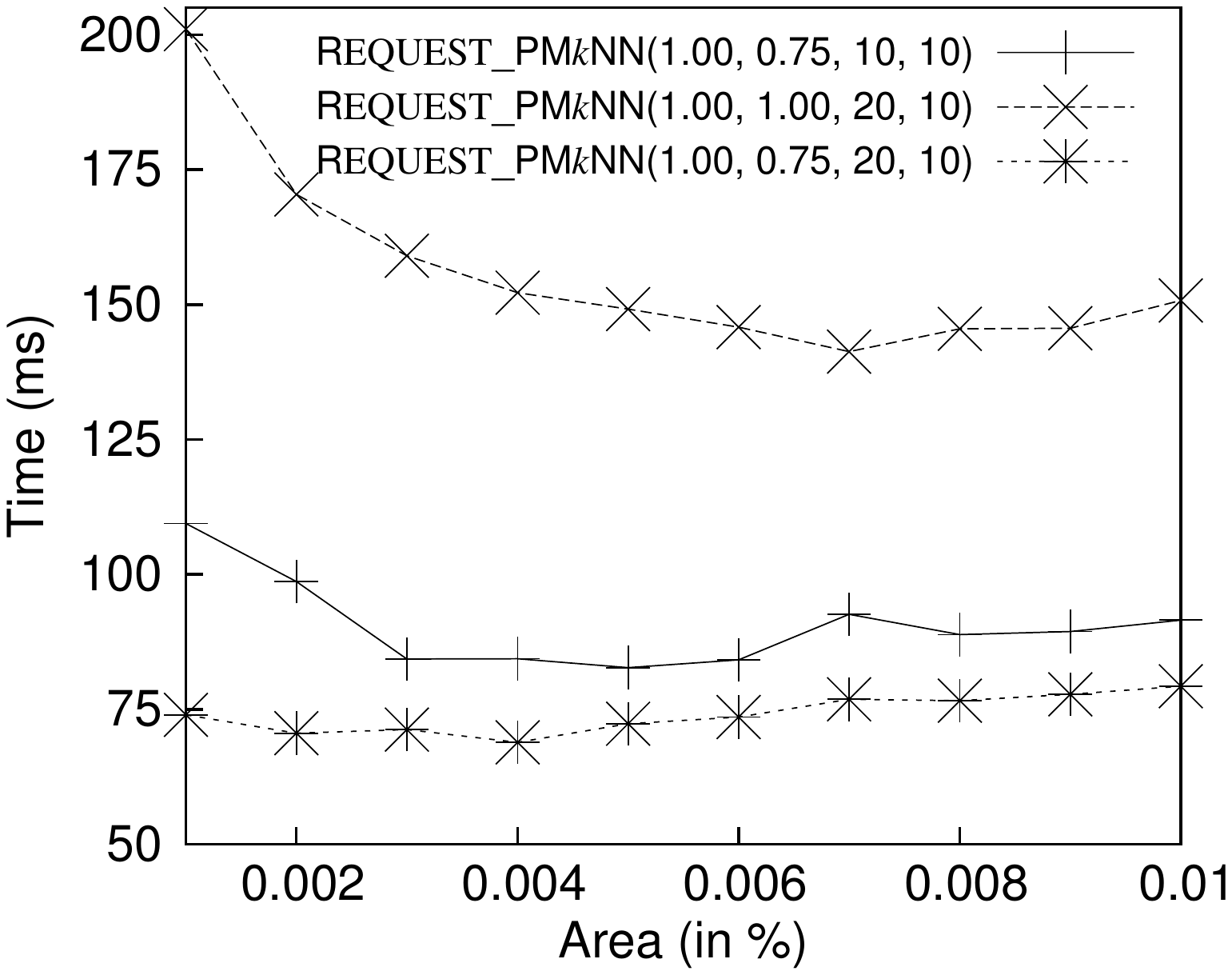}} &
      \resizebox{42mm}{!}{\includegraphics{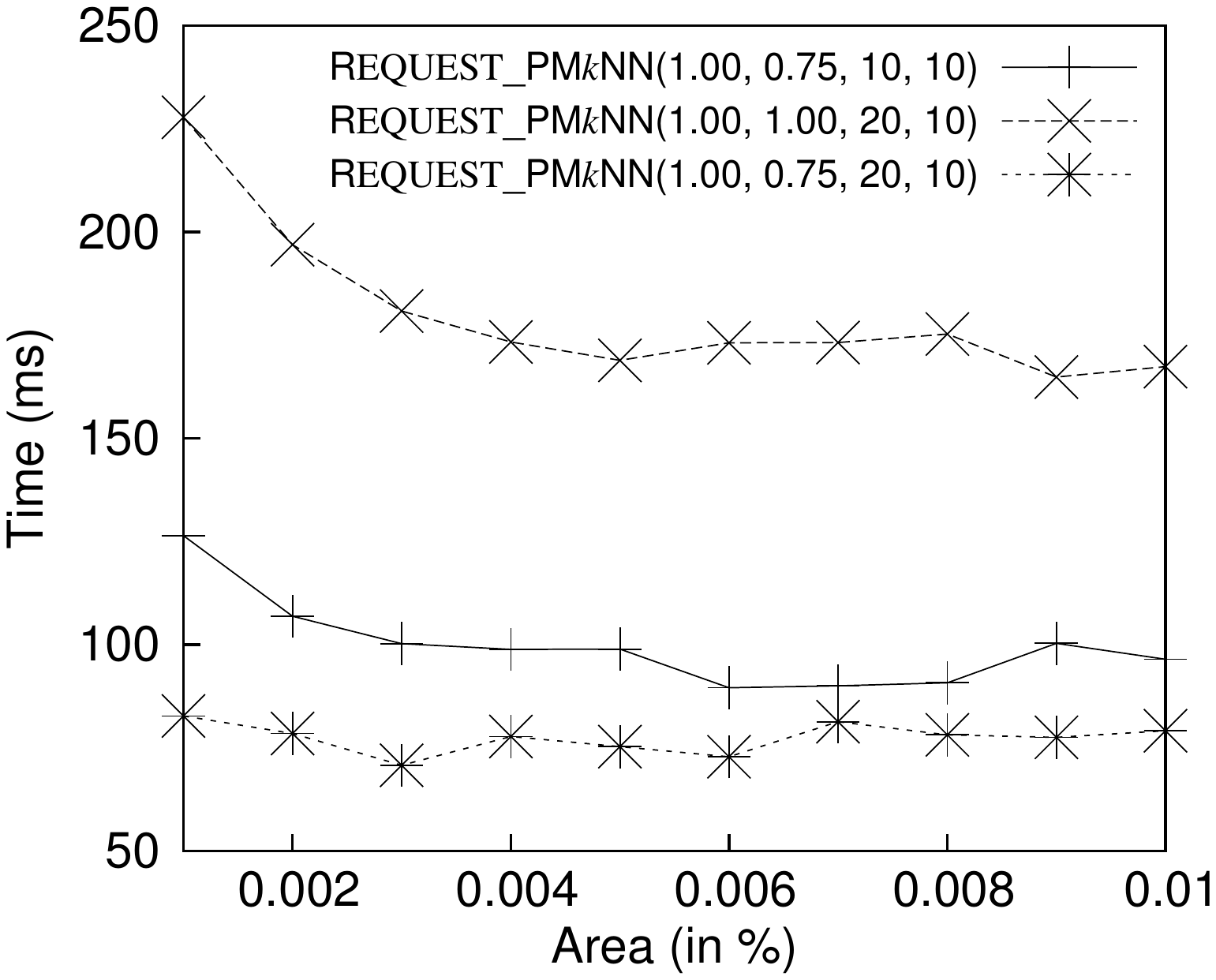}} \\
       \scriptsize{(c) Overlapping Rectangle Attack\hspace{0mm}} & \scriptsize{(d) Combined Attack}
        \end{tabular}
       \begin{tabular}{ccc}
     \resizebox{42mm}{!}{\includegraphics{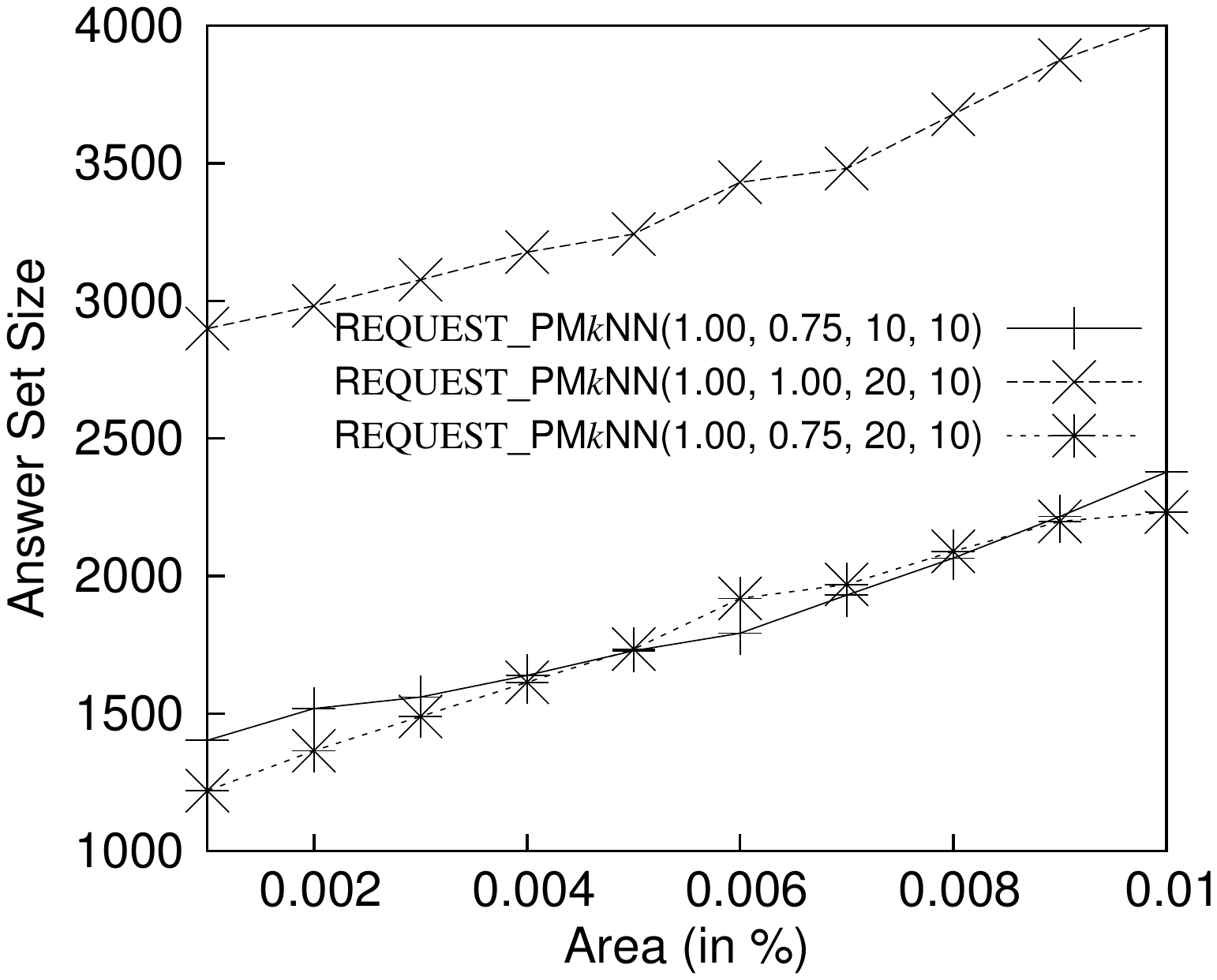}} &
      \resizebox{42mm}{!}{\includegraphics{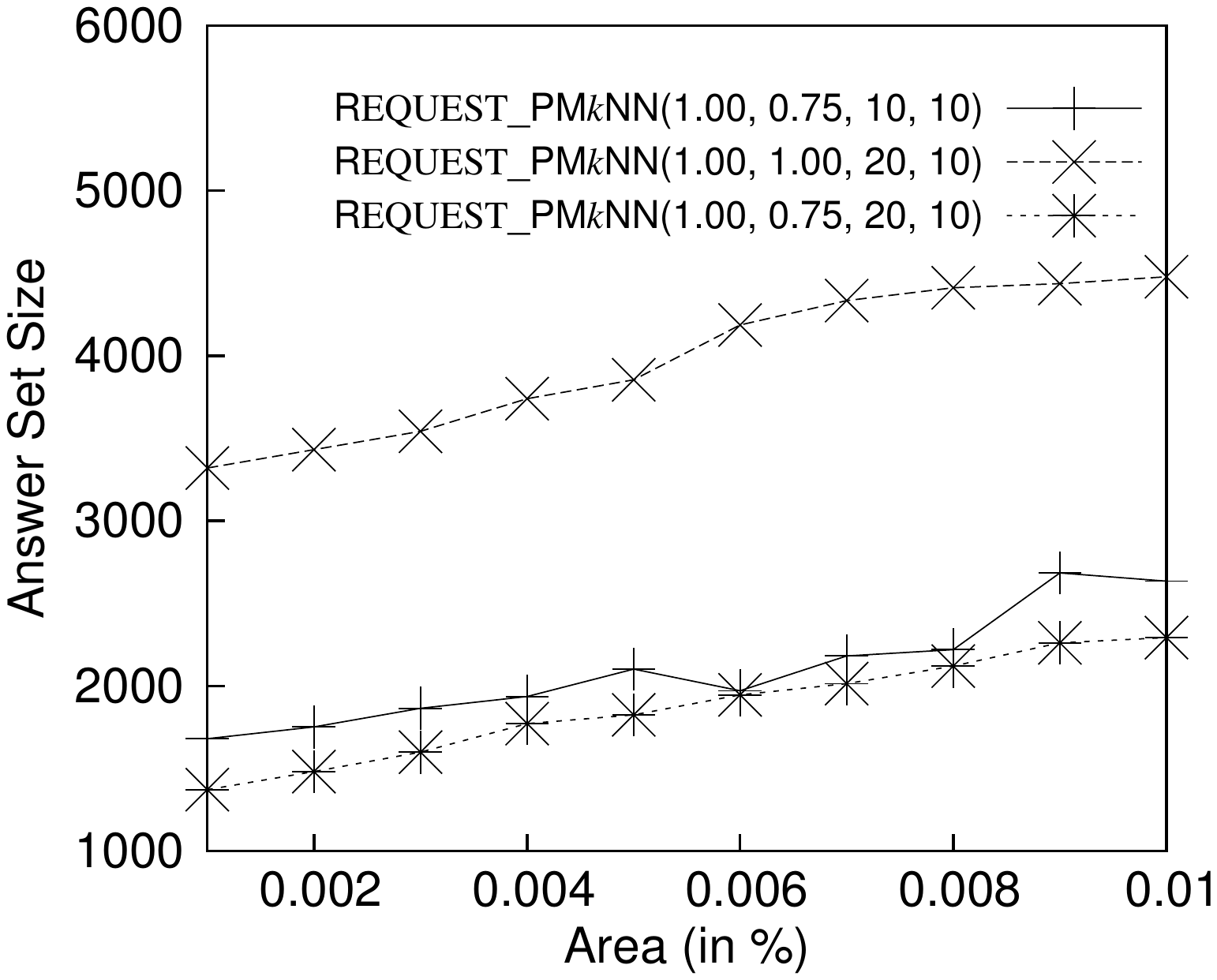}} \\
 \scriptsize{(e) Overlapping Rectangle Attack\hspace{0mm}} & \scriptsize{(f) Combined Attack}
        \end{tabular}
    \caption{The effect of the obfuscation rectangle area on the query processing performance for the California data set}
    \label{fig:m_perf_va}
  \end{center}
\end{figure}

Figures~\ref{fig:m_freq_va}(a) and~\ref{fig:m_freq_va}(b) also
show that the frequency for hiding both confidence level and the
number of NNs is smaller than those for hiding them independently
for any obfuscation rectangle area, since each of them contributes
to extend the $GCR(cl_r,k_r)$. In addition, we observe that the
rate of decrease of frequency with the increase of the obfuscation
rectangle area is more significant for the option of hiding the
confidence level than the option of hiding the number of NNs.

We observe from Figures~\ref{fig:m_freq_va}(a)
and~\ref{fig:m_freq_va}(b) that the frequency in the combined
attack is higher than that of the overlapping rectangle attack.
The underlying cause is as follows. In our algorithm to protect
the overlapping rectangle attack the obfuscation rectangle needs
to be generated inside the current known region. On the other
hand, in case of the combined attack the obfuscation rectangle
needs to be inside the intersection of maximum movement bound and
the known region. Due to the stricter constraints while generating
the obfuscation rectangle to overcome the combined attack, the
frequency becomes higher for the combined attack than that of the
overlapping rectangle attack. For the same reason, the trajectory
area is smaller for the combined attack than that of the
overlapping rectangle attack as shown in
Figures~\ref{fig:m_freq_va}(c) and~\ref{fig:m_freq_va}(d).




In Figures~\ref{fig:m_perf_va}(a)-(d), we observe that both I/Os
and time follow the similar trend of frequency, as expected. On
the other hand, the answer set size shows an increasing trend with
the increase of the obfuscation rectangle area in
Figure~\ref{fig:m_perf_va}(e)-(f). We also run all of these
experiments for other data sets and the results show similar
trends to those of California data set except that of the answer
set size. The different trends of the answer set size may result
from different density and distributions of data objects.

\subsubsection{The effect of $cl_r$ and $cl$}
\label{sec:exp_cont_cl}

In these experiments, we observe the effect of the required and
specified confidence level on the level of trajectory privacy. We
vary the value of the required confidence level and the specified
confidence level from 0.5 to 0.9 and 0.6 to 1, respectively.
\begin{figure*}[htbp]
  \begin{center}
    \begin{tabular}{ccccc}
        \hspace{-5mm}
      \resizebox{43mm}{!}{\includegraphics{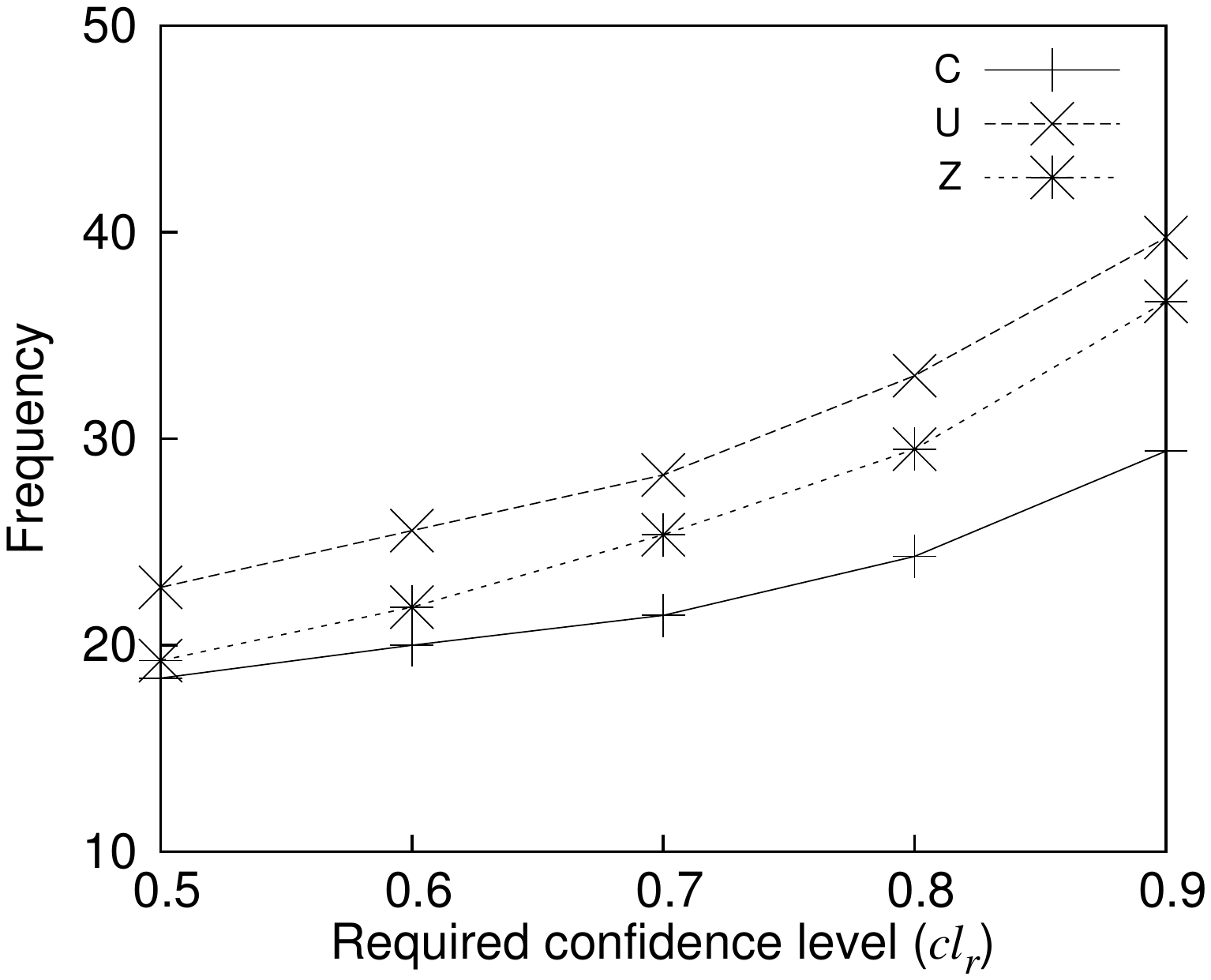}} &
      \hspace{-5mm}
      \resizebox{43mm}{!}{\includegraphics{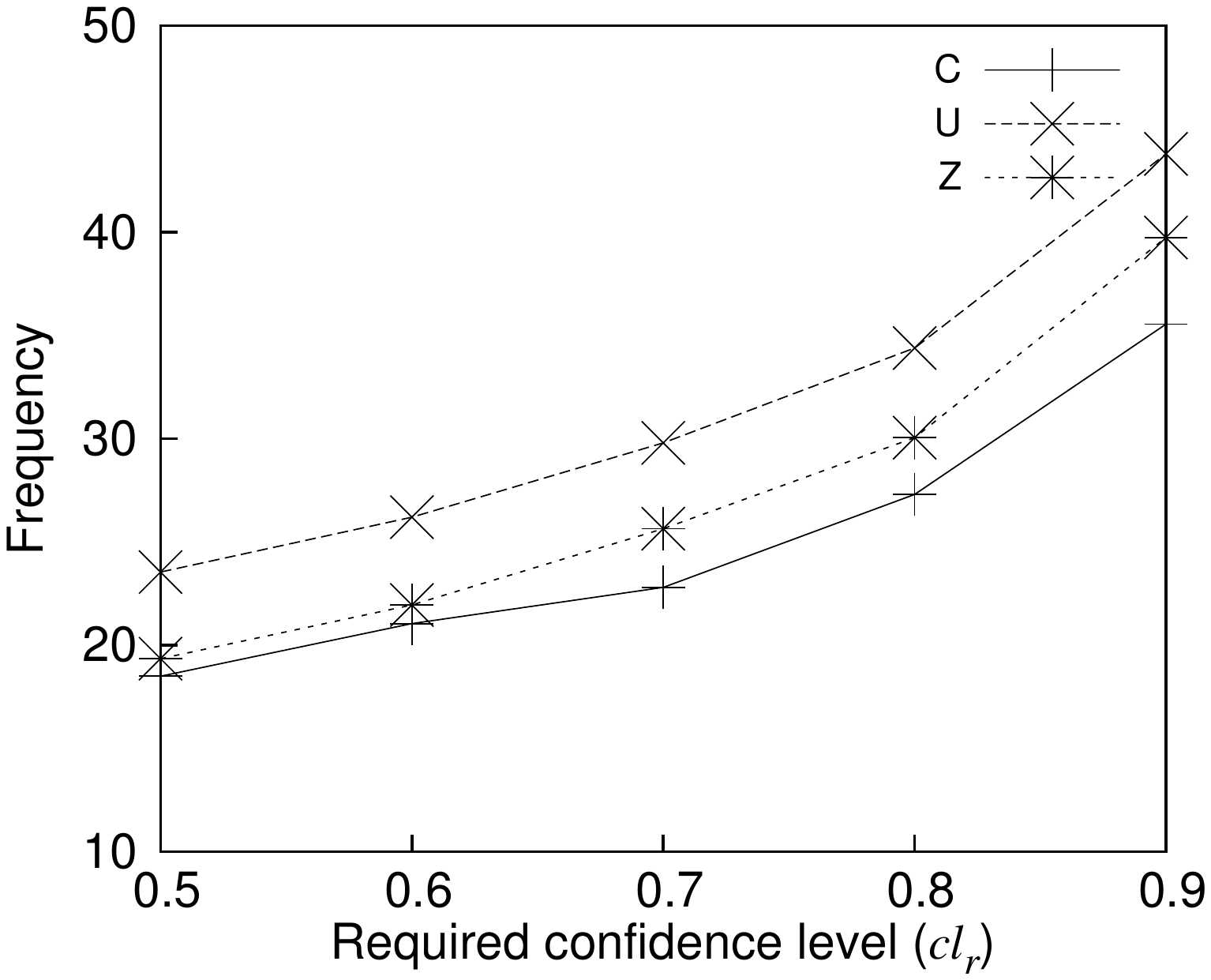}} &
      \hspace{-5mm}
      \resizebox{43mm}{!}{\includegraphics{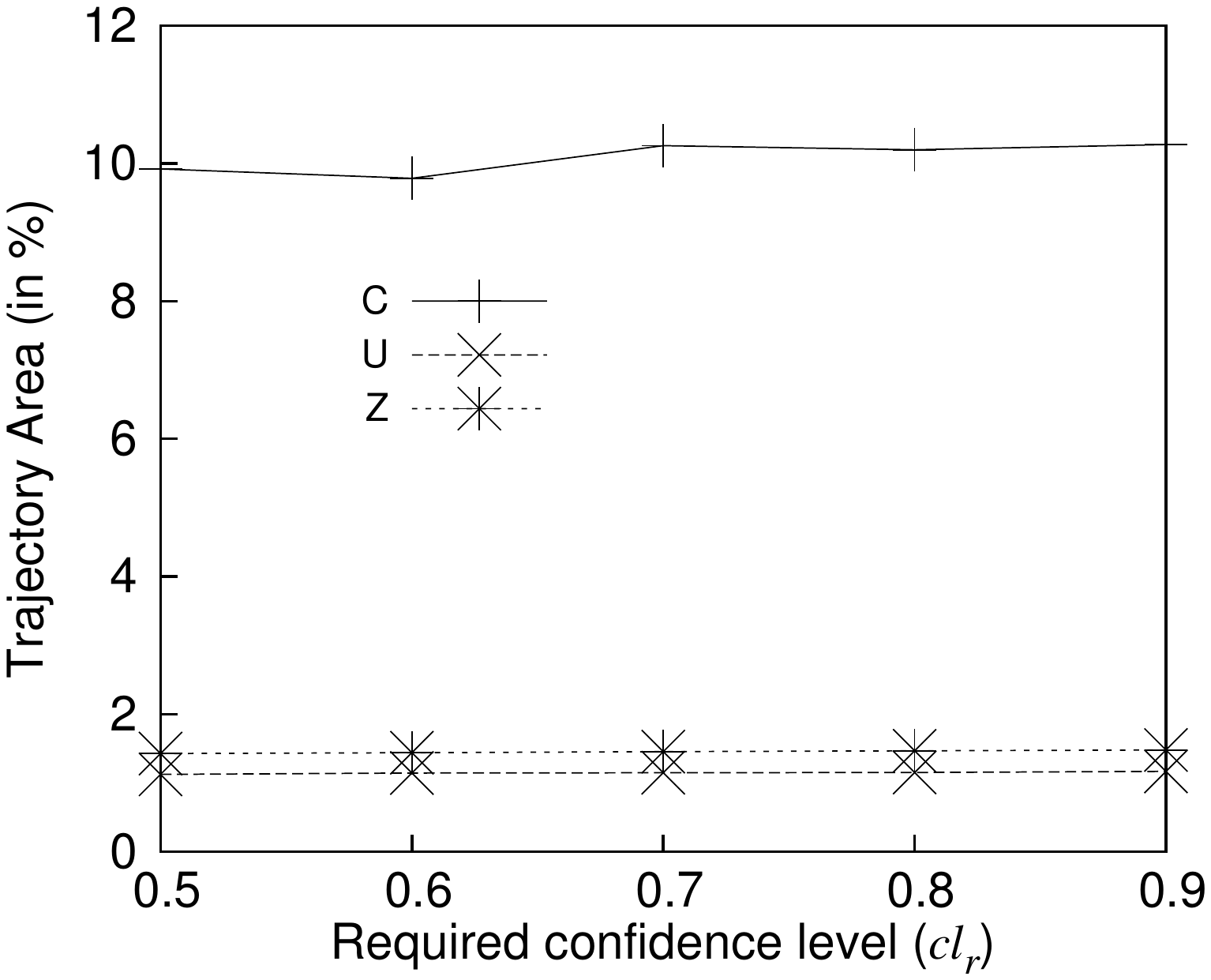}} &
        \hspace{-5mm}
      \resizebox{43mm}{!}{\includegraphics{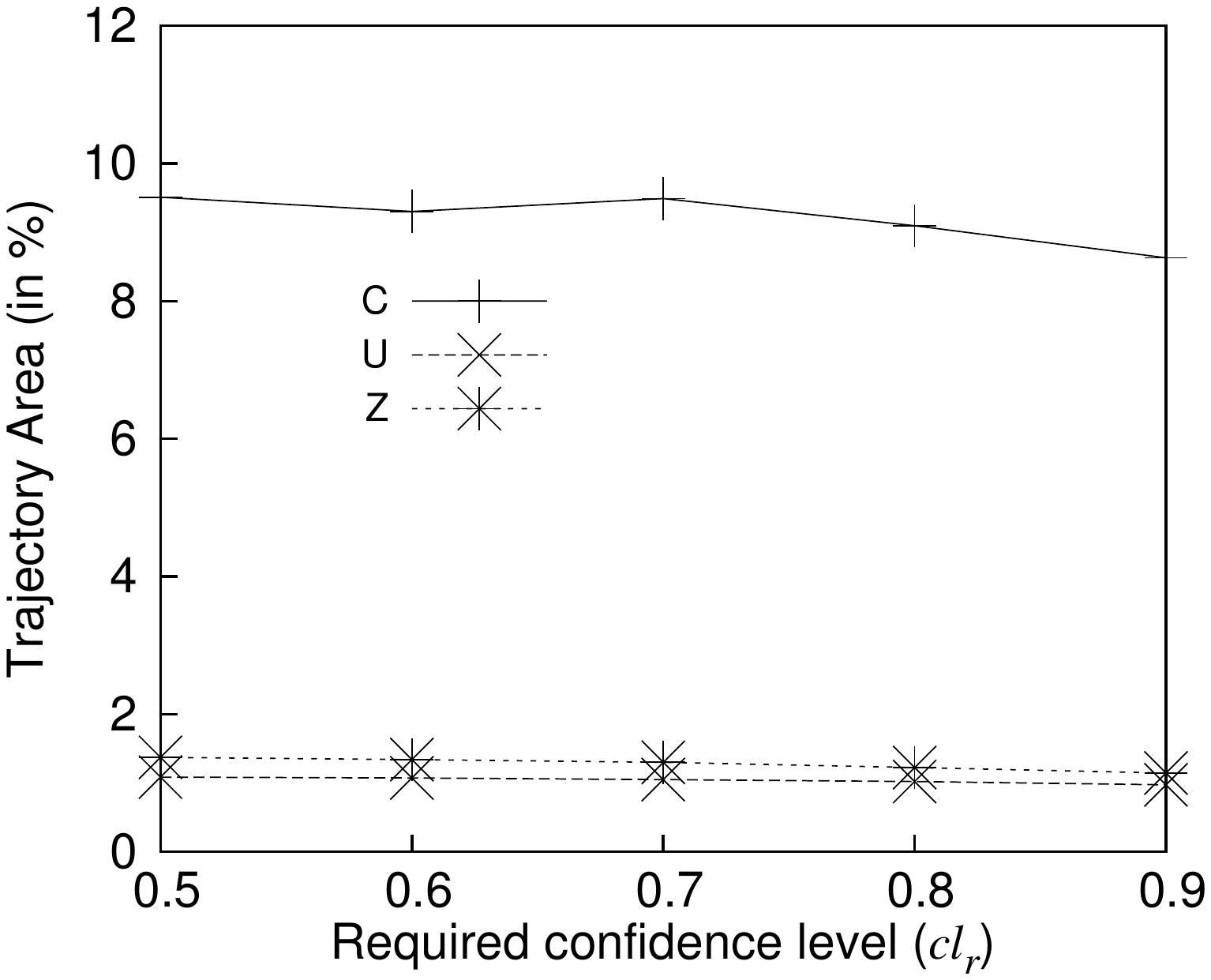}} \\
       \scriptsize{(a) Overlapping Rectangle Attack\hspace{0mm}} & \scriptsize{(b) Combined Attack} &
       \scriptsize{(c) Overlapping Rectangle Attack} & \scriptsize{(d) Combined Attack}
        \end{tabular}
    \caption{The effect of hiding the required confidence level on the level of trajectory privacy}
    \label{fig:m_freq_cl2}
  \end{center}
\end{figure*}

\begin{figure*}[htbp]
  \begin{center}
    \begin{tabular}{ccccc}
        \hspace{-5mm}
      \resizebox{43mm}{!}{\includegraphics{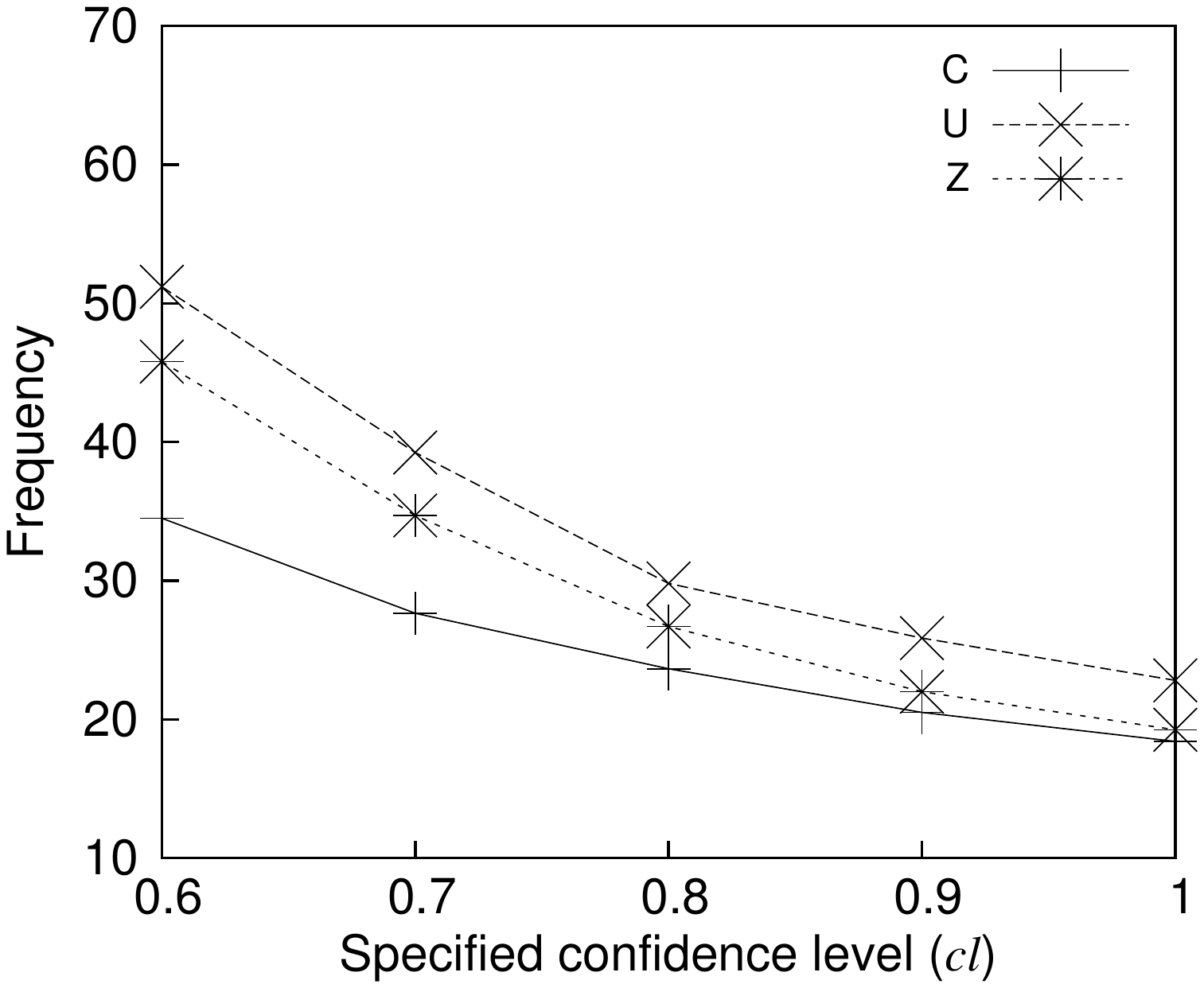}} &
      \hspace{-5mm}
      \resizebox{43mm}{!}{\includegraphics{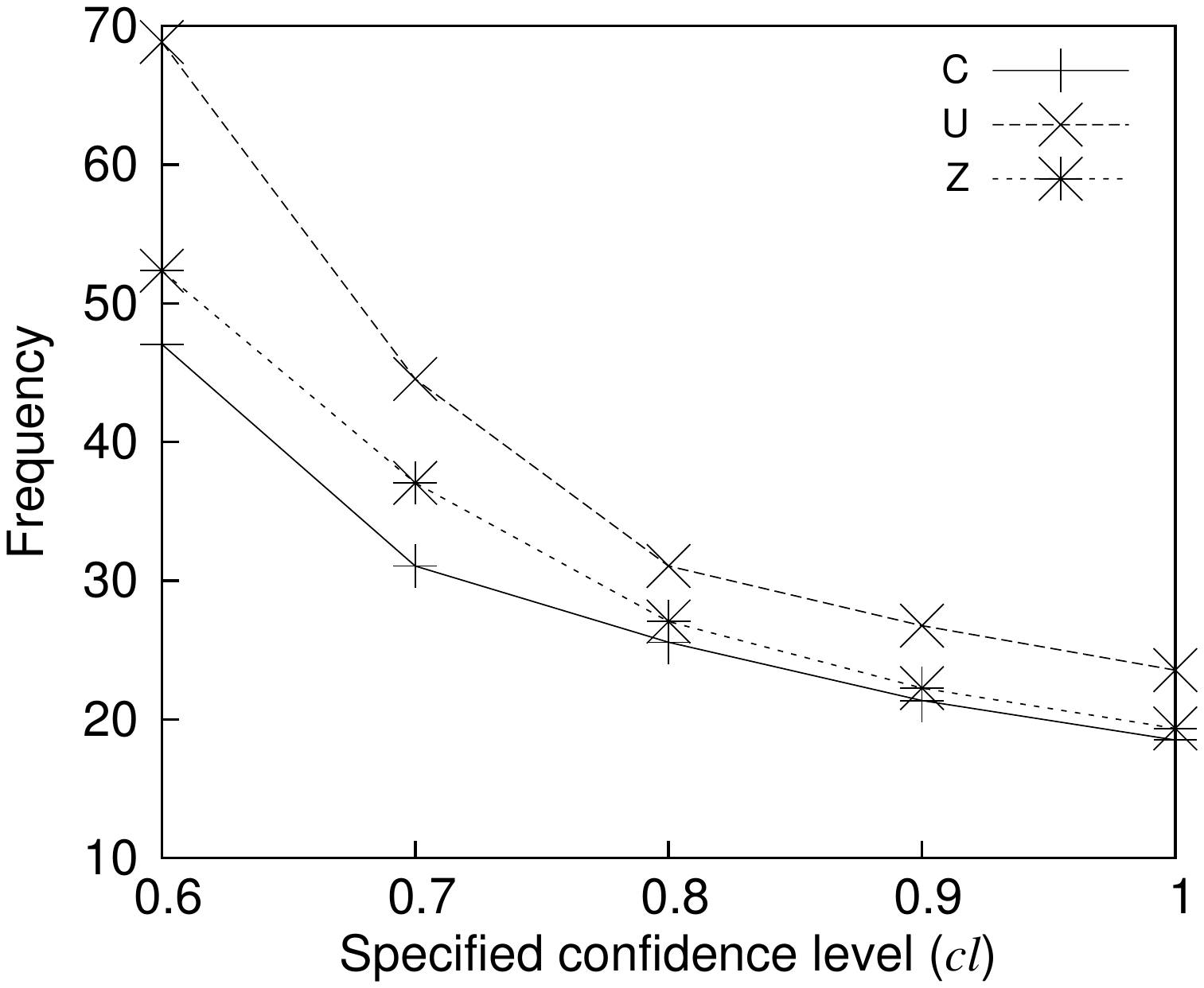}} &
      \hspace{-5mm}
      \resizebox{43mm}{!}{\includegraphics{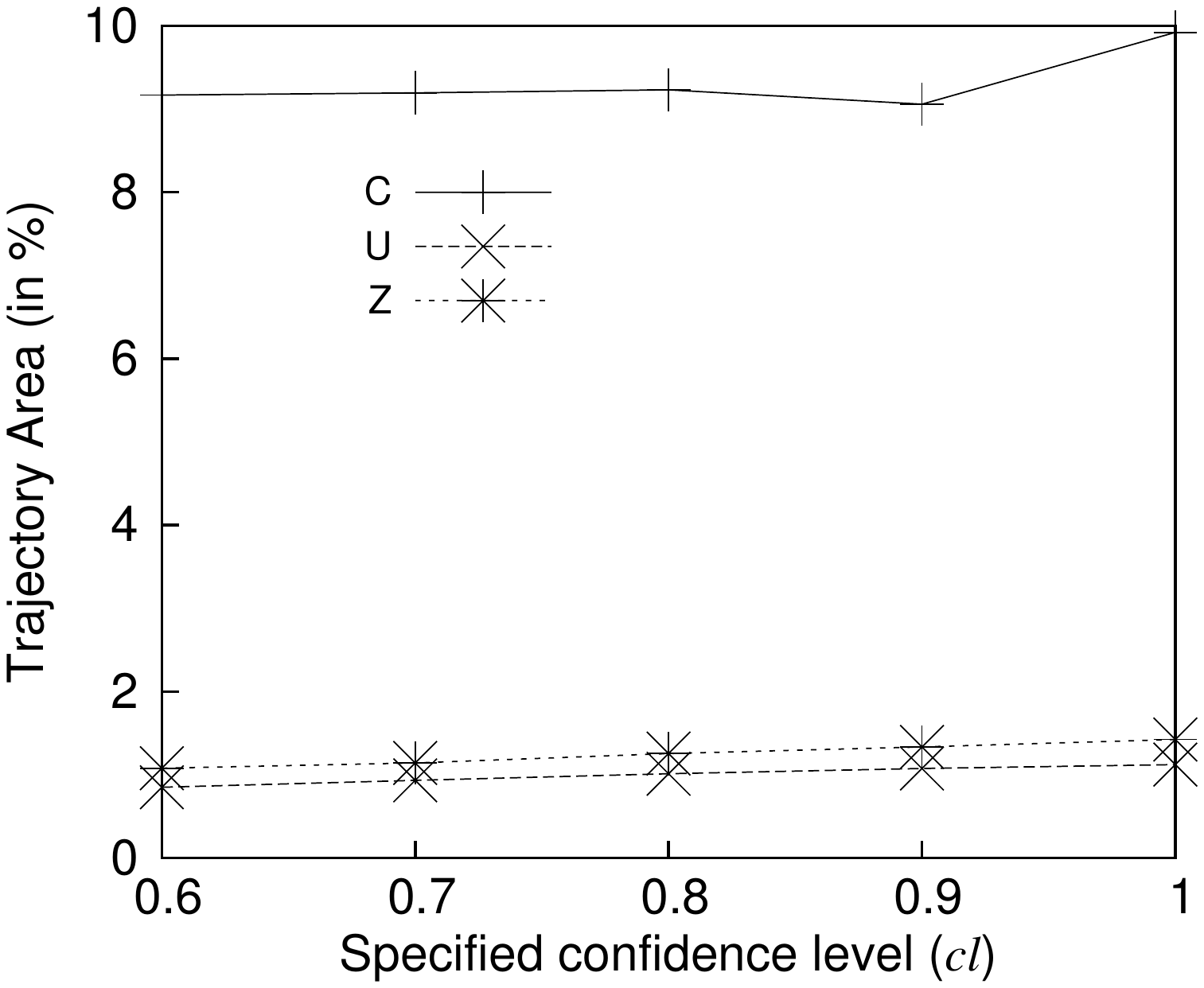}} &
        \hspace{-5mm}
      \resizebox{43mm}{!}{\includegraphics{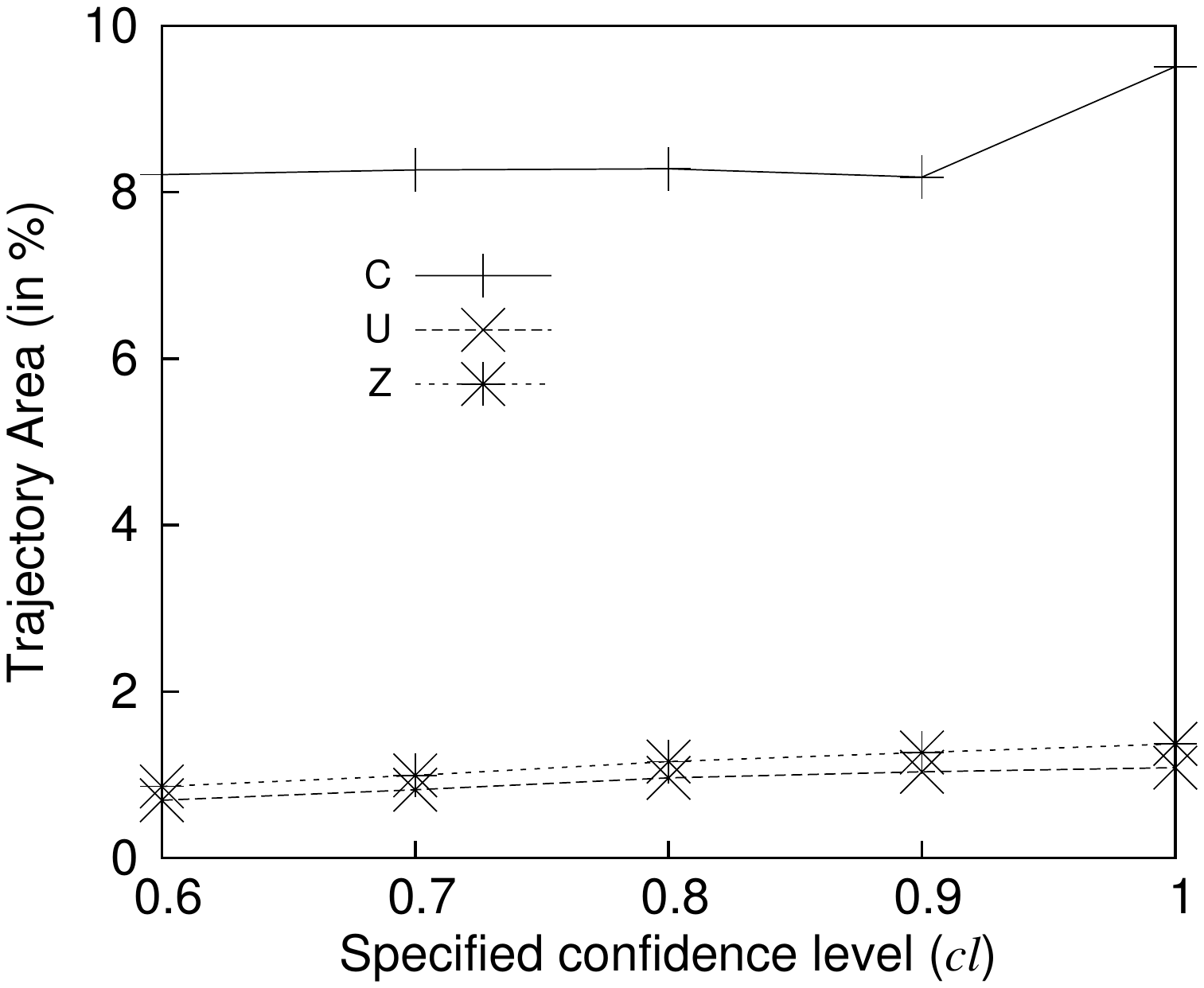}} \\
       \scriptsize{(a) Overlapping Rectangle Attack\hspace{0mm}} & \scriptsize{(b) Combined Attack} &
       \scriptsize{(c) Overlapping Rectangle Attack} & \scriptsize{(d) Combined Attack}
        \end{tabular}
    \caption{The effect of hiding the specified confidence level on the level of trajectory privacy}
    \label{fig:m_freq_cl1}
  \end{center}
\end{figure*}
Figures~\ref{fig:m_freq_cl2}(a)-(b) show that the frequency
increases with the increase of the required confidence level
$cl_{r}$ for a fixed specified confidence level $cl=1$. We observe
that the larger the difference between required and specified
confidence level, the higher the level of trajectory privacy in
terms of the frequency because the larger difference causes the
larger extension of $GCR(cl_r,k_r)$. On the other hand,
Figures~\ref{fig:m_freq_cl2}(c)-(d) show that the trajectory area
almost remain constant for different $cl_{r}$ as $cl$ remains
fixed.

Figure~\ref{fig:m_freq_cl1}(a)-(b)) shows that the frequency
decreases with the increase of the specified confidence level $cl$
for a fixed required confidence level $cl_r=0.5$. With the
increase of $cl$, for a fixed $cl_r$, the extension of
$GCR(cl_r,k_r)$ becomes larger and the level of trajectory privacy
in terms of frequency increases. On the other hand,
Figures~\ref{fig:m_freq_cl1}(c)-(d) show that the trajectory area
increases with the increase of $cl$, as expected.

We observe from Figures~\ref{fig:m_freq_cl2}
and~\ref{fig:m_freq_cl1} that the frequency is higher and the
trajectory area is smaller in case of the combined attack than
those for the case of the overlapping rectangle attack, which is
expected due to stricter constraints in the generation of
obfuscation rectangle in the combined attack than that of the
overlapping rectangle attack.


We also see that a user can achieve a high level of trajectory
privacy in terms of frequency by reducing the value of $cl_r$
slightly. For example, in case of the overlapping rectangle
attack, the average rate of decrease of frequency are 19\% and
10\% for reducing the $cl_{r}$ from 0.9 to 0.8 and from 0.6 to
0.5, respectively, for a fixed $cl=1$. In case of the combined
attack, the average rate of decrease of frequency are 23\% and
11\% for reducing the $cl_{r}$ from 0.9 to 0.8 and from 0.6 to
0.5, respectively, for a fixed $cl=1$. Since the trajectory area
almost remains constant for different $cl_{r}$, and we can
conclude that a user can achieve a high level of trajectory
privacy by sacrificing the accuracy of query answers slightly. On
the other hand, from Figures~\ref{fig:m_freq_cl1}, we can see that
the level of trajectory privacy in terms of both frequency and
trajectory area achieves maximum when the specified confidence
level is set to 1.

Note that the query processing overhead for a PM$k$NN query can be
approximated by multiplying the frequency for that query with the
query processing overhead of single obfuscation rectangle
(Section~\ref{sec:exp_static}).

\begin{figure*}[htbp]
  \begin{center}
    \begin{tabular}{ccccc}
        \hspace{-5mm}
      \resizebox{43mm}{!}{\includegraphics{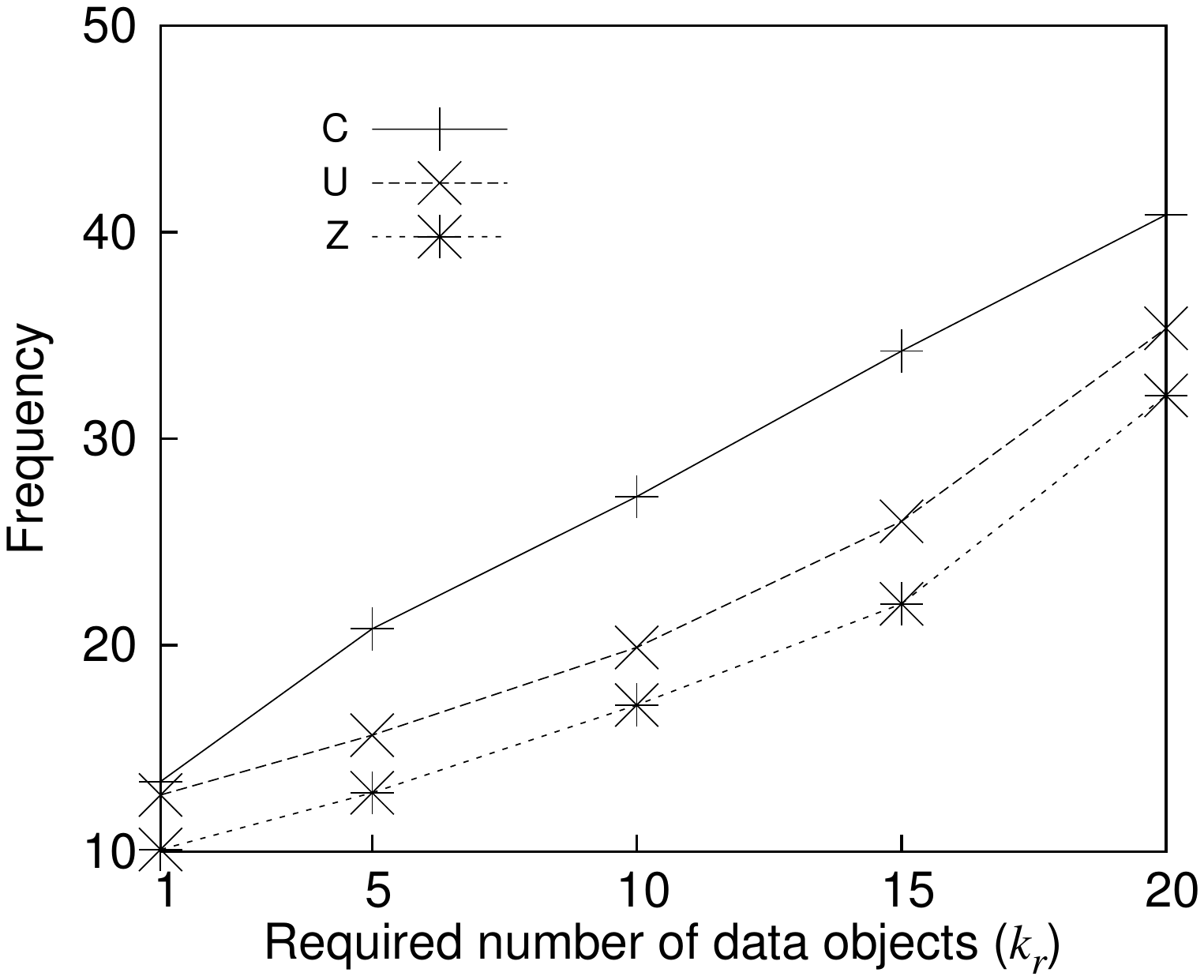}} &
      \hspace{-5mm}
      \resizebox{43mm}{!}{\includegraphics{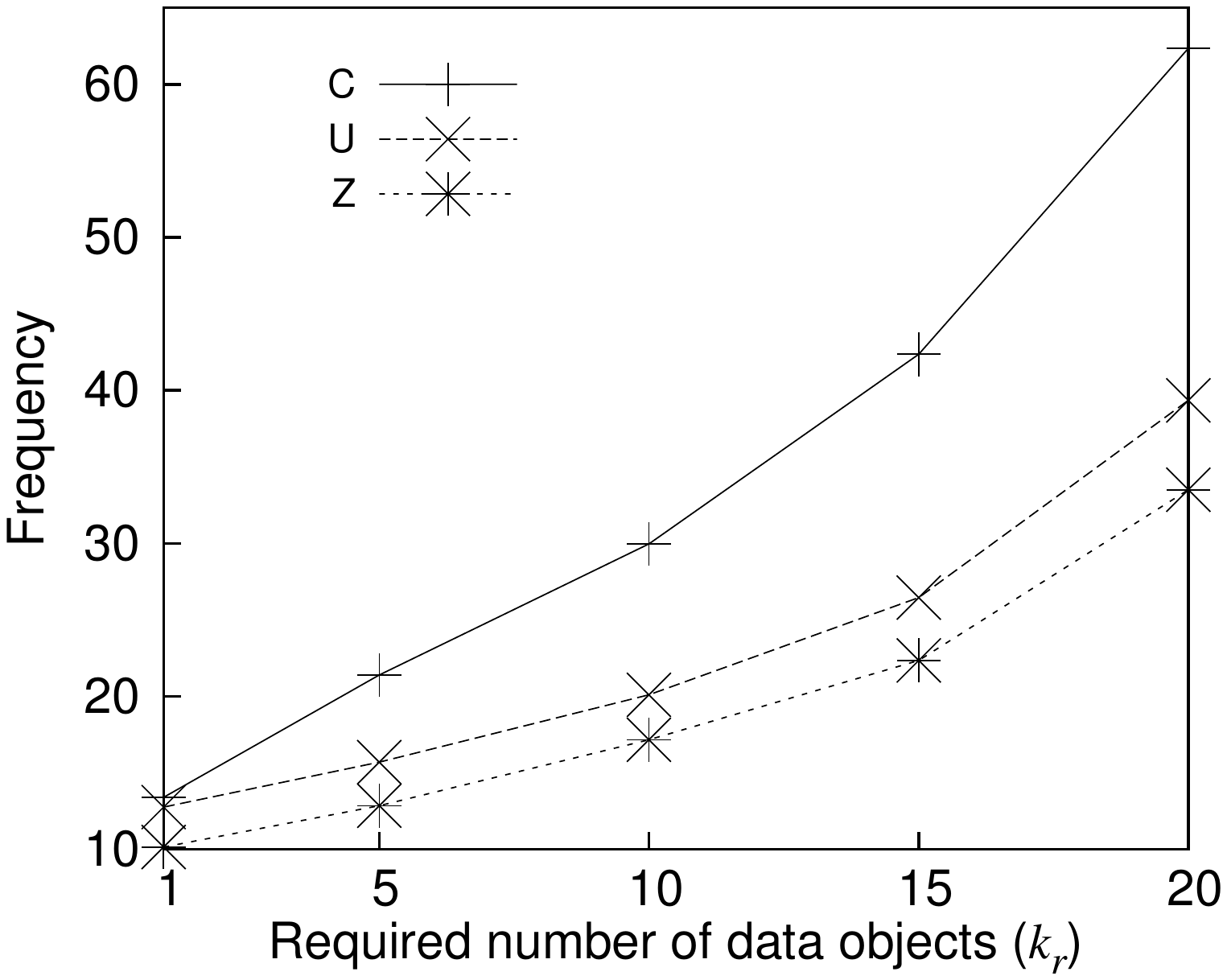}} &
      \hspace{-5mm}
      \resizebox{43mm}{!}{\includegraphics{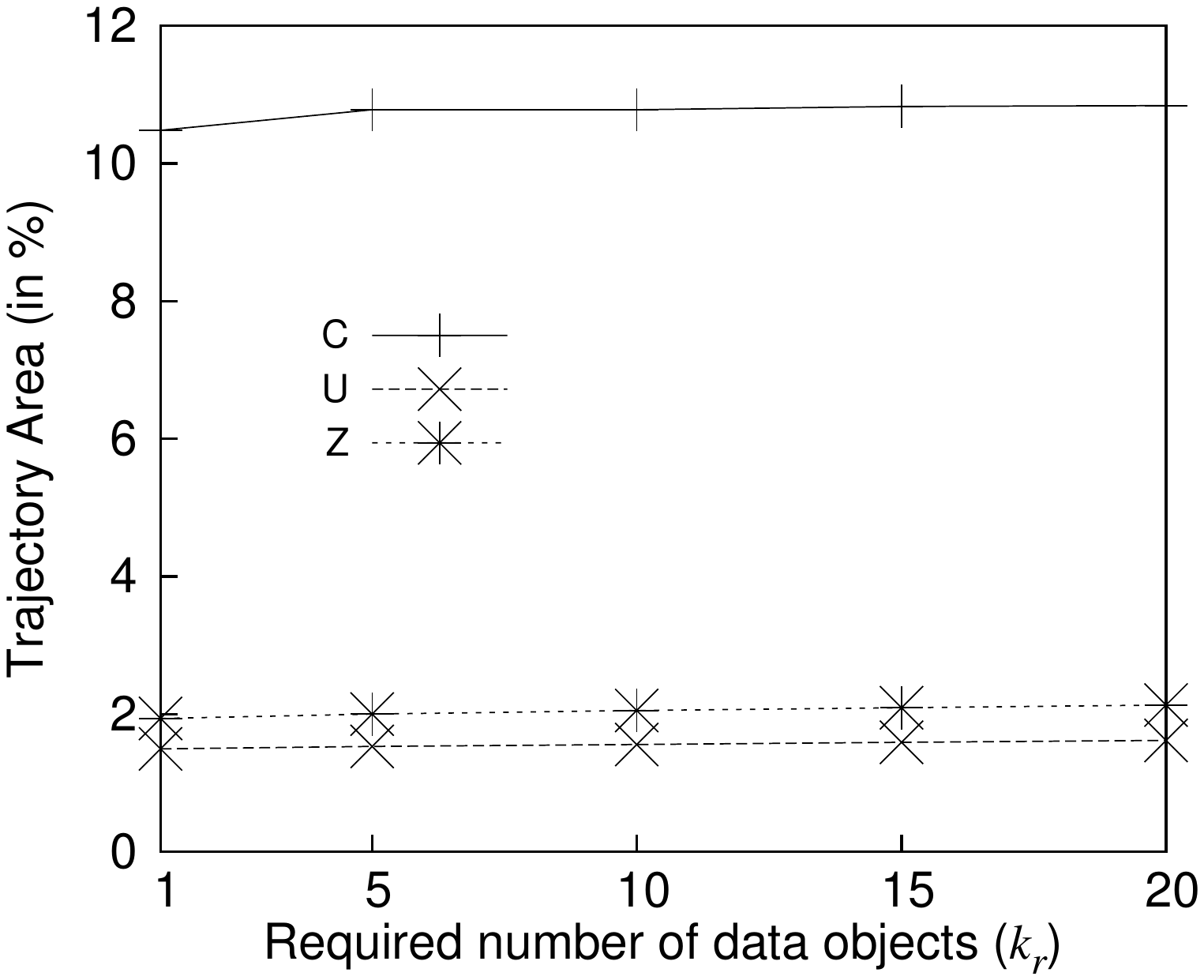}} &
        \hspace{-5mm}
      \resizebox{43mm}{!}{\includegraphics{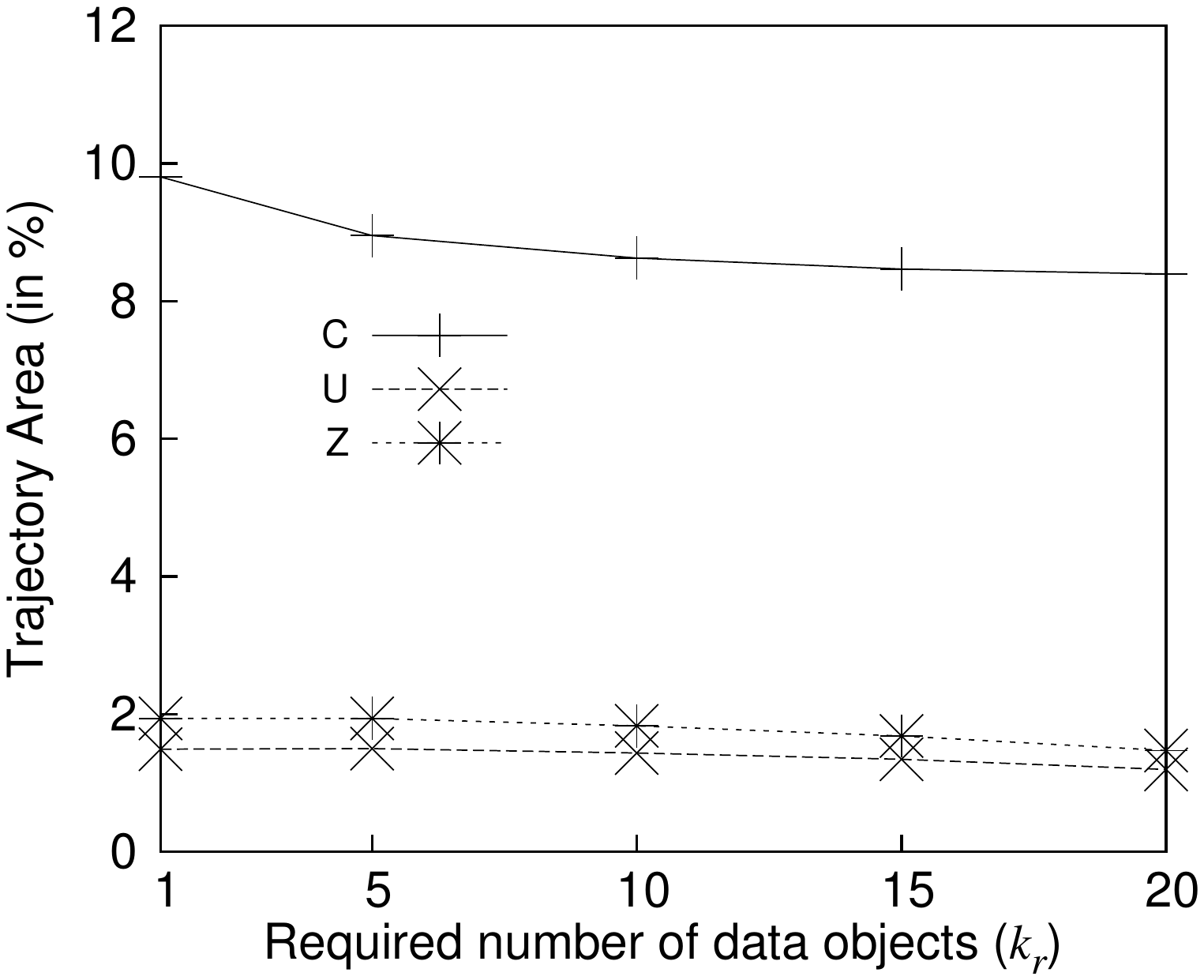}} \\
       \scriptsize{(a) Overlapping Rectangle Attack\hspace{0mm}} & \scriptsize{(b) Combined Attack} &
       \scriptsize{(c) Overlapping Rectangle Attack} & \scriptsize{(d) Combined Attack}
        \end{tabular}
    \caption{The effect of hiding the required number of NNs on the level of trajectory privacy}
    \label{fig:m_freq_k2}
  \end{center}
\end{figure*}

\begin{figure*}[htbp]
  \begin{center}
    \begin{tabular}{ccccc}
        \hspace{-5mm}
      \resizebox{43mm}{!}{\includegraphics{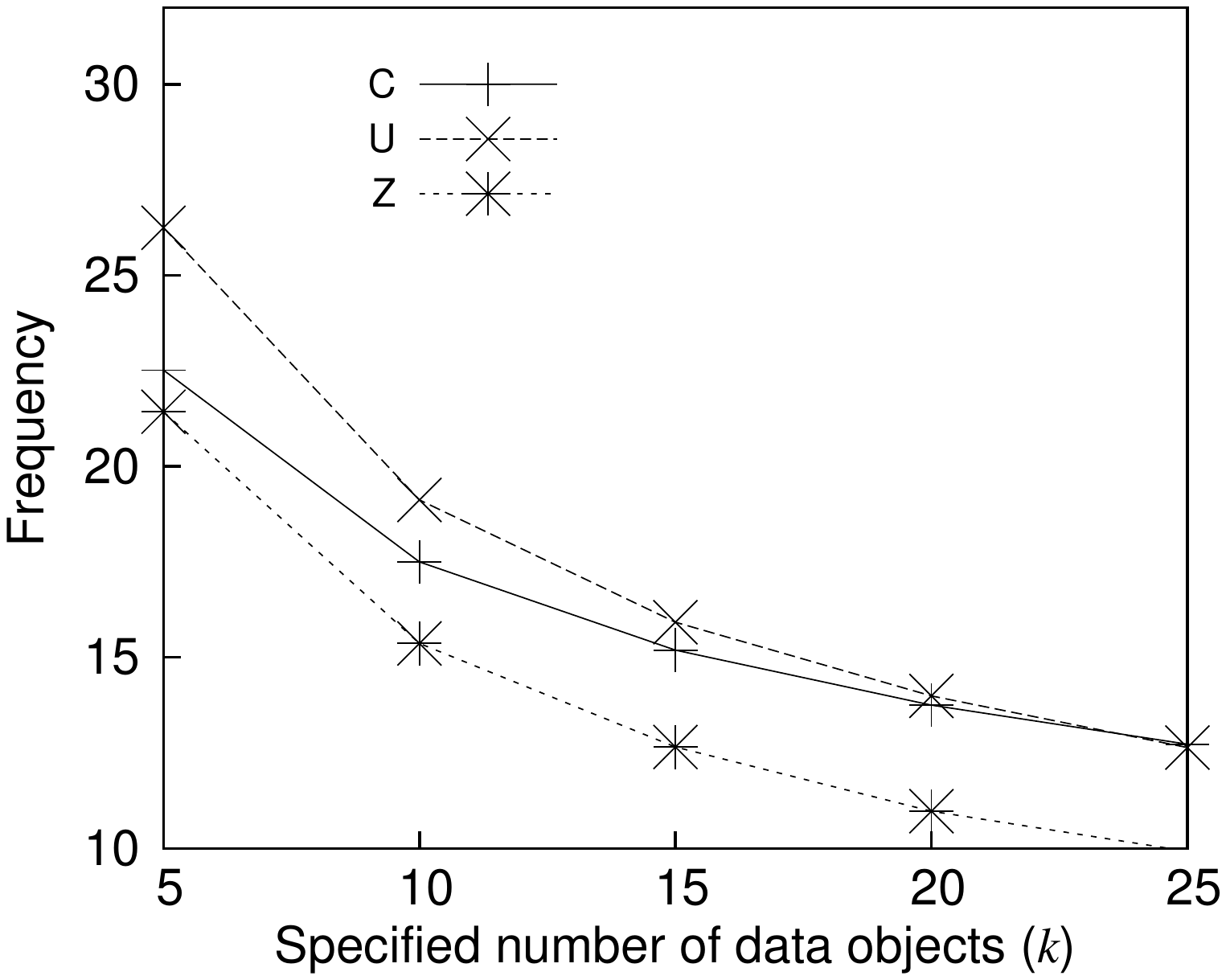}} &
      \hspace{-5mm}
      \resizebox{43mm}{!}{\includegraphics{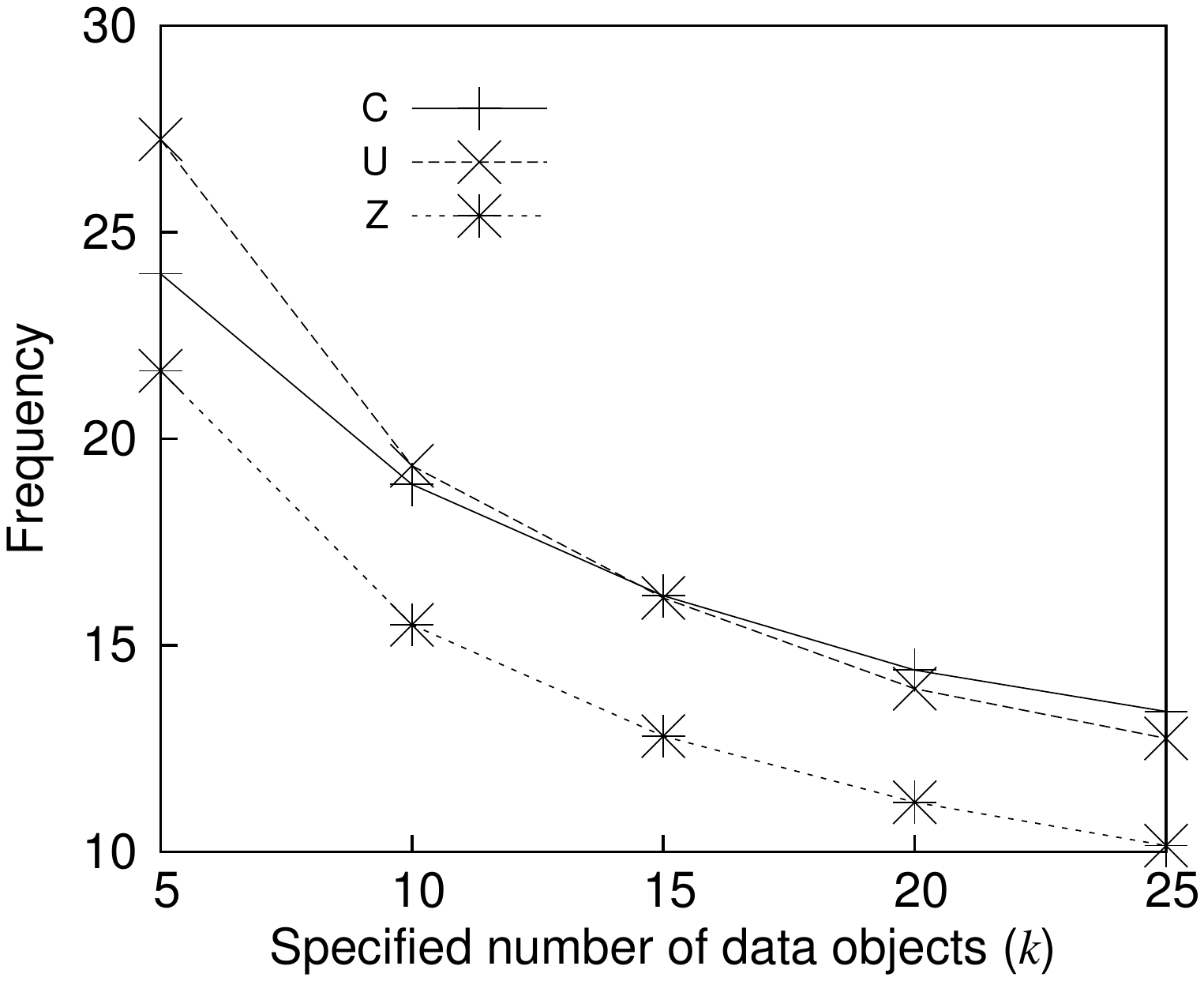}} &
      \hspace{-5mm}
      \resizebox{43mm}{!}{\includegraphics{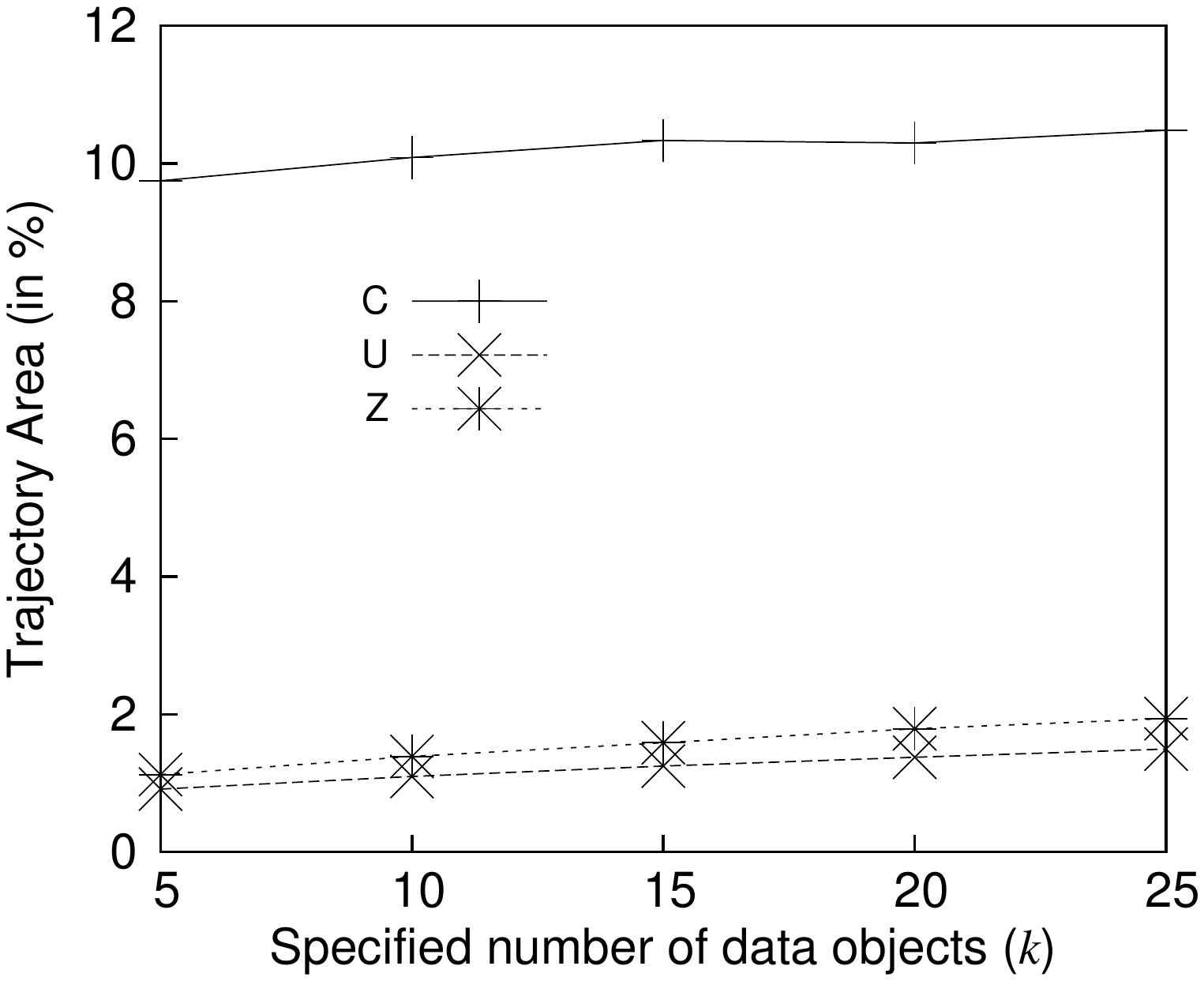}} &
        \hspace{-5mm}
      \resizebox{43mm}{!}{\includegraphics{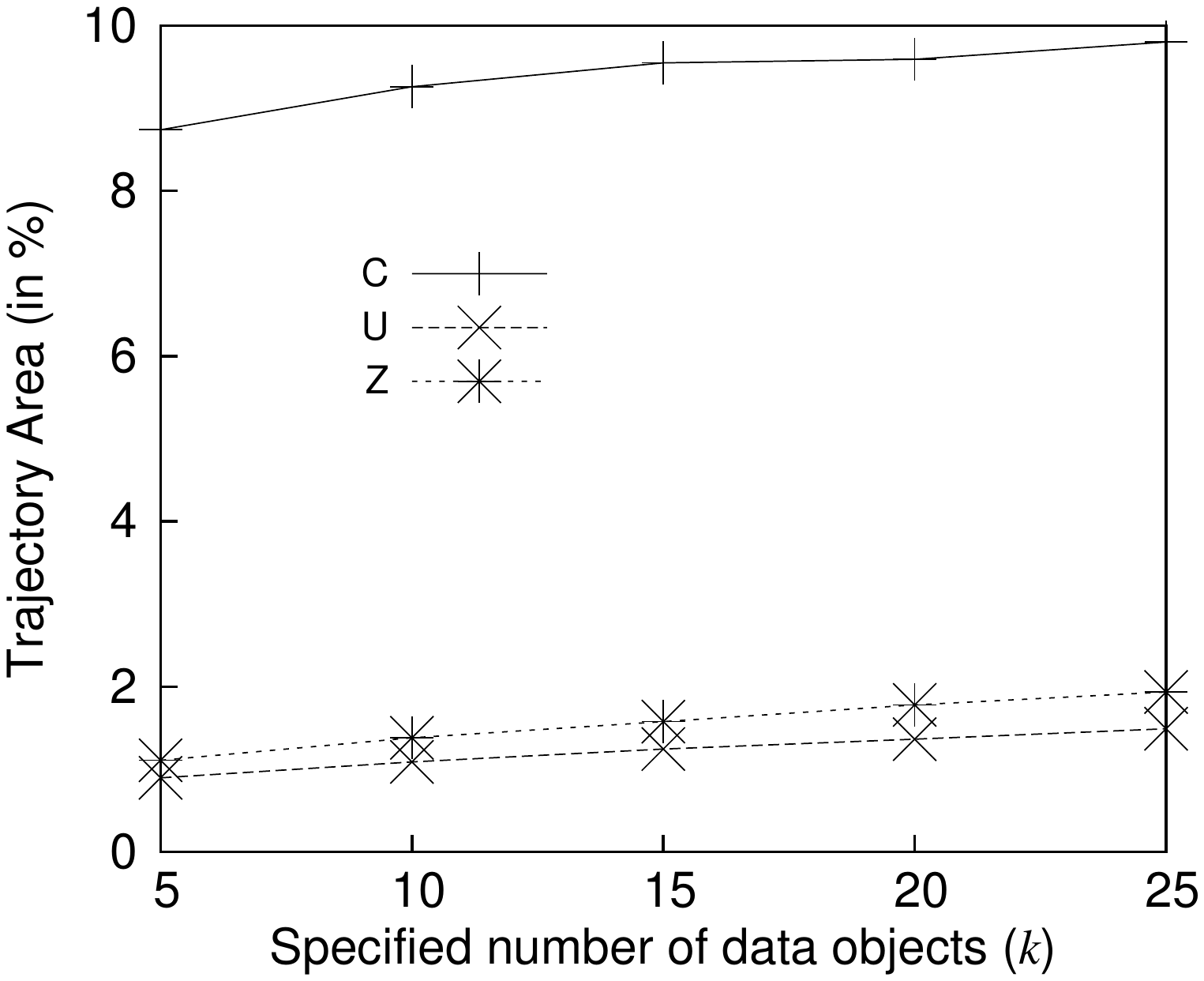}} \\
       \scriptsize{(a) Overlapping Rectangle Attack\hspace{0mm}} & \scriptsize{(b) Combined Attack} &
       \scriptsize{(c) Overlapping Rectangle Attack} & \scriptsize{(d) Combined Attack}
        \end{tabular}
    \caption{The effect of hiding the specified number of NNs on the level of trajectory privacy}
    \label{fig:m_freq_k1}
  \end{center}
\end{figure*}

\subsubsection{The effect of $k_r$ and $k$}
\label{sec:exp_cont_k}

In these experiments, we observe the effect of the required and
the specified number of nearest data objects on the level of
trajectory privacy. We vary the value of the required and the
specified number of nearest data objects from 1 to 20 and 5 to 25,
respectively.

Figures~\ref{fig:m_freq_k2}(a)-(b) show that the frequency
increases with the increase of the required number of nearest data
objects $k_{r}$ for a fixed specified number of nearest data
objects $k=25$. Similar to the case of confidence level, we find
that the larger the difference between required and specified
number of nearest data objects, the higher the level of trajectory
privacy in terms of frequency. On the other hand,
Figures~\ref{fig:m_freq_k2}(c)-(d) show that the trajectory area
almost remains constant for different $k_{r}$.

Figures~\ref{fig:m_freq_k1} show that the frequency decreases and
the trajectory area increases with the increase of $k$ for a fixed
$k_r=1$, which is expected as seen in case of confidence level.

Similar to confidence level, we also observe from
Figures~\ref{fig:m_freq_k2} and~\ref{fig:m_freq_k1} that the
frequency is higher and the trajectory area is smaller in case of
the combined attack than those for the case of the overlapping
rectangle attack.

In Figures~\ref{fig:m_freq_k1}, we also see that the rate of
increase of the level of trajectory privacy in terms of both
frequency and trajectory area decreases with the increase of $k$.
For example, the highest gain in the level of trajectory privacy
for both frequency and trajectory area is achieved when the value
of $k$ is increased from 5 to 10. Thus, we conclude that the value
of $k$ can be set to 10 to achieve a good level of trajectory
privacy for a fixed $k_r = 1$.

%
%

\begin{figure*}[htbp]
  \begin{center}
    \begin{tabular}{ccccc}
        \hspace{-5mm}
      \resizebox{43mm}{!}{\includegraphics{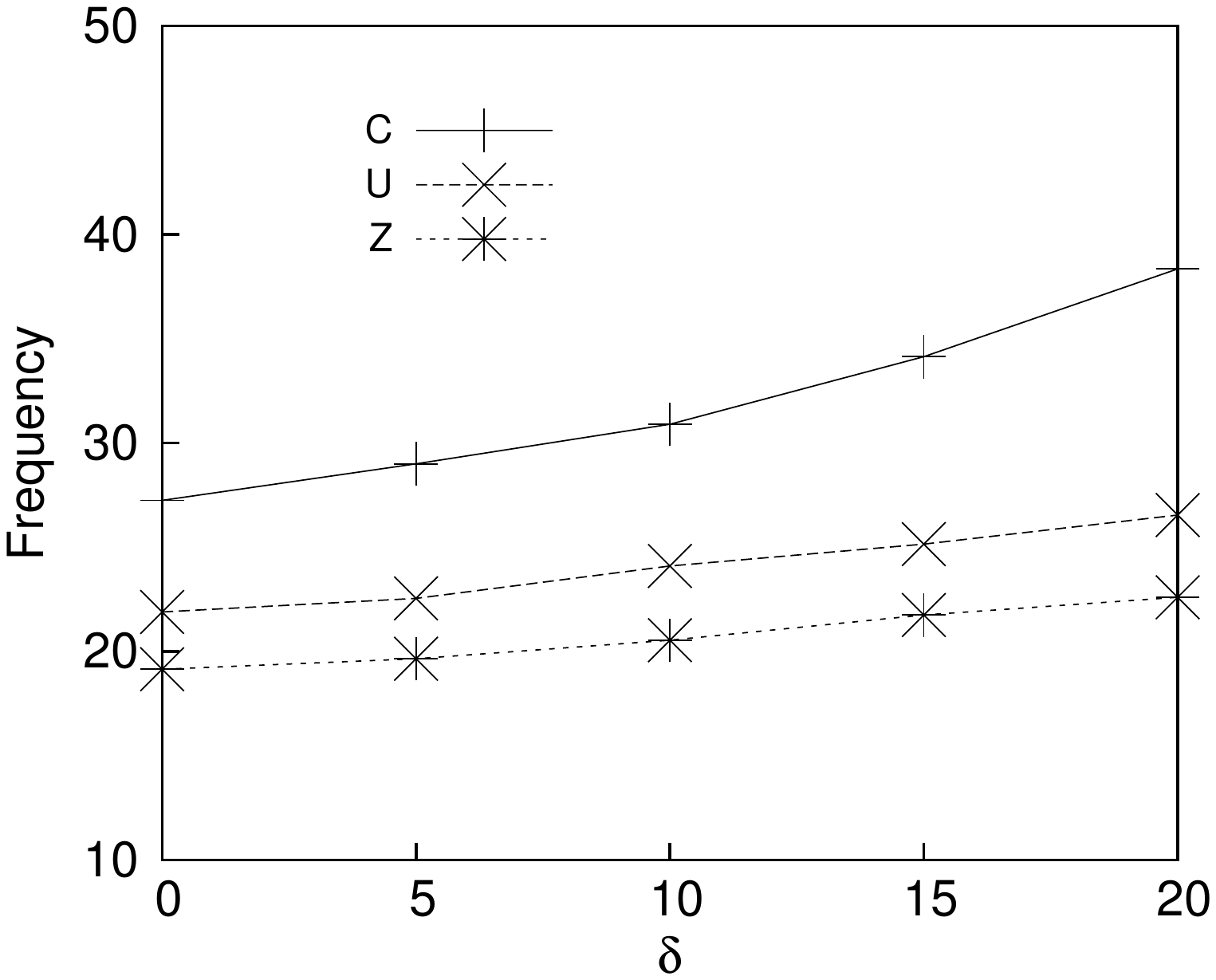}} &
      \hspace{-5mm}
      \resizebox{43mm}{!}{\includegraphics{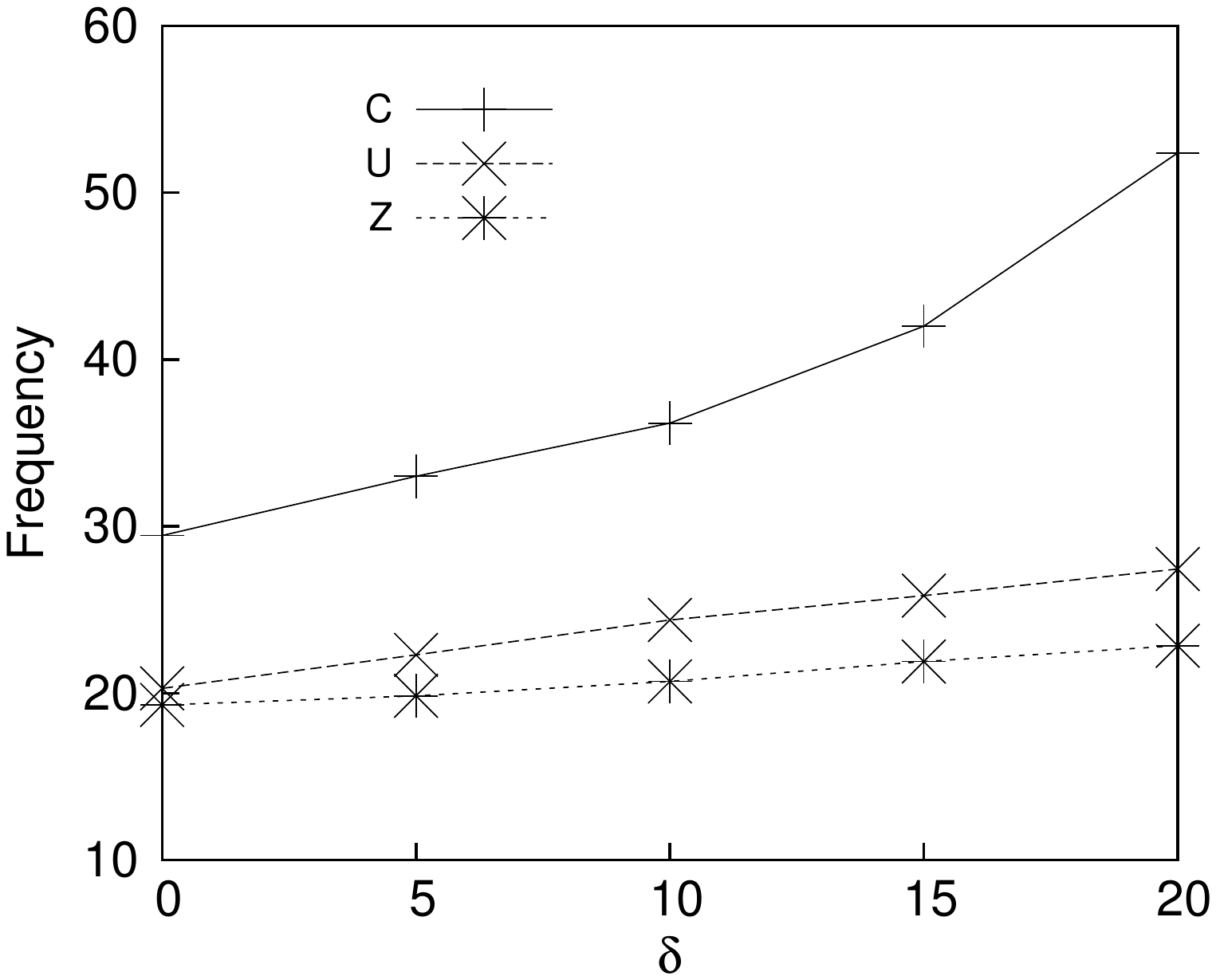}} &
      \hspace{-5mm}
      \resizebox{43mm}{!}{\includegraphics{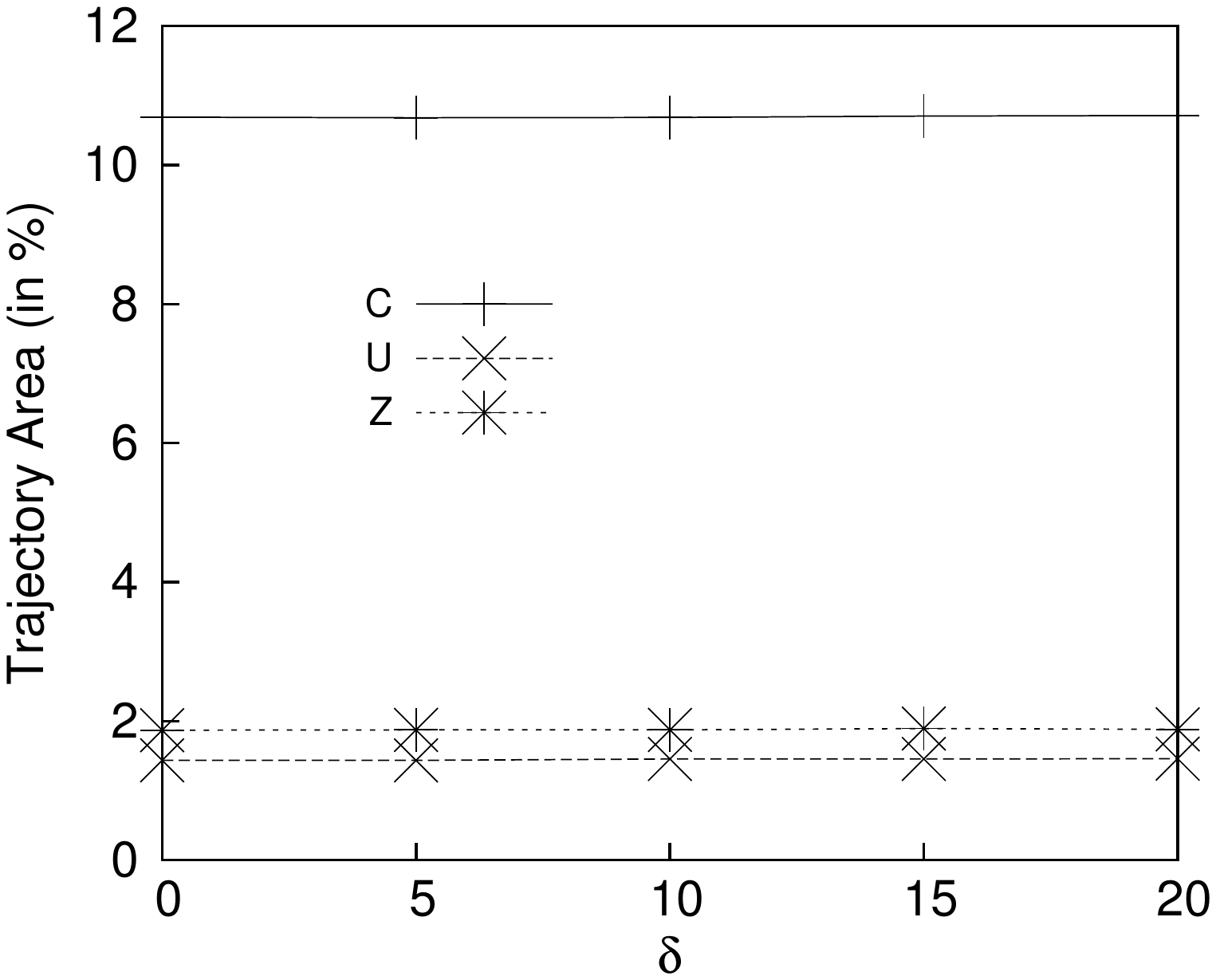}} &
        \hspace{-5mm}
      \resizebox{43mm}{!}{\includegraphics{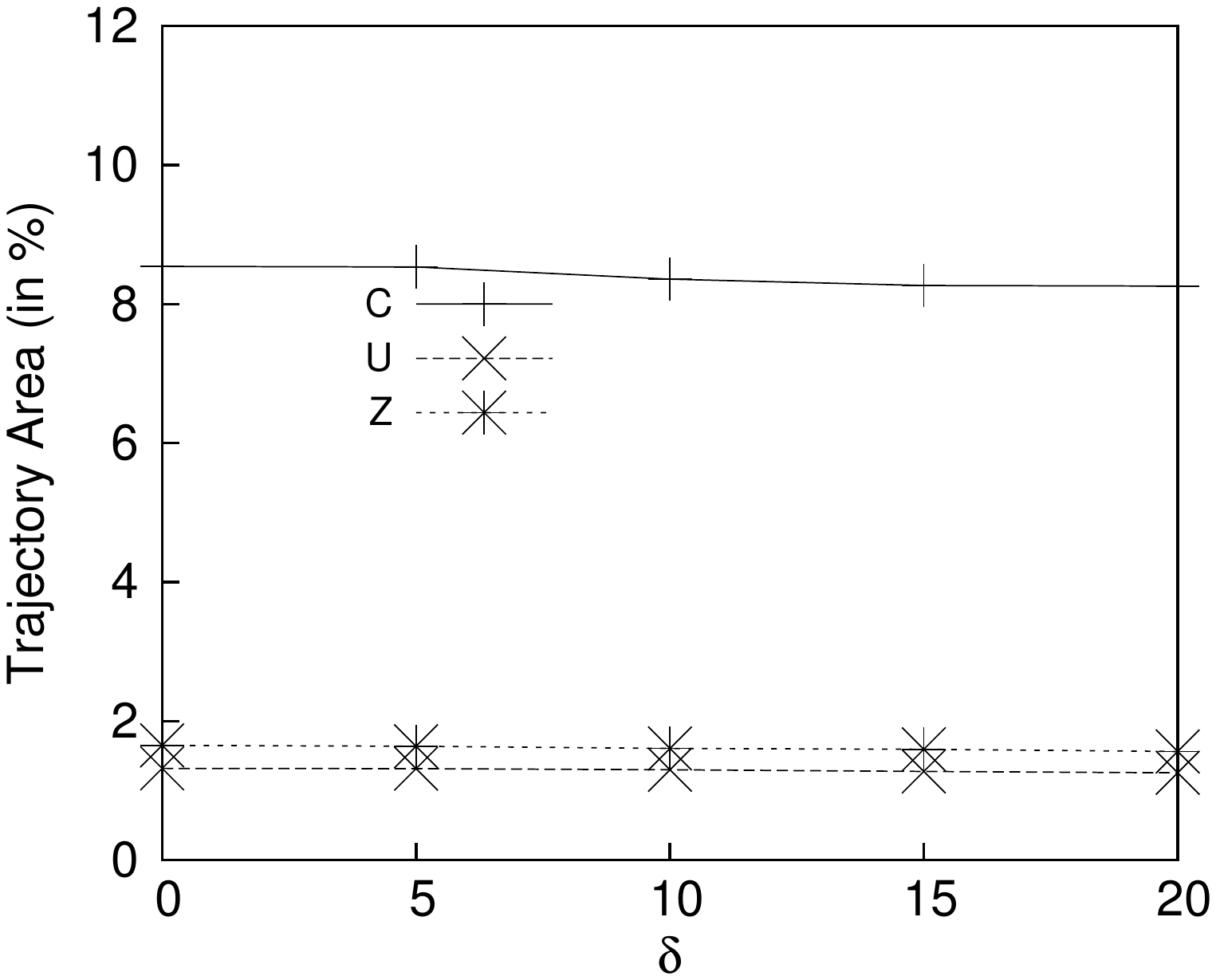}} \\
       \scriptsize{(a) Overlapping Rectangle Attack\hspace{0mm}} & \scriptsize{(b) Combined Attack} &
       \scriptsize{(c) Overlapping Rectangle Attack} & \scriptsize{(d) Combined Attack}
        \end{tabular}
    \caption{The effect of $\delta$ on the level of trajectory privacy}
    \label{fig:m_freq_d}
  \end{center}
\end{figure*}

\subsubsection{The effect of $\delta$}
\label{sec:exp_cont_delta}

We vary $\delta$ from 0 to 20 and find the effect of $\delta$ on
the level of trajectory privacy in terms of frequency and
trajectory area. Figures~\ref{fig:m_freq_d}(a)-(b) show that the
frequency increases with the increase of $\delta$ for both the
overlapping rectangle attack and the combined attack. On the other
hand, Figures~\ref{fig:m_freq_d}(c)-(d) show that the trajectory
area almost remains constant for different $\delta$.



\section{Conclusions}
\label{sec:conc}

We have developed the first approach to protect a user's trajectory privacy for M$k$NN queries. We have identified the overlapping
rectangle attack in an M$k$NN query and proposed a technique to
issue an M$k$NN query request (i.e., request $k$ NNs for
consecutive obfuscation rectangles) that overcomes this attack.
Our technique provides a user with three options: if a user does
not want to sacrifice the accuracy of answers then the user can
protect her privacy by specifying (i) a higher number of NNs than
required; otherwise, the user can specify (ii) a higher confidence
level than required or (iii) higher values for both confidence
level and the number of NNs. We have validated our trajectory
privacy protection technique with experiments and have found that
the larger the difference between the specified confidence level
(or the specified number of NNs) and the required confidence level
(or the required number of NNs), the higher the level of
trajectory privacy for M$k$NN queries. An additional advantage of
using a lower confidence level is reduced query processing cost.
We have also proposed an efficient algorithm, \textsc{Clappinq},
that evaluates the $k$ NNs for an obfuscation rectangle with a
specified confidence level, which is an essential component for
processing PM$k$NN queries. Experimental results have shown that
\textsc{Clappinq} is at least two times faster than Casper and
requires at least three times less I/Os.

In the future, we aim to extend our approach for the privacy of
data objects. For example, in a friend finder application, where
users wish to track their $k$-nearest friends continuously,
privacy is required for both the user issuing the query and the
data objects (i.e., friends). We also plan to integrate the
constraints of a road network while protecting trajectory privacy
for M$k$NN queries.

\bibliographystyle{acm}
\bibliography{vldb_tanzima}


\end{document}